\theoremstyle{plain}
\newtheorem{theorem}{Theorem}[section]
\theoremstyle{plain}
\newtheorem{lemma}{Lemma}[section]
\theoremstyle{plain}
\theoremstyle{plain}
\theoremstyle{plain}
\theoremstyle{plain}
\newtheorem{claim}{Claim}[section]
\theoremstyle{definition}
\newtheorem{definition}{Definition}[section]
\theoremstyle{definition}
\newtheorem{fact}{Fact}[section]
\theoremstyle{definition}
\newtheorem{heuristic}{Heuristic}
\theoremstyle{definition}
\newcommand{\E}{\mathbb{E}} 	%Expectation
\DeclareMathOperator{\real}{\mathbb{R}}
\DeclareMathOperator{\nat}{\mathbb{N}}
\newcommand{\ratn}{\mathbb{Q}}
\newcommand{\intg}{\mathbb{Z}}
\newcommand{\svp}{\textsf{SVP}}
\newcommand{\svpi}{\textsf{SVP}^{(\infty)}}
\newcommand{\cvp}{\textsf{CVP}}
\newcommand{\cvpi}{\textsf{CVP}^{(\infty)}}
\newcommand{\vect}[1]{\mathbf{#1}}
\newcommand{\fpar}{\mathscr{P}}
\newcommand{\lat}{\mathscr{L}}
\newcommand{\bd}{\text{bd}}
\newcommand{\poly}{\text{poly}}
\newcommand{\ballp}{B^{(p)}}
\newcommand{\balli}{B^{(\infty)}}
\newcommand{\Span}{\text{span}}
\newcommand{\minp}{\lambda^{(p)}}
\newcommand{\mini}{\lambda^{(\infty)}}
\newcommand{\bb}{\mathbf{b}}
\newcommand{\bB}{\mathbf{B}}
\newcommand{\bx}{\mathbf{x}}
\newcommand{\SVP}{\textsf{SVP}}
\newcommand{\CVP}{\textsf{CVP}}
\newcommand{\R}{\mathbb{R}}
\newcommand{\cL}{\mathcal{L}}
\newcommand{\Z}{\mathbb{Z}}
\newcommand{\eps}{\varepsilon}
\newcommand{\cA}{\mathcal{A}}
\begin{document}
\title{Improved algorithms for the Shortest Vector Problem and the Closest Vector Problem in the 
infinity norm  }
%\titlerunning{Improved algorithms for SVP and CVP in the infinity norm} 

\author{Divesh Aggarwal \thanks{Centre for Quantum Technologies and School of Computing, National University of Singapore, 
Singapore ({\tt dcsdiva@nus.edu.sg}).}
\and
Priyanka Mukhopadhyay \thanks{Centre for Quantum Technologies, National University of Singapore, Singapore 
({\tt mukhopadhyay.priyanka@gmail.com}).}
}

%\author{Author \\ Affiliation}
%\begin{document}
\maketitle

\begin{abstract}

Blomer and Naewe~\cite{2009_BN} modified the randomized sieving algorithm of Ajtai, Kumar and Sivakumar~\cite{2001_AKS} to 
solve the shortest vector problem ($\svp$). 
The algorithm starts with $N = 2^{O(n)}$ randomly chosen vectors in the lattice and employs a sieving procedure to iteratively
obtain shorter vectors in the lattice. 
The running time of the sieving procedure is quadratic in $N$. 

We study this problem for the special but important case of the $\ell_\infty$ norm. 
We give a new sieving procedure that runs in time linear in $N$, thereby significantly improving the running time of the 
algorithm for $\svp$ in the $\ell_\infty$ norm. 
As in~\cite{2002_AKS,2009_BN}, we also extend this algorithm to obtain significantly faster algorithms for approximate 
versions of the shortest vector problem and the closest vector problem ($\cvp$) in the $\ell_\infty$ norm.

We also show that the heuristic sieving algorithms of Nguyen and Vidick ~\cite{2008_NV} and Wang et al.
~\cite{2011_WLTB} can also be analyzed in the $\ell_{\infty}$ norm.
The main technical contribution in this part is to calculate the 
expected volume of intersection of a unit ball centred at origin and another ball of a different radius centred at a uniformly
random point on the boundary of the unit ball. This might be of independent interest. 

 \end{abstract}
%------------------------------------------------------------------------------------------------------------------
%	INTRO
%-----------------------------------------------------------------------------------------------------------------------
\section{Introduction}

A lattice $\cL$ is the set of all integer combinations of linearly independent vectors $\bb_1,\dots,\bb_n \in \R^d$, 
\[
\cL = \cL(\bb_1, \ldots, \bb_n) := \{\sum_{i=1}^n z_i \bb_i : z_i \in \Z\} \;.
\]

 We call $n$ the rank of the lattice, and $d$ the dimension of the lattice. 
 The matrix $\bB=(\bb_1,\dots,\bb_n)$ is called a basis of $\cL$, and we write $\cL(\bB)$ for the lattice generated by $\bB$. 
 A lattice is said to be full-rank if $n=d$. 
 In this work, we will only consider full-rank lattices unless otherwise stated. 

The two most important computational problems on lattices are the shortest vector problem ($\svp$) and the closest vector 
problem ($\cvp$). 
Given a basis for a lattice $\cL \subseteq \R^d$,
$\SVP$ asks us to compute a non-zero vector in $\cL$ of minimal length, and $\cvp$ asks us to compute a lattice vector at a 
minimum distance to a target vector $\vect{t}$. 
Typically the length/distance is defined in terms of the $\ell_p$ norm for some $p \in [1, \infty]$, such that 
\[
\| \bx\|_p := (|x_1|^p + |x_2|^p + \cdots + |x_d|^p) \;\; \text{ for } \;\; 1 \le p < \infty \;,
\quad
\text{and} 
\quad
\|\bx\|_\infty := \max_{1 \le i \le d} |x_i| \;.
\]

The most popular of these, and the most well studied is the Euclidean norm, which corresponds to $p =2$. 
Starting with the seminal work of~\cite{1982_LLL}, algorithms for solving these problems either exactly or approximately have 
been studied intensely. 
Some classic applications of these algorithms are in factoring polynomials over rationals~\cite{1982_LLL}, 
integer programming~\cite{1983_L}, cryptanalysis~\cite{2001_NS}, checking the solvability
by radicals~\cite{1983_LM}, and solving low-density subset-sum problems~\cite{1992_CJLOSS}. 
More recently, many powerful cryptographic primitives have been constructed whose security is based on the {\em worst-case} 
hardness of these or related lattice problems(see for example~\cite{Pei16} and the references therein). 

One recent application that is based on the hardness of $\svp$ in the $\ell_\infty$ norm is a recent signature scheme by Ducas et al. \cite{2017_DLLSSS}.
For the security of their signature scheme, the authors choose parameters under the assumption that $\svp$ in the $\ell_\infty$ 
norm for an appropriate dimension is infeasible. 
Due to lack of sufficient work on the complexity analysis of $\svp$ in the $\ell_\infty$ norm, they choose parameters based 
on the best known algorithms for $\svp$ in the $\ell_2$ norm (which are variants of the algorithm from~\cite{2008_NV}). 
The rationale for this is that $\svp$ in $\ell_\infty$ norm is likely harder than in the $\ell_2$ norm. 
Our results in this paper show that this assumption by Ducas et al.~\cite{2017_DLLSSS} is correct, and perhaps too generous. 
In particular, we show that the space and time complexity of the $\ell_\infty$ version of~\cite{2008_NV} is at least 
$(4/3)^n$ and $(4/3)^{2n}$ respectively, which is significantly larger than the best known algorithms for $\svp$ in the $\ell_2$ norm.

The closest vector problem in the $\ell_\infty$ norm is particularly important since it is equivalent to the integer 
programming problem~\cite{2011_EHN}. 
The focus of this work is to study the complexity of the closest vector problem and the shortest vector problem in the 
$\ell_\infty$ norm.

%Given the importance of these problems, their complexity is quite well studied. 

\subsection{Prior Work.}
\subsubsection{Algorithms in the Euclidean Norm.} The fastest known algorithms for solving these problems run in time 
$2^{c n}$, where $n$ is  the rank of the lattice and $c$ is some constant. 
The first algorithm to solve $\SVP$ in time exponential in the dimension of the lattice was given by  Ajtai, Kumar, and
Sivakumar~\cite{2001_AKS} who devised a method based on ``randomized sieving,'' whereby exponentially many randomly generated 
lattice vectors are iteratively combined to create shorter and shorter vectors, eventually resulting in the shortest vector 
in the lattice. 
Subsequent work has resulted in improvement of their sieving 
technique thereby improving the constant $c$ in the exponent, and  	
the current fastest provable algorithm for exact SVP runs in time $2^{n+o(n)}$~\cite{2015_ADRS,2017_AS}, and the 
fastest algorithm that gives a constant approximation runs in time $2^{0.802 n + o(n)}$~\cite{2011_LWXZ}. 
The fastest heuristic algorithm that is conjectured to solve SVP in practice runs in time $(3/2)^{n/2}$~\cite{2016_BDGL}. %BDGL16,Laa15

The $\cvp$ is considered a harder problem than $\svp$ since there is a simple dimension and approximation-factor preserving 
reduction from $\svp$ to $\cvp$~\cite{1999_GMSS}. 
Based on a technique due to Kannan~\cite{1987_K}, Ajtai, Kumar, and Sivakumar~\cite{2002_AKS} gave a sieving based algorithm that 
gives a $1+\alpha$ approximation of $\CVP$ in time $(2+1/\alpha)^{O(n)}$. 
Later exact exponential time algorithms for CVP were discovered~\cite{2013_MV,2015_ADS}. 
The current fastest algorithm for $\CVP$ runs in time $2^{n+o(n)}$ and is due to~\cite{2015_ADS}.

\subsubsection{Algorithms in Other $\ell_p$ Norms.} Blomer and Naewe~\cite{2009_BN}, and then Arvind and Joglekar~\cite{2008_AJ} generalised the AKS algorithm
~\cite{2001_AKS} to give exact algorithms for $\SVP$ that run in time $2^{O(n)}$. 
%They in fact gave exact algorithm for a more general problem called the subspace avoidance problem $\sap$. 
%In particular, \cite{2009_BN} showed that several lattice problems, in particular the (approximate) $\SVP$ and the  (approximate) $\cvp$, are easily reducible to (approximate) $\sap$. 
Additionally,~\cite{2009_BN} gave a $1+\eps$ approximation algorithm for $\cvp$ for all $\ell_p$ norms that runs in time 
$(2+1/\eps)^{O(n)}$. 
For the special case when $p = \infty$, Eisenbrand et al.~\cite{2011_EHN} gave a $2^{O(n)} \cdot (\log (1/\eps))^n$ algorithm 
for $(1+\eps)$-approx CVP. 

\subsubsection{Hardness Results.} The first NP hardness result for $\CVP$ in all $\ell_p$ norms and $\SVP$ in the 
$\ell_\infty$ norm was given by Van Emde Boas~\cite{1981_vE}. 
Subsequently, it was shown that approximating $\CVP$ up to a factor of 
$n^{c/\log \log n}$ in any $\ell_p$ norm is NP hard~\cite{2003_DKRS}. 
Also, hardness of $\SVP$ with similar approximating factor have been obtained under plausible but stronger complexity 
assumptions~\cite{2012_HR}. 
Recently,~\cite{2017_BGS} showed that for almost all $p \ge 1$, $\CVP$ in the $\ell_p$ norm cannot be solved in 
$2^{n (1-\eps)}$ time under the strong exponential time hypothesis. 
A similar hardness result has also been obtained for $\SVP$ in the $\ell_\infty$ norm. 

\subsection{Our contribution.}

\subsubsection{Provable Algorithms.}

We modify the sieving algorithm by~\cite{2001_AKS,2002_AKS} for $\SVP$ and approximate $\CVP$ for the $\ell_\infty$ norm that 
results in substantial improvement over prior results. 
Before describing our idea, we give an informal description of the sieving procedure of~\cite{2001_AKS,2002_AKS}. 
The algorithm starts by randomly generating a set $S$ of $N = 2^{O(n)}$ lattice vectors of length at most $R = 2^{O(n)}$.
It then runs a sieving procedure a polynomial number of times. 
In the $i^{th}$ iteration the algorithm starts with a list $S$ of lattice vectors of length at most $R_{i-1} \approx \gamma^{i-1} R$, for some parameter $\gamma \in (0,1)$. The algorithm maintains and updates a list of ``centres'' $C$, which is initialised to be the 
empty set.
%(The second vector in each centre pair is usually referred to as ``centre''.
Then for each lattice vector $\vect{y}$ in the list, the algorithm checks whether there is a centre $\vect{c}$ at distance at most $\gamma \cdot R_{i-1}$ from this vector. 
If there exists such a centre pair, then the vector $\vect{y}$ is replaced in the list by $\vect{y} - \vect{c}$, and otherwise it is deleted from $S$ and added to $C$. 
This results in $N_{i-1} - |C|$ lattice vectors which are of length at most $R_i \approx \gamma R_{i-1}$, where
$N_{i-1}$ is the number of lattice vectors at the end of $i-1$ sieving iterations. 
We would like to mention here that this description hides many details and in particular, in order to show that this algorithm
succeeds eventually obtaining the shortest vector, we need to add a little perturbation to the lattice vectors to start with. The details can be found in Section~\ref{sec:svpi}.

A crucial step in this algorithm is to find a vector $\vect{c}$ from the list of centers that is close to $\vect{y}$.  
This problem is called the nearest neighbor search (NNS) problem and has been well studied especially in the context of 
heuristic algorithms for $\svp$ (see~\cite{2016_BDGL} and the references therein). 
A trivial bound on the running time for this is $|S| \cdot |C|$, but the aforementioned heuristic algorithms have spent 
considerable effort trying to improve this bound under reasonable heuristic assumptions. 
Since they require heuristic assumptions, such improved algorithms for the NNS have not been used to improve the provable 
algorithms for $\svp$. 

We make a simple but powerful observation that for the special case of the $\ell_\infty$ norm, if we partition the ambient 
space $[-R, R]^n$ into $([-R, -R + \gamma \cdot R), [-R + \gamma \cdot R, -R+2\gamma \cdot R), \ldots [-R + \lfloor \frac{2}{\gamma} \rfloor \cdot \gamma \cdot R, R])^n$, then it is easy to see that each such 
partition will contain at most one centre. 
Thus, to find a centre at $\ell_\infty$ distance $\gamma \cdot R$ from a given vector $\vect{y}$, we only need to find the partition in which 
$\vect{y}$ belongs, and then check whether this partition contains a centre. 
This can be easily done by checking the interval in which each co-ordinate of $\vect{y}$ belongs. 
This drastically improves the running time for the sieving procedure in the $\svp$ algorithm from 
$|S| \cdot |C|$ to $|S| \cdot n$. Notice that we cannot expect to improve the time complexity beyond $O(|S|)$.

This same idea can also be used to obtain significantly faster approximation algorithms for both $\svp$ and $\cvp$. 
It must be noted here that the prior provable algorithms using AKS sieve lacked an explicit value of the constant in the 
exponent for both space and time complexity and they used a quadratic sieve.
Our modified sieving procedure is linear in the size of the input list and thus yields a faster algorithm
compared to the prior algorithms.
In order to get the best possible running time, we optimize several steps specialized to the case of $\ell_\infty$ norm in 
the analysis of the algorithms. See Theorems~\ref{thm:multI_bday},~\ref{thm:multI-approx}, and~\ref{thm:multI-approx-cvp} for explicit running times and 
a detailed description.

Just to emphasise that our results are nearly the best possible using these techniques, notice that for a large enough 
constant $\tau$, we obtain a running time (and space) close to $3^n$ for $\tau$-approximate $\svp$. 
To put things in context, the best algorithm~\cite{2015_WLW} for a constant approximate $\SVP$ in the $\ell_2$ norm runs in time 
$2^{0.802 n}$ and space $2^{0.401n}$. 
Their algorithm crucially uses the fact that $2^{0.401n}$ is the best known upper bound for the kissing number of the lattice 
(which is the number of shortest vectors in the lattice) in $\ell_2$ norm. 
However, for the $\ell_\infty$ norm, the kissing number is $3^n$ for $\Z^n$. 
So, if we would analyze the algorithm from~\cite{2015_WLW} for the $\ell_\infty$ norm (without our improvement), 
we would obtain a space complexity $3^n$, but time complexity $9^n$.

\subsubsection{Heuristic Algorithms.} 

In each sieving step of the algorithm from~\cite{2001_AKS}, the length of the lattice vectors reduce by a 
constant factor. 
It seems like if we continue to reduce the length of the lattice vectors until we get vectors of length $\lambda_1$ 
(where $\lambda_1$ is the length of the shortest vector), we should obtain the shortest vector during the sieving procedure. 
However, there is a risk that all vectors output by this sieving procedure are copies of the zero vector and this is the 
reason that the AKS algorithm~\cite{2001_AKS} needs to start with much more vectors in order to provably argue that we obtain 
the shortest vector. 

Nguyen and Vidick~\cite{2008_NV} observed that this view is perhaps too pessimistic in practice, and that the randomness in 
the initial set of vectors should ensure that the basic sieving procedure should output the shortest vector for most 
 lattices, and in particular if the lattice is chosen randomly as is the case in cryptographic applications. 
The main ingredient to analyze the space and time complexity of their algorithm is to compute the expected number of centres 
necessary so that any point in $S$ of length at most $R_{i-1}$ is at a distance of at most $\gamma \cdot R_{i-1}$ from one of the centres. 
%(Note that in this heuristic setting, unlike the AKS algorithm, $S$ stores lattice vectors instead of vector pairs.)
This number is roughly the reciprocal of the fraction of the ball $B$ of radius $R_{i-1}$ centred at the origin covered by a 
ball of radius $\gamma \cdot R_{i-1}$ centred at a uniformly random point in $B$. 
Here $R_{i-1}$ is the maximum length of a lattice vector in $S$ after $i-1$ sieving iterations.

In this work, we show that the heuristic algorithm of~\cite{2008_NV} can also be analyzed for the $\ell_\infty$ norm under 
similar assumptions. 
The main technical contribution in order to analyze the time and space complexity of this algorithm is to compute the expected
fraction of an $\ell_\infty$ ball $\balli$ of radius $R_{i-1}$ centered at the origin covered by an $\ell_\infty$ ball of 
radius $\gamma \cdot R_{i-1}$ centered at a uniformly random point in $\balli$.

In order to improve the running time of the NV sieve~\cite{2008_NV}, a modified two-level sieve was introduced by Wang et al.
~\cite{2011_WLTB}. 
Here they first partition the lattice into sets of vectors of larger norm and then within each set they carry out a sieving
procedure similar to~\cite{2008_NV}.
We have analyzed this in the $\ell_{\infty}$ norm and obtain algorithms much faster than the provable algorithms. 
In particular, our two-level sieve algorithm runs in time $2^{0.62 n}$. 
We would like to mention here that our result does not contradict the near $2^n$ lower bound for $\svp$ obtained 
by~\cite{2017_BGS} under the strong exponential time hypothesis. 
The reason for this is that the lattice obtained in the reduction in~\cite{2017_BGS} is not a full-rank lattice, and has a 
dimension significantly larger than the rank $n$ of the lattice. Moreover, as mentioned earlier, the heuristic algorithm is expected to work for a random looking lattice but might not work for {\em all} lattices. 

\subsection{Organization of the paper} In Section \ref{sec:prelim} we give some basic definitions and results used in this 
paper.
In Section \ref{sec:svpi} we introduce our sieving procedure and apply it to provably solve exact $\svpi$.
In Section \ref{sec:cvpi} we describe approximate algorithms for $\svpi$ and $\cvpi$ using our sieving technique.
%Some results which are used in the analysis have been given in Section \ref{sec:ballVol}, but the reader can look at these  when referenced.
In Section \ref{sec:svpi_heuristic} we talk about heuristic sieving algorithms for $\svpi$.

%---------------------------------------------------------------------------------------------	
%	PRELIM
%-------------------------------------------------------------------------------------------------
\section{Preliminaries}
\label{sec:prelim}

\subsection{Notations}
We write $\ln$ for natural logarithm and $\log$ for logarithm to the base $2$.

The dimension may vary and will be specified.
We use bold lower case letters (e.g. $\vect{v}^n$) for vectors and bold upper case letters for matrices 
(e.g. $\vect{M}^{m\times n}$).
We may drop the dimension in the superscript whenever it is clear from the context.
Sometimes we represent a matrix as a vector of column (vectors) 
(e.g., $\vect{M}^{m\times n} = [\vect{m}_1 \vect{m}_2 \ldots \vect{m}_n] $ where each $\vect{m}_i$ is an $m-$length vector).
The $i^{th}$ co-ordinate of $\vect{v}$ is denoted by $v_i$ or $(\vect{v})_i$.
Given a vector $\vect{x} = \sum_{i=1}^n x_i \vect{m}_i$ with $x_i \in \ratn$, the representation size of $\vect{x}$ with
respect to $\vect{M}$ is the maximum of $n$ and the binary lengths of the numerators and denominators of the 
coefficients $x_i$.
For any set of vectors $\vect{S} = \{ \vect{s}_1, \ldots \vect{s}_n \}$ and a (well-defined) norm,
let $\|\vect{S}\| = \max_{i=1}^n \|\vect{s}_i\|$.
$|A|$ denotes volume of A if it is a geometric body and cardinality if it is a set.

%---------------------------------------------------------------

\subsection{$\ell_p$ norm}

\begin{definition}[\textbf{$\ell_p$ norm}]

The $\ell_p$ norm of a vector $\vect{v} \in \real^n$ is defined by $\|\vect{v}\|_p = \Big(\sum_{i=1}^n |v_i|^p \Big)^{1/p}$
for $1 \leq p < \infty$ and $\|\vect{v}\|_{\infty} = \max \{ |v_i| : i=1, \ldots n \}$ for $p=\infty$.
 
\end{definition}

\begin{fact}
For $\vect{x} \in \real^n \quad \|\vect{x}\|_p \leq \|\vect{x}\|_2 \leq \sqrt{n} \|\vect{x}\|_p$ for $p \geq 2$ and
$\frac{1}{\sqrt{n}} \|\vect{x}\|_p \leq \|\vect{x}\|_2 \leq \|\vect{x}\|_p$ for $1 \leq p < 2$.

 \label{fact:lp}
\end{fact}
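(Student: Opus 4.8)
The plan is to derive all four inequalities from two standard facts: the monotonicity of $q \mapsto \|\vect{x}\|_q$, and H\"older's inequality (equivalently the power-mean inequality). Throughout I may assume $\vect{x}\neq \vect{0}$, since every inequality is trivial otherwise, and by homogeneity of a norm I may rescale $\vect{x}$ so that whatever normalization is convenient holds.

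First I would prove the two ``tight'' directions, namely $\|\vect{x}\|_p \leq \|\vect{x}\|_2$ for $p \geq 2$ and $\|\vect{x}\|_2 \leq \|\vect{x}\|_p$ for $1 \leq p < 2$. Both are instances of the claim that $q \mapsto \|\vect{x}\|_q$ is non-increasing on $[1,\infty)$. To see this, fix $q_1 \leq q_2$ and normalize so that $\|\vect{x}\|_{q_1} = 1$; then $\sum_i |x_i|^{q_1} = 1$ forces $|x_i| \leq 1$ for every $i$, hence $|x_i|^{q_2} \leq |x_i|^{q_1}$, and summing gives $\|\vect{x}\|_{q_2}^{q_2} \leq \|\vect{x}\|_{q_1}^{q_1} = 1$, i.e. $\|\vect{x}\|_{q_2} \leq 1 = \|\vect{x}\|_{q_1}$. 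Taking $(q_1,q_2) = (2,p)$ yields the first inequality and $(q_1,q_2) = (p,2)$ the second.

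Next I would handle the two directions carrying the $\sqrt{n}$ factor, namely $\|\vect{x}\|_2 \leq \sqrt{n}\,\|\vect{x}\|_p$ for $p \geq 2$ and $\|\vect{x}\|_p \leq \sqrt{n}\,\|\vect{x}\|_2$ for $1 \leq p < 2$. Both come from H\"older's inequality applied to $(|x_i|^{s})_i$ against the all-ones vector with a suitable conjugate pair. For $p \geq 2$, writing $\sum_i |x_i|^2 = \sum_i |x_i|^2 \cdot 1$ and applying H\"older with exponents $p/2$ and $p/(p-2)$ gives $\|\vect{x}\|_2^2 \leq \|\vect{x}\|_p^2 \, n^{(p-2)/p}$; taking square roots, the exponent of $n$ is $\tfrac12 - \tfrac1p \leq \tfrac12$, so $\|\vect{x}\|_2 \leq \sqrt{n}\,\|\vect{x}\|_p$. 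For $1 \leq p < 2$, applying H\"older to $\sum_i |x_i|^p \cdot 1$ with exponents $2/p$ and $2/(2-p)$ gives $\|\vect{x}\|_p^p \leq \|\vect{x}\|_2^p \, n^{(2-p)/2}$; taking $p$-th roots, the exponent of $n$ is $\tfrac1p - \tfrac12 \leq \tfrac12$ since $p \geq 1$, so $\|\vect{x}\|_p \leq \sqrt{n}\,\|\vect{x}\|_2$. (Alternatively, the $p\geq 2$ bound also follows from the chain $\|\vect{x}\|_2 \leq \sqrt{n}\,\|\vect{x}\|_\infty \leq \sqrt{n}\,\|\vect{x}\|_p$.) There is no real obstacle in this argument; the only points that need a little care are the book-keeping of the conjugate exponents in the two H\"older applications and the elementary observation that the resulting power of $n$ is at most $1/2$ over the stated ranges of $p$. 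Combining the four inequalities gives the statement.
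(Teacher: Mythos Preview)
Your argument is correct: the monotonicity of $q\mapsto\|\vect{x}\|_q$ handles the two inner inequalities, and the two H\"older applications (with the exponent bookkeeping you wrote out) give the $\sqrt{n}$ bounds. The paper itself states this as a \emph{Fact} without proof, so there is no argument in the paper to compare against; your proposal simply supplies a standard and complete justification.
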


\begin{definition}[\textbf{Ball}]

A ball is the set of all points within a fixed distance or radius (defined by a metric) from a fixed point or centre.
More precisely, we define the (closed) ball centered at $\vect{x} \in \real^n$ with radius $r$ as
$ \ballp_n (\vect{x},r) = \{ \vect{y} \in \real^n : \|\vect{y}-\vect{x}\|_p \leq r \}$.
\end{definition}
The boundary of $\ballp_n (\vect{x},r) $ is the set 
$\bd(\ballp_n (\vect{x},r)) = \{ \vect{y} \in \real^n : \|\vect{y}-\vect{x}\|_p = r \}$.
We may drop the first argument when the ball is centered at the origin $\vect{0}$ and drop both the arguments for unit ball 
centered at origin.

Let $\ballp_n (\vect{x},r_1,r_2) = \ballp_n (\vect{x},r_2) \setminus \ballp_n (\vect{x},r_1) = 
\{ \vect{y} \in \real^n : r_1 < \|\vect{y}-\vect{x}\|_p \leq r_2 \}$.
We drop the first argument if the spherical shell or corona is centered at origin.

%--------
\begin{fact}
 $|\ballp_n (\vect{x},c\cdot r)| = c^n\cdot|\ballp_n (\vect{x},r)|$ for all $c > 0$.
 
 \label{fact:ballRad}
\end{fact}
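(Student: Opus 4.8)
The plan is to reduce to the origin-centered case and then invoke the behaviour of Lebesgue measure (which is what $|\cdot|$ denotes for a geometric body here) under an affine change of variables. First I would observe that Lebesgue measure is translation invariant and that $\ballp_n(\vect{x},\rho) = \vect{x} + \ballp_n(\vect{0},\rho)$ for every radius $\rho \ge 0$; hence $|\ballp_n(\vect{x},\rho)| = |\ballp_n(\vect{0},\rho)|$, and it suffices to prove the identity $|\ballp_n(\vect{0},c\cdot r)| = c^n\,|\ballp_n(\vect{0},r)|$.

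Next I would use the positive homogeneity of the $\ell_p$ norm: for $c>0$ we have $\|c\,\vect{y}\|_p = c\,\|\vect{y}\|_p$, so $\vect{y}\in\ballp_n(\vect{0},r)$ if and only if $c\,\vect{y}\in\ballp_n(\vect{0},c\cdot r)$. Thus the linear map $\phi_c\colon\real^n\to\real^n$ defined by $\phi_c(\vect{y}) = c\,\vect{y}$ restricts to a bijection of $\ballp_n(\vect{0},r)$ onto $\ballp_n(\vect{0},c\cdot r)$. Since $\phi_c$ is represented by the matrix $c\,I_n$, whose determinant has absolute value $c^n$, the scaling property of Lebesgue measure (equivalently, the change-of-variables formula) gives $|\ballp_n(\vect{0},c\cdot r)| = |\phi_c(\ballp_n(\vect{0},r))| = c^n\,|\ballp_n(\vect{0},r)|$. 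Combining this with translation invariance yields the stated equality.

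The argument is uniform in $p\in[1,\infty]$: for $p=\infty$ one simply replaces the $\ell_p$ homogeneity by $\max_i|c\,y_i| = c\max_i|y_i|$, and for $1\le p<\infty$ it is the ordinary homogeneity of $(\sum_i|y_i|^p)^{1/p}$; no convexity or other structural feature of the ball is used. Accordingly, there is essentially no obstacle; the only point worth stating carefully is that $\ballp_n(\vect{0},r)$ is closed, hence Borel measurable, which is what licenses applying the scaling property of Lebesgue measure to it.
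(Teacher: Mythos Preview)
Your proof is correct and is the standard argument via translation invariance and the scaling property of Lebesgue measure. The paper does not actually prove this statement; it records it as a basic fact without proof, so there is nothing to compare against beyond noting that your argument is exactly what one would supply if a proof were required.
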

%---------
The algorithm of Dyer, Frieze and Kannan \cite{1991_DFK} selects almost uniformly a point in any convex body in polynomial
time, if a membership oracle is given \cite{2000_GG}.
For the sake of simplicity we will ignore the implementation detail and assume that we are able to uniformly select a point
in $\ballp_n (\vect{x},r)$ in polynomial time.

%---------------------------------------------------------------

\begin{definition}
 A lattice $\cL$ is a discrete additive subgroup of $\real^{d}$.
 Each lattice has a basis $\vect{B} = [\vect{b}_1, \vect{b}_2, \ldots \vect{b}_n]$, where $\vect{b}_i \in \real^{d}$ and
 \begin{eqnarray}
  \cL=\cL(\vect{B}) = \Big\{ \sum_{i=1}^n x_i\vect{b}_i : x_i \in \intg \quad \text{ for } \quad 1 \leq i \leq n\Big\}
  \nonumber
 \end{eqnarray}
\end{definition}
For algorithmic purposes we can assume that $\cL \subseteq \ratn^{d}$.
We call $n$ the \emph{rank} of $\cL$ and $d$ as the \emph{dimension}.
If $d=n$ the lattice is said to be full-rank.
Though our results can be generalized to arbitrary lattices, in the rest of the paper we only consider full 
rank lattices.

\begin{definition}
 For any lattice basis $\vect{B}$ we define the fundamental parallelepiped as :
 \begin{eqnarray}
  \fpar(\vect{B}) = \{ \vect{Bx} : \vect{x} \in [0,1)^n \}	\nonumber
 \end{eqnarray}
\end{definition}
If $\vect{y} \in \fpar(\vect{B})$ then $\|\vect{y}\|_p \leq n\|\vect{B}\|_p $ as can be easily seen by triangle inequality.
For any $\vect{z} \in \real^{n}$ there exists a unique $\vect{y} \in \fpar(\vect{B})$ such that 
$\vect{z}-\vect{y} \in \cL(\vect{B})$.
This vector is denoted by $\vect{y} \equiv \vect{z} \mod \vect{B}$ and it can be computed in polynomial time given $\vect{B}$
and $\vect{z}$.

\begin{definition}
 For $i \in [n]$, the $i^{th}$ successive minimum is defined as the smallest real number $r$ such that $\cL$ contains $i$ 
 linearly independent vectors of length at most $r$ :
 \begin{eqnarray}
  \minp_i (\cL) = \inf \{ r : \dim( \Span(\cL \cap \ballp_n (r)) ) \geq i \}	\nonumber
 \end{eqnarray}
\end{definition}
Thus the first successive minimum of a lattice is the length of the shortest non-zero vector in the lattice:
\begin{eqnarray}
 \minp_1 (\cL) = \min \{ \|\vect{v}\|_p : \vect{v} \in \cL \setminus \{\vect{0}\} \}	\nonumber
\end{eqnarray}

%-----------------------
We consider the following lattice problems.
In all the problems defined below $c \geq 1$ is some arbitrary approximation factor (usually specified as subscript), 
which can be a constant or a function of any parameter of the lattice (usually rank).
For exact versions of the problems (i.e. $c=1$) we drop the subscript.

\begin{definition}[\textbf{Shortest Vector Problem ($\svp_c^{(p)}$)}]

Given a lattice $\cL$, find a vector $\vect{v} \in \cL \setminus \{\vect{0}\}$ such that
$\|\vect{v}\|_p \leq c \|\vect{u}\|_p$ for any other $\vect{u} \in \cL \setminus \{\vect{0}\}$.

%Let $\minp_M(L) = \min \{ r \in \real : \exists \vect{v} \in L \setminus M, \|\vect{v}\|_p \leq r \}$ is the subspace avoiding minimum.

%The \textbf{Shortest Vector Problem ($\svp_c^{(p)}$)} is a special instance of $\sap_c^{(p)}$ when $M = \{ \vect{0} \}$.

\end{definition}

%-------------------------

\begin{definition}[\textbf{Closest Vector Problem ($\cvp_c^{(p)}$)}]

Given a lattice $\cL$ with rank $n$ and a target vector $\vect{t} \in \real^n$, find $\vect{v} \in \cL$ such that
$\|\vect{v}-\vect{t}\|_p \leq c \|\vect{w}-\vect{t}\|_p$ for all other $\vect{w} \in \cL$.
 
\end{definition}

%\begin{definition}[\textbf{$\gapcvp_c^{(p)}$}]
 
 %Given a rank $n$ lattice $L$, a target vector $\vect{t}$ and $\alpha \in \real_{>0}$, either find $\vect{v} \in L$ with $\|\vect{v}-\vect{t}\|_p \leq \alpha$ or state that $\|\vect{w}-\vect{t}\|_p > \frac{\alpha}{c}$ for all   $\vect{w} \in L$.
 
%\end{definition}

 %---------------
 \begin{lemma}%[\textbf{Theorem 3.9 in } \cite{2009_BN}]

The LLL algorithm \cite{1982_LLL} can be used to solve $\svp_{2^{n-1}}^{(p)}$ in polynomial time.
 
 \label{lem:LLL}
\end{lemma}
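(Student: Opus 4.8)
The plan is to run the LLL basis reduction algorithm on the given basis, output its first vector, and then argue that the $2^{(n-1)/2}$ approximation factor that LLL achieves in the Euclidean norm degrades by at most a factor $\sqrt{n}$ when we pass to an arbitrary $\ell_p$ norm, which is absorbed into the claimed bound $2^{n-1}$. Concretely, I would first invoke the classical analysis of~\cite{1982_LLL}: given a basis $\vect{B}$ of an $n$-rank lattice $\cL \subseteq \ratn^n$, the LLL algorithm runs in time polynomial in $n$ and the representation size of $\vect{B}$ and returns a basis whose first vector $\vect{b}_1 \in \cL \setminus \{\vect{0}\}$ satisfies $\|\vect{b}_1\|_2 \le 2^{(n-1)/2}\,\lambda^{(2)}_1(\cL)$, where $\lambda^{(2)}_1(\cL)$ is the Euclidean length of a shortest nonzero lattice vector.

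Next I would fix a vector $\vect{u} \in \cL \setminus \{\vect{0}\}$ attaining $\|\vect{u}\|_p = \minp_1(\cL)$ and split into the two ranges of Fact~\ref{fact:lp}. If $p \ge 2$, then
\[
\|\vect{b}_1\|_p \;\le\; \|\vect{b}_1\|_2 \;\le\; 2^{(n-1)/2}\,\lambda^{(2)}_1(\cL) \;\le\; 2^{(n-1)/2}\,\|\vect{u}\|_2 \;\le\; \sqrt{n}\,2^{(n-1)/2}\,\|\vect{u}\|_p .
\]
If $1 \le p < 2$, the same chain, with the factor $\sqrt{n}$ now supplied by the first inequality of Fact~\ref{fact:lp}, gives
\[
\|\vect{b}_1\|_p \;\le\; \sqrt{n}\,\|\vect{b}_1\|_2 \;\le\; \sqrt{n}\,2^{(n-1)/2}\,\lambda^{(2)}_1(\cL) \;\le\; \sqrt{n}\,2^{(n-1)/2}\,\|\vect{u}\|_2 \;\le\; \sqrt{n}\,2^{(n-1)/2}\,\|\vect{u}\|_p .
\]
In either case $\|\vect{b}_1\|_p \le \sqrt{n}\,2^{(n-1)/2}\,\minp_1(\cL)$, and I would finish by using the elementary inequality $n \le 2^{n-1}$, valid for every integer $n \ge 1$, which gives $\sqrt{n}\,2^{(n-1)/2} \le 2^{n-1}$ and hence $\|\vect{b}_1\|_p \le 2^{n-1}\,\minp_1(\cL)$; since $\vect{b}_1$ is a nonzero lattice vector computed in polynomial time, it is a valid solution to $\svp^{(p)}_{2^{n-1}}$.

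There is no genuine obstacle here — the argument is entirely routine. The only points that require minor care are citing the $\lambda_1$-form of the LLL guarantee (rather than the $\det(\cL)^{1/n}$-form), making sure the $\sqrt{n}$ loss is charged to the correct inequality in each of the two ranges of $p$, and checking $n \le 2^{n-1}$ so that both ranges collapse into the single clean factor $2^{n-1}$ stated in the lemma.
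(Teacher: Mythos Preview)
Your proposal is correct and follows essentially the same route as the paper: run LLL, then transfer the Euclidean guarantee to the $\ell_p$ norm via Fact~\ref{fact:lp} in the two ranges of $p$. The only difference is that the paper invokes a coarser LLL bound ($\widetilde{\lambda_1} \le 2^{n-1}\lambda_1^{(2)}$, as quoted from~\cite{2009_BN}) and ends its chains at $2^{n-1}\sqrt{n}\,\minp_1$, tacitly absorbing the $\sqrt{n}$; you instead use the standard $2^{(n-1)/2}$ LLL guarantee and explicitly absorb the $\sqrt{n}$ via $n \le 2^{n-1}$, which actually delivers the stated factor $2^{n-1}$ exactly.
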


\begin{proof}
 Let $\cL$ is a lattice and $\minp_1(\cL)$ is the length of the shortest vector.
 
 It has been shown in \cite{2009_BN} that the LLL algorithm \cite{1982_LLL} can be used to obtain an estimate 
 $\widetilde{\lambda_1}(\cL)$ of the length of the shortest vector satisfying 
 $\lambda_1^{(2)}(\cL) \leq \widetilde{\lambda_1}(\cL) \leq 2^{n-1} \lambda_1^{(2)}(\cL)$.
 
 Using Fact \ref{fact:lp} we get
 \begin{eqnarray}
  \minp_1(\cL) \leq \lambda_1^{(2)}(\cL) \leq \widetilde{\lambda_1}(\cL) \leq 2^{n-1} \lambda_1^{(2)}(\cL) 
  \leq 2^{n-1} \sqrt{n} \minp_1(\cL) \qquad [\text{For }	p \geq 2 ]	\nonumber
 \end{eqnarray}
and
\begin{eqnarray}
 \frac{1}{\sqrt{n}} \minp_1(\cL) \leq \lambda_1^{(2)}(\cL) \leq \widetilde{\lambda_1}(\cL) \leq 2^{n-1} \lambda_1^{(2)}(\cL)
 \leq 2^{n-1} \minp_1(\cL) \qquad [\text{For } p < 2]	\nonumber
\end{eqnarray}

Hence the result follows.
\end{proof}

The following result shows that in order to solve $\svp_{1+\epsilon}^{(p)}$, it is sufficient to consider the case when 
$2 \leq \minp_1(\cL) < 3$. This is done by appropriately scaling the lattice. 

\begin{lemma}[\textbf{Lemma 4.1 in } \cite{2009_BN}]

For all $\ell_p$ norms, if there is an algorithm $A$ that for all lattices $\cL$ with $2 \leq \minp_1 (\cL) < 3 $
 solves $\svp_{1+\epsilon}^{(p)}$ in time $T=T(n,b,\epsilon)$, then there is an
algorithm $A'$ that solves $\svp_{1+\epsilon}^{(p)}$ for all lattices in time $O(nT+n^4b)$.

\end{lemma}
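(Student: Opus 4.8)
The plan is to reduce the general case to the normalized case $2\le\minp_1(\cL)<3$ by guessing the correct scaling of the lattice. First I would invoke the LLL-based estimation behind Lemma~\ref{lem:LLL}: in time $\poly(n,b)$ it produces a number $\widetilde\lambda$ with $\minp_1(\cL)\le\widetilde\lambda\le 2^{n-1}\sqrt n\cdot\minp_1(\cL)$ (the $\sqrt n$ factor disappears when $1\le p<2$, and all that matters is that the ratio $\widetilde\lambda/\minp_1(\cL)$ lies in an interval $[1,2^{O(n)}]$). Scaling the lattice by a positive rational $s$ turns $\minp_1(\cL)$ into $s\,\minp_1(\cL)$ and turns lattice vectors into lattice vectors, so it suffices to try enough scalings that one of them forces $\minp_1$ into $[2,3)$.

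Concretely, for $j=0,1,\dots,M$ with $M=\lceil\log_{3/2}(2^{n-1}\sqrt n)\rceil+1=O(n)$, set $s_j:=\tfrac{2}{\widetilde\lambda}\cdot(3/2)^j$ and $\cL_j:=s_j\cL$; since $\widetilde\lambda$ has bit-length $\poly(n,b)$, each $s_j$ is a rational of bit-length $O(n)+\poly(n,b)$, so $\cL_j$ has representation size $\poly(n,b)$. Now $\minp_1(\cL_j)=s_j\,\minp_1(\cL)=2\,(3/2)^j\cdot\minp_1(\cL)/\widetilde\lambda$, so the condition $2\le\minp_1(\cL_j)<3$ is equivalent to $(3/2)^j\in\bigl[\,\widetilde\lambda/\minp_1(\cL),\ \tfrac32\,\widetilde\lambda/\minp_1(\cL)\,\bigr)$. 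This target is a half-open interval of multiplicative width exactly $3/2$, which equals the common ratio of the progression $\{(3/2)^j\}_j$; hence it contains exactly one power $(3/2)^j$, and because $\widetilde\lambda/\minp_1(\cL)\le 2^{n-1}\sqrt n$ that power's index lies in $\{0,\dots,M\}$.

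The algorithm $A'$ then runs $A$ on each $\cL_j$ with a time budget of $T$, obtaining $\vect v_j$; it discards any $\vect v_j$ that fails to halt in time or is not a nonzero vector of $\cL_j$, and outputs the vector $s_j^{-1}\vect v_j\in\cL$ of smallest $\ell_p$ norm among the survivors. For correctness: every surviving $s_j^{-1}\vect v_j$ is a nonzero lattice vector, hence has norm $\ge\minp_1(\cL)$; and for the unique good index $j^\star$ the hypothesis on $A$ gives $\|\vect v_{j^\star}\|_p\le(1+\epsilon)\minp_1(\cL_{j^\star})=(1+\epsilon)\,s_{j^\star}\minp_1(\cL)$, i.e. $\|s_{j^\star}^{-1}\vect v_{j^\star}\|_p\le(1+\epsilon)\minp_1(\cL)$, so the survivor set is nonempty and the output is a valid $\svp_{1+\epsilon}^{(p)}$ solution. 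For the running time, the LLL step costs $O(n^4b)$, building the $M+1=O(n)$ scaled bases costs $\poly(n,b)$, and each of the $O(n)$ calls to $A$ costs $T$, for a total of $O(nT+n^4b)$.

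The only point that needs care is the covering argument: one must check simultaneously that the geometric grid $\{s_j\}$ of ratio $3/2$ is fine enough that some scaling always lands $\minp_1$ inside the half-open window $[2,3)$ (this uses that the window and the grid have the same multiplicative step, so the grid hits it exactly once) and coarse enough that only $O(n)$ grid points suffice (this uses the $2^{O(n)}$ LLL approximation guarantee together with $\log_{3/2}(2^{O(n)})=O(n)$). Everything else — that scaling commutes with $\minp_1$, that scaled lattice vectors pull back to lattice vectors, and that bit-lengths grow only polynomially under these scalings — is routine.
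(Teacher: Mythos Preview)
The paper does not supply its own proof of this lemma; it is quoted verbatim from~\cite{2009_BN} and used as a black box. Your argument is the standard scaling reduction and is correct: LLL gives a $2^{O(n)}$-approximation to $\minp_1$, a geometric grid of $O(n)$ scalings with ratio $3/2$ necessarily places one rescaled lattice into the window $[2,3)$, and running $A$ (with a time budget and output verification) on each candidate recovers a $(1+\epsilon)$-approximate shortest vector after unscaling. One small wrinkle: for $1\le p<2$ the paper's Lemma~\ref{lem:LLL} only guarantees $\widetilde\lambda\ge \minp_1(\cL)/\sqrt n$, not $\widetilde\lambda\ge\minp_1(\cL)$, so strictly speaking your index range should start at $j=-\lceil\log_{3/2}\sqrt n\rceil$ rather than $j=0$; this is still $O(n)$ indices and does not affect the conclusion.
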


%---------------------------------------
%	Ball Intersection
%---------------------------------------------
\subsection{Volume estimates in the infinity norm}
\label{sec:ballVol}

In this section we prove some results about volume of intersection of balls which will be used in our analysis later.
%We shorthand $\ballp_n$ by $B_1$.
%Throughout this paper, let $\kappa_n$ be the volume of the unit ball in $\ell_p$ norm. 
The reader may skip this section and look at it when referenced.

\begin{lemma}
 Let $\cL$ is a lattice and $R \in\real_{>0}$.
 Then $| \balli_n(R) \cap \cL | \leq \left(1+\Big\lfloor \frac{2R}{\mini_1} \Big\rfloor\right)^n$.
\label{lem:latBall}
\end{lemma}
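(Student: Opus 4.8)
The plan is to bound the number of lattice points in the $\ell_\infty$ ball $\balli_n(R)$ by partitioning the ambient cube $[-R,R]^n$ into small sub-cubes, each of side length $\mini_1$, and arguing that each sub-cube contains at most one lattice point. First I would note that along each coordinate axis, the interval $[-R, R]$ has length $2R$ and can be covered by $1 + \lfloor 2R/\mini_1 \rfloor$ half-open intervals of length $\mini_1$ (the floor plus one accounts for the possibly incomplete last piece). Taking the product over the $n$ coordinates, this partitions $[-R,R]^n$ into at most $\left(1 + \lfloor 2R/\mini_1 \rfloor\right)^n$ cells, each an axis-aligned box with all side lengths equal to $\mini_1$ — hence each cell has $\ell_\infty$-diameter strictly less than $\mini_1$ if the intervals are taken half-open, or at most $\mini_1$ in general.

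The key step is then the uniqueness claim: no cell can contain two distinct lattice points. If $\vect{u}, \vect{v} \in \cL$ both lie in the same cell, then $\|\vect{u} - \vect{v}\|_\infty < \mini_1$ (since each coordinate difference is bounded by the side length of a half-open interval of length $\mini_1$, strictly less than $\mini_1$), but $\vect{u} - \vect{v} \in \cL$, so either $\vect{u} = \vect{v}$ or $\vect{u} - \vect{v}$ is a nonzero lattice vector of $\ell_\infty$-length strictly less than $\mini_1(\cL)$, contradicting the definition of the first successive minimum. Therefore each cell contains at most one point of $\cL$, and since every point of $\balli_n(R) \cap \cL$ lies in $[-R,R]^n$ and hence in exactly one cell, we conclude $|\balli_n(R) \cap \cL| \leq \left(1 + \lfloor 2R/\mini_1 \rfloor\right)^n$.

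The only mild subtlety — and the place where a careless argument could fail — is the handling of the boundary between the half-open interval decomposition and the closed ball: I would set up the intervals as $[-R + j\mini_1, -R + (j+1)\mini_1)$ for $j = 0, 1, \ldots, \lfloor 2R/\mini_1\rfloor - 1$ together with a final interval $[-R + \lfloor 2R/\mini_1\rfloor \mini_1, R]$ to ensure the pieces genuinely cover $[-R,R]$ and are pairwise disjoint, giving exactly $1 + \lfloor 2R/\mini_1\rfloor$ pieces per axis; the final (closed) piece has length at most $\mini_1$, so two points in it still differ by less than $\mini_1$ in that coordinate unless they coincide — one has to check the degenerate case where this last interval has length exactly $\mini_1$, but even then two distinct lattice points in it would differ by strictly less than $\mini_1$ in that coordinate only if not both endpoints are occupied; a cleaner route is to observe the final closed interval can be taken to have length strictly less than $\mini_1$ whenever $\mini_1 \nmid 2R$, and when $\mini_1 \mid 2R$ one simply uses $2R/\mini_1$ half-open intervals and the count $1 + \lfloor 2R/\mini_1\rfloor$ still dominates. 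In short, the main work is purely bookkeeping about the interval partition; the conceptual content is the one-line pigeonhole argument using $\mini_1$.
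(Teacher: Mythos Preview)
Your proposal is correct and follows essentially the same approach as the paper: partition $[-R,R]^n$ into half-open boxes of side $\mini_1$ indexed by $\{0,1,\ldots,\lfloor 2R/\mini_1\rfloor\}^n$, and observe each box can hold at most one lattice point by the definition of $\mini_1$. Your extra care about the final interval is unnecessary if you simply let the last half-open interval extend past $R$ (as the paper does), since the boxes then remain disjoint, still cover $[-R,R]^n$, and the strict inequality $\|\vect{u}-\vect{v}\|_\infty < \mini_1$ holds automatically within any half-open box.
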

\begin{proof}
Note that for any non-negative integers $i_1, \ldots, i_n$, the region 
\[
[-R + i_1 \mini_1, -R + (i_1 + 1)\mini_1) \; \times \;  [-R + i_2 \mini_1, -R + (i_2 + 1)\mini_1)  \; \times \; \cdots  \; 
\times \; [-R + i_n \mini_1, -R + (i_n +1) \mini_1)
\]
contains at most one lattice point. The values of $i_j$ for any $j \in [n]$ such that this region intersects with 
$\balli_n(R)$ are $\{0,1, \ldots, \Big\lfloor \frac{2R}{\mini_1} \Big\rfloor \}$. The result follows. 
\end{proof}

In the following lemma we derive the expected volume of intersection of $\balli_n(\vect{r},\gamma)$ with $\balli_n$,
assuming the centre $\vect{r}$ is uniformly distributed in $\balli_n(\gamma,1)$.

%----------------------------------------------
\begin{lemma}
  Let $V_{\vect{r}} = |\balli_n(\vect{r}, \gamma) \cap \balli_n|  $, where $\vect{r} \in \balli_n(\gamma , 1)$.
 Then 
 \begin{enumerate}
  \item $\E_{\vect{r} \sim_U \balli_n(\gamma ,1)} \Big[ V_{\vect{r}} \Big] = \frac{1}{2}(1+\gamma) 
   \Big[ \frac{3}{4}(1+\gamma)+\frac{(1-\gamma)}{2}\ln \gamma \Big]^{n-1} $ and hence \\
   $ \E_{\vect{r} \sim_U \balli_n(\gamma,1)} \Big[ \frac{V_{\vect{r}}}{\balli_n} \Big] = \frac{1}{4}(1+\gamma)
   \Big[ \frac{3}{8}(1+\gamma)+\frac{(1-\gamma)}{4}\ln \gamma \Big]^{n-1}  $
   
   \item If $\gamma = 1-\frac{1}{n}$, then
    $ \E_{\vect{r} \sim_U \balli_n(\gamma,1)} \Big[ \frac{V_{\vect{r}}}{\balli_n} \Big] = \frac{1}{4}(2-\frac{1}{n})
   \Big[ \frac{3}{8}(2-\frac{1}{n})+\frac{1}{4n}\ln (1-\frac{1}{n}) \Big]^{n-1}  $
   
   Specifically $\lim_{\gamma \rightarrow 1} \E_{\vect{r} \sim_U \balli_n(\gamma,1)} \Big[ \frac{V_{\vect{r}}}{\balli_n} \Big] = 
  \frac{1}{2} \Big( \frac{3}{4} \Big)^{n-1}$
 \end{enumerate}

 \label{lem:volNV}
\end{lemma}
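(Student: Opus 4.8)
The plan is to compute $\E[V_{\vect{r}}]$ by exploiting the product structure of the $\ell_\infty$ ball. Since $\balli_n = [-1,1]^n$ and $\balli_n(\vect{r},\gamma) = \prod_{i=1}^n [r_i - \gamma, r_i + \gamma]$, the intersection volume factorizes as $V_{\vect{r}} = \prod_{i=1}^n \ell(r_i)$, where $\ell(t) = |[-1,1] \cap [t-\gamma, t+\gamma]|$ is the length of the overlap of two intervals in one dimension. Likewise, the condition $\vect{r} \in \balli_n(\gamma,1)$ means $\|\vect{r}\|_\infty \le 1$ and $\|\vect{r}\|_\infty > \gamma$, i.e. $\vect{r}$ is uniform on the cube $[-1,1]^n$ conditioned on having at least one coordinate of absolute value exceeding $\gamma$. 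The first step is therefore to write $\E_{\vect{r}} [V_{\vect{r}}] = \frac{1}{|\balli_n(\gamma,1)|} \int_{\balli_n(\gamma,1)} \prod_i \ell(r_i)\, d\vect{r}$, with $|\balli_n(\gamma,1)| = 2^n - (2\gamma)^n$.

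Next I would handle the ``at least one coordinate large'' constraint by conditioning on \emph{which} coordinate is the one that is large — or more cleanly, by a direct decomposition. Write $\int_{\balli_n(\gamma,1)} \prod_i \ell(r_i)\,d\vect{r} = \int_{[-1,1]^n} \prod_i \ell(r_i)\, d\vect{r} - \int_{[-\gamma,\gamma]^n} \prod_i \ell(r_i)\, d\vect{r}$, which by the product structure equals $A^n - B^n$ where $A = \int_{-1}^1 \ell(t)\, dt$ and $B = \int_{-\gamma}^\gamma \ell(t)\, dt$. Wait — this does not match the claimed formula, which has an $(n-1)$-th power, so the correct decomposition must single out one coordinate as the ``witness'' that lies in $[\gamma,1]\cup[-1,-\gamma]$ while the others range freely; one then multiplies by $n$ for the choice of witness and handles overlaps. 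Actually the cleanest route is: the claimed answer $\frac12(1+\gamma)\bigl[\tfrac34(1+\gamma) + \tfrac{1-\gamma}{2}\ln\gamma\bigr]^{n-1}$ strongly suggests that in the limit / relevant regime the normalization and the constraint interact so that effectively one coordinate contributes a factor $\frac12(1+\gamma)$ (an average of $\ell$ over the ``large'' region, suitably normalized) and each of the remaining $n-1$ coordinates contributes $\frac34(1+\gamma) + \tfrac{1-\gamma}{2}\ln\gamma$. So the second step is to compute the two one-dimensional integrals: $\int_\gamma^1 \ell(t)\,dt$ (and its reflection) for the witness coordinate, and $\int_{-1}^1 \ell(t)\,dt$ for a free coordinate, then reassemble, being careful that $\|\vect{r}\|_\infty > \gamma$ is equivalent to ``not all $|r_i| \le \gamma$'' and that the normalizing volume $2^n - (2\gamma)^n$ cancels appropriately — most likely the intended reading is that $\vect{r}$ is uniform on $\bd(\balli_n)$ scaled, or that in the large-$n$ regime the boundary shell dominates. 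I should double check against the stated limit $\lim_{\gamma\to1} = \frac12(\tfrac34)^{n-1}$: at $\gamma = 1$, $\frac12(1+\gamma) = 1$ and $\tfrac34(1+\gamma) + 0 = \tfrac32$, giving $(\tfrac32)^{n-1}$, not $(\tfrac34)^{n-1}$ — so the limit is of the \emph{normalized} quantity $V_{\vect{r}}/|\balli_n|$, i.e. dividing by $2^n$, which turns $1$ into $\tfrac12$ and $\tfrac32$ into $\tfrac34$. Good, that is consistent with part 2.

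The key one-dimensional computation is this: for $t \in [-1,1]$, $\ell(t) = \min(1, t+\gamma) - \max(-1, t-\gamma)$. When $|t| \le 1-\gamma$ this is $2\gamma$; when $1-\gamma \le t \le 1$ it is $1 - (t-\gamma) = 1+\gamma - t$; symmetrically for $t \le -(1-\gamma)$. To get a $\ln\gamma$ term one actually needs $\ell$ relative to a \emph{normalized} uniform point, i.e. when we compute the normalized ratio $V_{\vect{r}}/(2^n)$ we get $\prod_i \frac{\ell(r_i)}{2}$, and the relevant integral over the one-dimensional ``free'' coordinate with the correct conditional density (uniform on $[-1,1]$) is $\frac12\int_{-1}^1 \frac{\ell(t)}{2}\,dt$; the $\ln\gamma$ must then come from the \emph{witness} coordinate's conditional density not being flat — in fact, the probability that exactly the relevant structure occurs involves integrating $\frac{\ell(t)}{2}$ against a density proportional to something, producing $\int$ of a rational function whose antiderivative has a logarithm. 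So the precise step is to identify the correct conditional law of $\vect{r}$, factor, and integrate $x \mapsto \frac{\ell(x)}{2}$ against it. Part 2 is then immediate: substitute $\gamma = 1 - 1/n$ into the part-1 formula, and take the limit $\gamma \to 1$ using $\ln\gamma = \ln(1-\tfrac1n) \to 0$ (more precisely $(1-\gamma)\ln\gamma \to 0$), leaving $\frac14 \cdot 2 \cdot (\tfrac38 \cdot 2)^{n-1} = \frac12(\tfrac34)^{n-1}$.

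The main obstacle will be getting the normalization and the ``at least one large coordinate'' constraint exactly right so that the $\ln\gamma$ term appears with the stated coefficient $\frac{1-\gamma}{2}$ (resp. $\frac{1-\gamma}{4}$ after normalization) and the leading factor is $\frac12(1+\gamma)$ rather than some symmetric $n$-th power. I expect this bookkeeping — deciding whether $\vect{r}$ is effectively uniform on the boundary $\bd(\balli_n)$, on the shell $\balli_n(\gamma,1)$, and how the conditional marginal of a single coordinate looks — to be the subtle part, whereas the one-dimensional integrals themselves ($\int \ell$, $\int t\,\ell(t)$-type terms, and the one producing $\ln\gamma$) are routine. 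Once the one-variable building blocks $\int_{-1}^1 \tfrac{\ell(t)}{2}\,dt = \tfrac34(1+\gamma) + \tfrac{1-\gamma}{2}\ln\gamma$ (to be verified) and the witness factor $\tfrac12(1+\gamma)$ are in hand, the product structure of the $\ell_\infty$ ball makes the $n$-dimensional assembly and both claimed identities fall out directly.
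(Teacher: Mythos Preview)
Your proposal never commits to an argument, and your hesitation is well-founded: the direct product decomposition you sketch gives the \emph{exact} expectation as $(A^n-B^n)/(2^n-(2\gamma)^n)$ with $A=\int_{-1}^{1}\ell(t)\,dt=\gamma(4-\gamma)$ and $B=\int_{-\gamma}^{\gamma}\ell(t)\,dt=4\gamma-1$, and this expression contains no logarithm. The $(n-1)$-th power and the $\ln\gamma$ in the stated formula do not arise from anything in your setup, so trying to massage your integrals into that form will not succeed.

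The paper takes a different route. It conditions on the event $E_1=\{|r_1|=\max_j|r_j|\}$; by symmetry $\E[V_{\vect r}]=\E[V_{\vect r}\mid E_1]$. The structural point you are missing is that, conditioned on $E_1$ with $r_1\in(\gamma,1]$, the remaining coordinates $r_2,\dots,r_n$ are uniform on $[-r_1,r_1]$, not on $[-1,1]$. One then computes, for each $i\ge 2$, the conditional mean $\E_{r_i\sim U[-r_1,r_1]}[\ell(r_i)]=(1+\gamma)-\tfrac{r_1}{2}-\tfrac{(1-\gamma)^2}{2r_1}$ as a function of $r_1$; the $1/r_1$ term, averaged over $r_1\in(\gamma,1]$, is precisely what produces $\ln\gamma$. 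This conditioning on the argmax, with its induced $r_1$-dependent range for the other coordinates, is the missing idea.

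Be aware, however, that the paper's derivation then takes two shortcuts: it treats the conditional law of $r_1$ given $E_1$ as uniform on $(\gamma,1]$ (the true density is proportional to $r_1^{\,n-1}$), and it replaces $\E_{r_1}\bigl[Z_1(r_1)\,g(r_1)^{n-1}\bigr]$ by $\E_{r_1}[Z_1(r_1)]\cdot\bigl(\E_{r_1}[g(r_1)]\bigr)^{n-1}$. So the displayed formula is really a closed-form approximation rather than the exact expectation for general $\gamma$; it coincides with your exact expression in the limit $\gamma\to 1$ (both tend to $(3/2)^{n-1}$, and after dividing by $|\balli_n|=2^n$ to $\tfrac12(3/4)^{n-1}$), which is the only regime the downstream applications actually use. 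Part~2 is then, as you say, just substitution and the observation that $(1-\gamma)\ln\gamma\to 0$.
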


\begin{proof}
 $V_{\vect{r}}$ is a hyperrectangle or an n-orthotope and therefore its volume is the product of its edges.
 
 Let $E_i$ is the event when $\max_{j=1}^n |r_j| = |r_i| $.
 Since $\vect{r} \sim_U \balli_n(\gamma,1)$ so due to symmetry, $\Pr[E_i] = \frac{1}{n}$ for all $i$.
 Thus
 \begin{eqnarray}
  \E_{\vect{r}} [V_{\vect{r}}] = \sum_{i=1}^n \Pr[E_i] \E_{\vect{r}} [V_{\vect{r}} | E_i] = 
  \E_{\vect{r}} [V_{\vect{r}} | E_1] 	  \nonumber
 \end{eqnarray}
 Since $\vect{r} \in \balli_n(\gamma,1)$, so $r_1 \in [-1, -\gamma) \cup (\gamma, 1]$.
 Let $Z_i$ is the variable denoting the length of the hyperrectangle in the direction of the $i^{th}$ co-ordinate.
 Then
 \begin{eqnarray}
  \E_{\vect{r}} [V_{\vect{r}}|E_1] &=& \frac{1}{2}
  \E_{r_1 \sim_U (\gamma,1]} [Z_1|E_1] \prod_{i=2}^n \E_{r_i \sim_U [-r_1,r_1]} [Z_i|Z_1,E_1] 	\nonumber	\\
  &+& \frac{1}{2}\E_{r_1 \sim_U [-1,-\gamma)} [Z_1|E_1] 
  \prod_{i=2}^n \E_{r_i \sim_U [r_1,-r_1]} [Z_i|Z_1,E_1]	  \nonumber
 \end{eqnarray}
 
 Now $V_{\vect{r}}=\balli_n(\vect{r},\gamma) \cap \balli_n = \{ \vect{y} : \|\vect{y}-\vect{r}\|_{\infty} \leq \gamma 
 \text{ and } \|\vect{y}\|_{\infty} \leq 1 \} = 
 \{ \vect{y} : \max\{ -\gamma+r_i,-1 \} \leq y_i \leq \min\{ \gamma+r_i,1 \} \quad \forall i \}$.
 
 If $r_1 \in (\gamma,1]$ then $-\gamma+r_1 \leq y_1 \leq 1$ and thus 
 $\E_{r_1 \sim_U (\gamma,1]} [Z_1|E_1] = \E_{r_1 \sim_U (\gamma,1]} [1+\gamma-r_1] $.
 
 Now let us consider $\E_{r_i \sim_U [-r_1,r_1]} [Z_i|Z_1,E_1]$.
 (Note this expression is same for all $i=2, \ldots n$).
 \begin{eqnarray}
 \small 
  \E_{r_i \sim_U [-r_1,r_1]} [Z_i|Z_1] &=& \Pr[r_i \in [-r_1,-1+\gamma]] \E_{r_i \sim_U [-r_1,-1+\gamma]} [Z_i|Z_1] \nonumber
  \\
    &+& \Pr[r_i \in [-1+\gamma,1-\gamma]] \E_{r_i \sim_U [-1+\gamma,1-\gamma]} [Z_i|Z_1] \nonumber \\
    &+& \Pr[r_i \in [1-\gamma,r_1]] \E_{r_i \sim_U [1-\gamma,r_1]} [Z_i|Z_1] 	\nonumber \\
    &=& \Big(\frac{-1+\gamma+r_1}{2r_1}\Big) \E_{r_i \sim_U [-r_1,-1+\gamma]} \Big[(\gamma+r_i)-(-1)\Big]  \nonumber \\
    &+& \Big( \frac{1-\gamma}{r_1} \Big) \E_{r_i \sim_U [-1+\gamma,1-\gamma]} \Big[ (\gamma+r_i)-(-\gamma+r_i) \Big]  
    \nonumber \\
    &+& \Big( \frac{r_1-1+\gamma}{2r_1} \Big) \E_{r_i \sim_U [1-\gamma,r_1]} \Big[ 1-(-\gamma+r_i) \Big]  \nonumber \\
    &=& \Big(\frac{-1+\gamma+r_1}{2r_1}\Big) \Big[1+\gamma+\frac{-1+\gamma-r_1}{2} \Big]
    +\Big( \frac{1-\gamma}{r_1} \Big) 2\gamma	\nonumber \\
    &+& \Big( \frac{r_1-1+\gamma}{2r_1} \Big) \Big[ 1+\gamma-\frac{1-\gamma+r_1}{2} \Big]  \nonumber \\
    &=& \Big( \frac{r_1+\gamma-1}{2r_1}\Big) \Big[ 1+3\gamma-r_1 \Big] + \Big( \frac{1-\gamma}{r_1} \Big) 2\gamma \nonumber \\
    &=& -\Big( \frac{r_1}{2}\Big) + \gamma+1 - \frac{(1-\gamma)^2}{2r_1} \nonumber
 \end{eqnarray}
 So
 \begin{eqnarray}
   \E_{r_1 \sim_U (\gamma,1]} [Z_1|E_1] &\prod_{i=2}^n& \E_{r_i \sim_U [-r_1,r_1]} [Z_i|Z_1,E_1]
  \nonumber \\
  &=& \E_{r_1 \sim_U (\gamma,1]} \Big[1+\gamma-r_1\Big] 
  \Big[ -\Big( \frac{r_1}{2}\Big) + \gamma+1 - \frac{(1-\gamma)^2}{2r_1} \Big]^{n-1} \nonumber \\
  &=& \Big[ 1+\gamma-\frac{1+\gamma}{2} \Big] \Big[ -\frac{1+\gamma}{4} + 1+\gamma +
  \frac{(1-\gamma)^2\ln \gamma}{2(1-\gamma)} \Big]^{n-1}	\nonumber \\
  &=& \frac{1}{2}(1+\gamma) \Big[ \frac{3}{4}(1+\gamma)+\frac{(1-\gamma)}{2}\ln \gamma \Big]^{n-1}	\nonumber
 \end{eqnarray}
 Similarly $ \E_{r_1 \sim_U [-1,-\gamma)} [Z_1|E_1] \prod_{i=2}^n \E_{r_i \sim_U [r_1,-r_1]} [Z_i|Z_1,E_1] = 
 \frac{1}{2}(1+\gamma) \Big[ \frac{3}{4}(1+\gamma)+\frac{(1-\gamma)}{2}\ln \gamma \Big]^{n-1} $.
 
 Thus
 \begin{eqnarray}
  \E_{\vect{r}} [V_{\vect{r}}] = \frac{1}{2}(1+\gamma) 
   \Big[ \frac{3}{4}(1+\gamma)+\frac{(1-\gamma)}{2}\ln \gamma \Big]^{n-1}	\nonumber
 \end{eqnarray}
and the theorem follows by noting $|\balli_n| = 2^n$.
    
\end{proof}

%-----------------------------------------------------
Next we deduce a similar result except that now we consider the volume of intersection of a ``big'' ball of radius 
$\gamma_1 >1$ with the unit ball, when the big ball is centred at a uniformly distributed point on the corona 
$\balli_n(\gamma_2,1)$ ($\gamma_2 < 1$).

\begin{lemma}
  Let $V_{\vect{r}} = |\balli_n(\vect{r}, \gamma_1) \cap \balli_n|  $, where $\vect{r} \in \balli_n(\gamma_2 , 1)$.
 Then 
 \begin{enumerate}
  \item \begin{eqnarray}
   \E_{\vect{r} \sim_U \balli_n(\gamma_2 ,1)} \Big[ V_{\vect{r}} \Big] = 
  \Big[\frac{1-\gamma_2}{2}+\gamma_1 \Big]\Big[ \frac{3-\gamma_2}{4}+\gamma_1 +\frac{(\gamma_1-1)^2}{2(1-\gamma_2)} 
   \ln \gamma_2 \Big]^{n-1}	\nonumber
 \end{eqnarray}
 and hence
 \begin{eqnarray}
   \E_{\vect{r} \sim_U \balli_n(\gamma_2,1)} \Big[\frac{V_{\vect{r}}}{\balli_n} \Big] =
   \Big[\frac{1-\gamma_2}{4}+\frac{\gamma_1}{2} \Big]\Big[ \frac{3-\gamma_2}{8}+\frac{\gamma_1}{2} +
   \frac{(\gamma_1-1)^2}{4(1-\gamma_2)} \ln \gamma_2 \Big]^{n-1}	\nonumber
 \end{eqnarray}
 
 \item In particular 
 \begin{eqnarray}
  \lim_{\gamma_2 \rightarrow 1} \E_{\vect{r} \sim_U \balli_n(\gamma_2,1)} 
  \Big[\frac{V_{\vect{r}}}{\balli_n} \Big] = \frac{\gamma_1}{2} \Big[0.25+\frac{\gamma_1}{2} -\frac{(\gamma_1-1)^2}{4} 
  \Big]^n  \nonumber
 \end{eqnarray}
 \end{enumerate}

 \label{lem:app_2levelOverlap1}
\end{lemma}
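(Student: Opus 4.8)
The plan is to mirror the proof of Lemma~\ref{lem:volNV} almost line for line, the only substantive change being the case analysis for the edge length of a generic coordinate, which is different because now $\gamma_1>1$. As before, $V_{\vect r}=\balli_n(\vect r,\gamma_1)\cap\balli_n$ is an axis-aligned box, namely $\prod_{i=1}^n\bigl[\max\{r_i-\gamma_1,-1\},\,\min\{r_i+\gamma_1,1\}\bigr]$, so $|V_{\vect r}|=\prod_i Z_i$ with $Z_i=\min\{r_i+\gamma_1,1\}-\max\{r_i-\gamma_1,-1\}$. Let $E_i$ be the event that $|r_i|=\max_j|r_j|$. By symmetry of the uniform distribution on the corona $\balli_n(\gamma_2,1)$ we have $\Pr[E_i]=1/n$ and $\E_{\vect r}[|V_{\vect r}|]=\E_{\vect r}[|V_{\vect r}|\mid E_1]$; moreover, conditioned on $E_1$ and on the value of $r_1$, the coordinates $r_2,\dots,r_n$ are i.i.d.\ uniform on $[-|r_1|,|r_1|]$, and (by the reflection $\vect r\mapsto-\vect r$) the two halves $r_1\in(\gamma_2,1]$ and $r_1\in[-1,-\gamma_2)$ contribute equally. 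Hence it suffices to compute $\E_{r_1\sim_U(\gamma_2,1]}\bigl[Z_1\cdot g(r_1)^{n-1}\bigr]$, where $g(r_1):=\E_{r_i\sim_U[-r_1,r_1]}[Z_i]$.

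Throughout I will assume the regime $\gamma_1\le 1+\gamma_2$ (in particular $\gamma_1<2$), which is the one relevant for the two-level sieve and is what makes the formula as stated valid. For the first coordinate, $r_1>\gamma_2$ and $\gamma_1>1$ give $r_1+\gamma_1>1$, and $r_1\ge\gamma_2\ge\gamma_1-1$ gives $r_1-\gamma_1\ge-1$, so $Z_1=1+\gamma_1-r_1$ and thus $\E_{r_1\sim_U(\gamma_2,1]}[Z_1]=1+\gamma_1-\tfrac{1+\gamma_2}{2}=\tfrac{1-\gamma_2}{2}+\gamma_1$. For a generic coordinate $i\ge2$, the new phenomenon is that $[r_i-\gamma_1,r_i+\gamma_1]\supseteq[-1,1]$ precisely when $1-\gamma_1\le r_i\le\gamma_1-1$, which is a genuine interval since $\gamma_1>1$; this splits $[-r_1,r_1]$ into three pieces on which $Z_i=1+\gamma_1+r_i$ (for $r_i\in[-r_1,1-\gamma_1]$), $Z_i=2$ (for $r_i\in[1-\gamma_1,\gamma_1-1]$), and $Z_i=1+\gamma_1-r_i$ (for $r_i\in[\gamma_1-1,r_1]$), with respective probabilities $\tfrac{r_1+1-\gamma_1}{2r_1}$, $\tfrac{\gamma_1-1}{r_1}$, $\tfrac{r_1+1-\gamma_1}{2r_1}$. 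Averaging $Z_i$ against these, the outer two pieces combine cleanly (the middle ``full-edge'' contribution $2$ playing the role that $2\gamma$ played in Lemma~\ref{lem:volNV}), and after simplification
\[
g(r_1)=-\frac{r_1}{2}+1+\gamma_1-\frac{(\gamma_1-1)^2}{2r_1},
\]
which has the same shape as the corresponding quantity in Lemma~\ref{lem:volNV}.

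It then remains to take the expectation over $r_1\sim_U(\gamma_2,1]$. Exactly as in Lemma~\ref{lem:volNV}, one replaces $r_1$ by its mean $\tfrac{1+\gamma_2}{2}$ in the polynomial part of $g$ and uses $\E_{r_1\sim_U(\gamma_2,1]}[1/r_1]=\tfrac{-\ln\gamma_2}{1-\gamma_2}$, so that $g$ contributes the factor $-\tfrac{1+\gamma_2}{4}+1+\gamma_1+\tfrac{(\gamma_1-1)^2}{2(1-\gamma_2)}\ln\gamma_2=\tfrac{3-\gamma_2}{4}+\gamma_1+\tfrac{(\gamma_1-1)^2}{2(1-\gamma_2)}\ln\gamma_2$; combined with the factor $\tfrac{1-\gamma_2}{2}+\gamma_1$ from $Z_1$ this yields the claimed expression for $\E[V_{\vect r}]$, and dividing by $|\balli_n|=2^n$ gives the normalized form. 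Part~(2) is then the limit $\gamma_2\to1$: using $\lim_{x\to1^-}\tfrac{\ln x}{1-x}=-1$ we get $\tfrac{(\gamma_1-1)^2}{4(1-\gamma_2)}\ln\gamma_2\to-\tfrac{(\gamma_1-1)^2}{4}$ and $\tfrac{3-\gamma_2}{8}\to\tfrac14$, giving the stated value. The only place any care is required is the three-way split for $Z_i$ and verifying that its average collapses to the displayed $g(r_1)$; everything else is identical bookkeeping to Lemma~\ref{lem:volNV}, and as a consistency check, formally setting $\gamma_1=\gamma_2=\gamma$ recovers that lemma's formula since $\tfrac{1-\gamma}{2}+\gamma=\tfrac{1+\gamma}{2}$ and $\tfrac{3-\gamma}{4}+\gamma=\tfrac34(1+\gamma)$.
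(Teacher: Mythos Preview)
Your proposal is correct and follows essentially the same approach as the paper's own proof: the same symmetry reduction to $E_1$, the same three-way split of $[-r_1,r_1]$ for the generic coordinate (with the middle interval now $[1-\gamma_1,\gamma_1-1]$ contributing edge length $2$), the same closed form $g(r_1)=1+\gamma_1-\tfrac{r_1}{2}-\tfrac{(\gamma_1-1)^2}{2r_1}$, and the same averaging over $r_1\sim_U(\gamma_2,1]$ by substituting $\E[r_1]$ and $\E[1/r_1]$ term-by-term. Your explicit regime assumption $\gamma_1\le 1+\gamma_2$ (needed so that $r_1-\gamma_1\ge -1$ and the three-way split is valid) and the consistency check against Lemma~\ref{lem:volNV} are nice additions that the paper leaves implicit.
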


\begin{proof}
 The proof is similar to Lemma \ref{lem:volNV} and we use similar notations.
 
 Here $\vect{r} \sim_U \balli_n(\gamma_2,1)$, so $r_1 \in [-1,-\gamma_2) \cup (\gamma_2,1]$ and
 $V_{\vect{r}} = \{ \vect{y}:\max\{-\gamma_1+r_i,-1\} \leq y_i \leq \min\{\gamma_1+r_i,1\} \} $.
 \begin{eqnarray}
  \E_{\vect{r}} [V_{\vect{r}}|E_1] &=& \frac{1}{2}
  \E_{r_1 \sim_U (\gamma_2,1]} [Z_1|E_1] \prod_{i=2}^n \E_{r_i \sim_U [-r_1,r_1]} [Z_i|Z_1,E_1] 	\nonumber	\\
  &+& \frac{1}{2}\E_{r_1 \sim_U [-1,-\gamma_2)} [Z_1|E_1] 
  \prod_{i=2}^n \E_{r_i \sim_U [r_1,-r_1]} [Z_i|Z_1,E_1]	  \nonumber
 \end{eqnarray}
 If $r_1 \in (\gamma_2,1]$ then $-\gamma_1+r_1 \leq y_1 \leq 1$ and thus 
 $\E_{r_1 \sim_U (\gamma_2,1]} [Z_1] = \E_{r_1 \sim_U (\gamma_2,1]} [1+\gamma_1-r_1] $.
 
 \begin{eqnarray}
  \small
  \E_{r_i \sim_U [-r_1,r_1]} [Z_i|Z_1] &=& \Pr[r_i \in [-r_1,1-\gamma_1]] \E_{r_i \sim_U [-r_1,1-\gamma_1]} [Z_i|Z_1] 
  \nonumber	\\
    &+& \Pr[r_i \in [1-\gamma_1,-1+\gamma_1]] \E_{r_i \sim_U [1-\gamma_1,-1+\gamma_1]} [Z_i|Z_1] \nonumber \\
    &+& \Pr[r_i \in [-1+\gamma_1,r_1]] \E_{r_i \sim_U [-1+\gamma_1,r_1]} [Z_i|Z_1] 	\nonumber \\
    &=& (1+\gamma_1) - \frac{r_1}{2} - \frac{(\gamma_1-1)^2}{2r_1}	\nonumber
 \end{eqnarray}
 So
 \begin{eqnarray}
  \small
 &&  \E_{r_1 \sim_U (\gamma_2,1]} [Z_1|E_1] 
 \prod_{i=2}^n \E_{r_i \sim_U [-r_1,r_1]} [Z_i|Z_1,E_1]	\nonumber \\
  &=&  \E_{r_1 \sim_U (\gamma_2,1]} \Big[1+\gamma_1-r_1\Big] 
  \Big[ (1+\gamma_1) - \frac{r_1}{2} - \frac{(\gamma_1-1)^2}{2r_1} \Big]^{n-1}	\nonumber \\
  &=& \Big[ (1+\gamma_1)-\frac{1+\gamma_2}{2} \Big]
  \Big[ (1+\gamma_1)-\frac{1+\gamma_2}{4} + \frac{(\gamma_1-1)^2}{2} \frac{\ln \gamma_2}{1-\gamma_2} \Big]^{n-1} \nonumber \\
  &=& \Big[\frac{1-\gamma_2}{2}+\gamma_1 \Big]\Big[ \frac{3-\gamma_2}{4}+\gamma_1 +\frac{(\gamma_1-1)^2}{2(1-\gamma_2)} 
   \ln \gamma_2 \Big]^{n-1}	\label{eqn:vol2} 
 \end{eqnarray}
 
 Similarly $\E_{r_1 \sim_U [-1,-\gamma_2)} [Z_1|E_1] \prod_{i=2}^n \E_{r_i \sim_U [r_1,-r_1]} [Z_i|Z_1,E_1] =
 \Big[\frac{1-\gamma_2}{2}+\gamma_1 \Big]\Big[ \frac{3-\gamma_2}{4}+\gamma_1 +\frac{(\gamma_1-1)^2}{2(1-\gamma_2)} 
   \ln \gamma_2 \Big]^{n-1} $ 
   
 Thus 
 \item \begin{eqnarray}
   \E_{\vect{r} \sim_U \balli_n(\gamma_2 ,1)} \Big[ V_{\vect{r}} \Big] = 
  \Big[\frac{1-\gamma_2}{2}+\gamma_1 \Big]\Big[ \frac{3-\gamma_2}{4}+\gamma_1 +\frac{(\gamma_1-1)^2}{2(1-\gamma_2)} 
   \ln \gamma_2 \Big]^{n-1}	\nonumber
 \end{eqnarray}
 and the theorem follows by noting that $|\balli_n| = 2^n$.
 
\end{proof}

%---------------------------------------------
The following result gives a bound on the size of intersection of two balls of a given radius in the $\ell_\infty$ norm.

%--------------------------
\begin{lemma}
Let $\vect{v} = (v_1, v_2, \ldots, v_n) \in \R^n$, and let $a > 0$ be such that $2a \ge \| \vect{v}\|_{\infty}$.  
 Let $D = \balli_n(\vect{0},a) \cap \balli_n(\vect{v},a)$.  
 Then,
 \[
 |D| = \prod_{i=1}^n (2a - |v_i|) \;.
 \]
 \label{lem:overlap}
\end{lemma}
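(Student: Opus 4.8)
The plan is to compute the volume of the intersection coordinate by coordinate, exploiting the fact that an $\ell_\infty$ ball is an axis-aligned box, so the intersection of two $\ell_\infty$ balls is again an axis-aligned box (possibly empty) whose volume factors as a product over the $n$ coordinates. Concretely, writing $\balli_n(\vect{0},a) = \{\vect{y} : |y_i| \le a \text{ for all } i\}$ and $\balli_n(\vect{v},a) = \{\vect{y} : |y_i - v_i| \le a \text{ for all } i\}$, a point $\vect{y}$ lies in $D$ iff $\max\{-a, v_i - a\} \le y_i \le \min\{a, v_i + a\}$ for every $i$. Hence $D = \prod_{i=1}^n [\, \max\{-a, v_i - a\},\ \min\{a, v_i + a\}\,]$ as a product of intervals, and $|D| = \prod_{i=1}^n \ell_i$ where $\ell_i$ is the length of the $i$-th interval (interpreted as $0$ if the interval is empty).

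The next step is to evaluate $\ell_i = \min\{a, v_i + a\} - \max\{-a, v_i - a\}$ and show it equals $2a - |v_i|$ under the hypothesis $|v_i| \le \|\vect{v}\|_\infty \le 2a$. I would split into the two cases $v_i \ge 0$ and $v_i \le 0$. If $v_i \ge 0$, then $\min\{a, v_i + a\} = a$ (since $v_i \ge 0$) and $\max\{-a, v_i - a\} = v_i - a$ (since $v_i - a \ge -a$), giving $\ell_i = a - (v_i - a) = 2a - v_i = 2a - |v_i|$; the hypothesis $v_i \le 2a$ guarantees this is nonnegative so the interval is genuinely nonempty and contributes this length rather than $0$. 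The case $v_i \le 0$ is symmetric: $\min\{a, v_i + a\} = v_i + a$ and $\max\{-a, v_i - a\} = -a$, giving $\ell_i = 2a + v_i = 2a - |v_i|$. Taking the product over $i$ yields $|D| = \prod_{i=1}^n (2a - |v_i|)$.

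There is no real obstacle here; the only point requiring a little care is the role of the hypothesis $2a \ge \|\vect{v}\|_\infty$, which ensures every coordinate interval is nonempty so that the naive product formula is valid (without it, $D$ could be empty while $\prod_i (2a - |v_i|)$ is a product of some negative factors, which would be meaningless). I would state this explicitly so the reader sees why the hypothesis is exactly what is needed, and then the computation is immediate.
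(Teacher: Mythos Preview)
Your proof is correct and follows essentially the same approach as the paper's: both observe that the intersection of two axis-aligned $\ell_\infty$ balls is a hyperrectangle and compute the side lengths as $2a - |v_i|$. Your version is simply more detailed, spelling out the case analysis on the sign of $v_i$ and the role of the hypothesis $2a \ge \|\vect{v}\|_\infty$, whereas the paper states these facts without justification.
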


\begin{proof}
 It is easy to see that the intersection of two balls in the $\ell_\infty$ norm, i.e., hyperrectangles, is also a hyperrectangle. For all $i$, the length of the $i$-th side of this hyperrectangle is $2a - |v_i|$. The result follows. 
 \end{proof}

%-----------------------------------------------------------
%	SAPi
%----------------------------------------------------------------------
\section{A faster algorithm for $\svpi$}
\label{sec:svpi}

In this section we present an algorithm for $\svpi$ that uses the framework of AKS algorithm \cite{2001_AKS} but uses
a different sieving procedure that yields a faster running time. Using Lemma~\ref{lem:LLL}, we can obtain an estimate 
$\lambda^*$ of $\mini_1(\cL)$ such that $\mini_1(\cL) \le \lambda^* \le 2^n \cdot \mini_1(\cL)$. 
Thus, if we try polynomially many different values of $\lambda = (1+1/n)^{-i} \lambda^*$, for $i \ge 0$, then for one of them,
we have $\mini_1(\cL) \le \lambda \le (1+1/n) \cdot \mini_1(\cL)$
For the rest of this section, we assume that we know a guess $\lambda$ of the length of the shortest vector in $\cL$, which 
is correct upto a factor $1 + 1/n$. 

AKS algorithm initially samples uniformly a lot of perturbation vectors, $\vect{e} \in \balli_n(d)$, where $d \in \real_{>0}$
and for each such perturbation vector, maintains a vector $\vect{y}$ close to the lattice, ($\vect{y}$ is such that 
$\vect{y}-\vect{e} \in \cL$).
%This is done due to difficulty in analysis caused by distributions over $\cL \cap \balli_n(R)$, where $R \in \real_{>0}$.
Thus, initially we have a set $S$ of many such pairs $(\vect{e},\vect{y}) \in \balli_n(d) \times \balli_n(R')$ for some 
$R' \in 2^{O(n)}$. 
The desired situation is that after a polynomial number of such sieving iterations we are left with a set of vector pairs
$(\vect{e}'',\vect{y}'')$ such that $\vect{y}''-\vect{e}'' \in \cL \cap \balli_n(O(\mini_1(\cL)))$. 
Finally we take pair-wise differences of the lattice vectors corresponding to the remaining vector pairs and output the one
with the smallest non-zero norm. 
It was shown in~\cite{2001_AKS} that with overwhelming probability, this is the shortest vector in the lattice. 

One of the main and usually the most expensive step in this algorithm is the sieving procedure, where given a list of vector 
pairs $(\vect{e},\vect{y}) \in \balli_n(d) \times \balli_n(R)$ in each iteration, it outputs a list of vector pairs 
$(\vect{e}',\vect{y}') \in \balli_n(d) \times \balli_n(\gamma R)$ where $\gamma \in \real_{(0,1)}$. 
In each sieving iteration, a number of vector pairs (usually exponential in $n$) are identified as ``centre pairs''.
The second element of each such centre pair is referred to as ``centre''.
By a well-defined map each of the remaining vector pair is associated to a ``centre pair'' such that after certain operations
(like subtraction) on the vectors, we get a pair with vector difference yielding a lattice vector of norm less than $R$.
If we start an iteration with say $N'$ vector pairs and identify $|C|$ number of centre pairs, then the output consists 
of $N'-|C|$ vector pairs. 
In the original AKS algorithm \cite{2001_AKS} and most of its variants, the running time of this sieving procedure, 
which is the dominant part of the total running time of the algorithm, is roughly quadratic in the number of sampled vectors.

To reduce the running time in $\ell_{\infty}$ norm we use a different sieving approach.
Below we give a brief description of the sieving procedure (Algorithm \ref{alg:multSieve}).
The details can be found in Algorithm \ref{alg:multI} and its two sub-routines, Algorithm \ref{alg:sample} and 
Algorithm \ref{alg:multSieve}.

We partition the interval $[-R,R]$ into $\ell=1+\Big\lfloor\frac{2}{\gamma}\Big\rfloor$ intervals of length $\gamma R$. 
The intervals are $[-R,-R+\gamma R), [-R+\gamma R,-R+2\gamma R),\ldots [-R+(\ell-1)\gamma R, R]$. 
(Note that the last interval may be smaller than the rest.)
The ball $[-R,R]^n$ can thus be partitioned into $\left(1+\Big\lfloor\frac{2}{\gamma}\Big\rfloor\right)^n$ regions, such 
that no two vectors in a region are at a distance greater than $\gamma R$ in the $\ell_\infty$ norm. 
A list $C$ of pairs is maintained where the first entry of each pair is an $n-$tuple or array and the second one, initialized 
as emptyset, is for storing a centre pair.
We can think of this $n-$ tuple as an index and call it the ``index-tuple''.

The intuition is the following.
We want to associate a vector pair $(\vect{e},\vect{y})$ to a centre pair $(\vect{e}_{\vect{c}},\vect{c})$ such that 
$\|\vect{y}-\vect{c}\|_{\infty} \leq \gamma R$.
Note that if $|y_i - c_i| \leq \gamma R$ for each $i=1, \ldots, n$, then this condition is satisfied. 
Since $-R \leq y_i \leq R$ for each $\vect{y}$, so we partitioned $[-R,R]$ into intervals of length $\gamma R$.
Given $\vect{y}$ we map it to its index-tuple $I_{\vect{y}}$ in linear time.
This index-tuple $I_{\vect{y}}$ is such that $I_{\vect{y}}[i] \in [\ell]$ (for $i=1, \ldots n$) 
and indicates the interval in which $y_i$ belong.
We can access $C[I_{\vect{y}}] $ (say) in constant time.

For each $(\vect{e},\vect{y})$ in the list $S$, if there exists a $(\vect{e}_{\vect{c}},\vect{c}) \in C[I_{\vect{y}}]$, i.e. 
$I_{\vect{y}}=I_{\vect{c}}$ (implying $\| \vect{y}-\vect{c} \|_{\infty} \leq \gamma R $) then we add 
$(\vect{e},\vect{y}-\vect{c}+\vect{e}_{\vect{c}})$ to the output list $S'$.
Else we add vector pair $(\vect{e},\vect{y})$ to $C[I_{\vect{y}}]$ as a centre pair.

Finally we return $S'$.

%-------------------------------------------------------------
\begin{algorithm}
 \caption{An exact algorithm for $\svpi$}
 \label{alg:multI}
 
 \KwIn{(i) A basis $\vect{B} = [\vect{b}_1, \ldots \vect{b}_n]$ of a lattice $\cL$, 
 (ii) $ 0 < \gamma <1 $, (iii) $\xi > 1/2$, 
 (iv) $\lambda \approx \mini_1(\cL)$ ,(v) $N \in \nat$ }
 \KwOut{A shortest vector of $\cL$ }
 
 $S \leftarrow \emptyset$ \;
 \For{$i=1$ to $N$}
 {
    $(\vect{e}_i,\vect{y}_i) \leftarrow \text{Sample}(\vect{B},\xi\lambda)$ 
    using Algorithm \ref{alg:sample}   \label{multI:sample} \; 
    $S \leftarrow S \cup \{ (\vect{e}_i,\vect{y}_i) \}$	\; 
 }
 $R \leftarrow n \max_i \|\vect{b}_i\|_{\infty} $ \;
 
 \For{$j = 1$ to $k= \Big\lceil \log_{\gamma} \Big( \frac{\xi}{nR(1-\gamma)} \Big) \Big\rceil$ }
 {
    $S \leftarrow \text{sieve}(S,\gamma,R,\xi)$ using Algorithm \ref{alg:multSieve} \;
    $R \leftarrow \gamma R + \xi \lambda$ 	\label{multI:R}	\; 
 }	\label{multI:sieveEnd}
 
    Compute the non-zero vector $\vect{v}_0$  in  
    $ \{ (\vect{y}_i-\vect{e}_i)-(\vect{y}_j-\vect{e}_j) : (\vect{e}_i,\vect{y}_i),(\vect{e}_j,\vect{y}_j) 
    \in S \}$  with the smallest $\ell_\infty$ norm \label{multI:short} \; 
    \Return $\vect{v}_0$ \;

\end{algorithm}

%-------
\begin{algorithm}
 \caption{Sample}
 \label{alg:sample}
 
 \KwIn{(i) A basis $\vect{B} = [\vect{b}_1, \ldots \vect{b}_n]$ of a lattice $\cL$, (ii) $d \in \real_{>0}$}
 \KwOut{A pair $ (\vect{e},\vect{y}) \in \balli_n(\vect{0},d) \times \fpar(\vect{B}) $ 
 such that $\vect{y}-\vect{e} \in \cL$ }
 
 $ \vect{e} \leftarrow_{\text{uniform}} \balli_n(\vect{0},d) $ \; \label{sample:v}
 $ \vect{y} \leftarrow \vect{e} \mod \fpar(\vect{B}) $ \;  \label{sample:y}
 
 \Return $(\vect{e},\vect{y})$ \;
 
\end{algorithm}

%-------------------------------
\begin{algorithm}
 \caption{A faster sieve for $\ell_{\infty}$ norm}
 \label{alg:multSieve}
 
 \KwIn{(i) Set $S = \{ (\vect{e}_i,\vect{y}_i) : i \in I \} \subseteq \balli_n(\xi\lambda) \times \balli_n(R)$ such that
 $\forall i \in I, \quad \vect{y}_i-\vect{e}_i \in \cL$,
 (ii) A triplet $(\gamma,R,\xi)$  }
 \KwOut{A set $S' = \{ (\vect{e'}_i,\vect{y'}_i) : i \in I' \} \subseteq \balli_n(\xi\lambda) \times 
 \balli_n(\gamma R+\xi\lambda)$ such that  $ \forall i \in I', \quad \vect{y'}_i-\vect{e'}_i \in \cL$}
 
 $R \leftarrow \max_{(\vect{e,y}) \in S} \|\vect{y}\|_{\infty} $ \;
 $\ell = 1+\Big\lfloor \frac{2}{\gamma} \Big\rfloor$	\;
 $ S' \leftarrow \emptyset$	\;
 $C \leftarrow \bigcup_{i_1=1}^{\ell} \bigcup_{i_2=1}^{\ell}\ldots \bigcup_{i_n=1}^{\ell} 
 \{ ((i_1,i_2, \ldots i_n), \emptyset) \}$	\;
 
 \For{$(\vect{e},\vect{y}) \in S$}
 {
    \eIf{$\|\vect{y}\|_{\infty} \leq \gamma R$}
    {
      $S' \leftarrow S' \cup \{ (\vect{e}, \vect{y}) \}$ \;
    }
    {
      $I \leftarrow \emptyset$	\;
      \For{$i=1, \ldots n$}
      {
	Find the integer $j$ such that $(j-1) \leq \frac{y_i+R}{\gamma R} < j$	\;
	$I[i] = j$	\;
      }
      %$C_n = C[I]$	\;
      \eIf{$\exists (\vect{e}_{\vect{c}},\vect{c}) \in C[I]$ 	 \label{multSieve:compare}}
      {
	$S' \leftarrow S' \bigcup \{(\vect{e},\vect{y}-\vect{c}+\vect{e}_{\vect{c}}) \}$  \label{multSieve:reduce}	\;
      }
      {
	$C[I] \leftarrow C[I] \bigcup \{ (\vect{e},\vect{y}) \}$	\;	\label{multSieve:addCentre}
      }
      
    }
 }
 
 \Return $S'$	\;
\end{algorithm}
%----------------------------------------------------
\begin{comment}
\begin{algorithm}
 \caption{Map}
 \label{alg:multSieve_map}
 
 \KwIn{(i) $\vect{y}$, (ii) $\gamma$, (iii) $R$}
 \KwOut{A tuple/array $I$}
 
 $I \leftarrow \emptyset; \qquad \qquad$ 
 \% Think of tuples as arrays or indexed sets \%	\\
 
 \For{$i=1, \ldots n$}
 {
    Find the integer $j$ such that $(j-1) \leq \frac{y_i+R}{\gamma R} < j$	\;
    $I[i] = j$	\;
 }
 
 \Return $I$	\;
\end{algorithm}
\end{comment}

%---------------------------------------------------------
\begin{lemma}
  Let $\gamma \in \real_{(0,1)}$ and $ R \in \real_{>0}$. The number of centre pairs in Algorithm \ref{alg:multSieve} always 
  satisfies $|C| \leq 2^{c_c n}$ where $c_c = \log \Big(1+\Big\lfloor \frac{2}{\gamma} \Big\rfloor\Big)$.
  
 \label{lem:multCentre}
\end{lemma}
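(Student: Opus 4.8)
The plan is a direct counting argument that exploits the grid structure the sieve imposes on the box $[-R,R]^n$. First I would note that a centre pair is added to $C$ only at line~\ref{multSieve:addCentre} of Algorithm~\ref{alg:multSieve}, and when this happens the centre is stored in the slot $C[I]$ indexed by the tuple $I=(I[1],\ldots,I[n])$ built in the preceding \textbf{for} loop. Because the first line of the sieve sets $R=\max_{(\vect{e},\vect{y})\in S}\|\vect{y}\|_{\infty}$, every $\vect{y}$ processed satisfies $y_i+R\in[0,2R]$, so $\frac{y_i+R}{\gamma R}\in[0,\tfrac{2}{\gamma}]$ and the integer $j=I[i]$ determined by $j-1\le\frac{y_i+R}{\gamma R}<j$ lies in $\{1,\ldots,1+\lfloor\tfrac{2}{\gamma}\rfloor\}=[\ell]$. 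Hence every index-tuple arising in the run belongs to $[\ell]^n$, giving at most $\ell^n$ slots.

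Next I would argue that each slot $C[I]$ ever holds at most one centre pair. Indeed, the first time a vector pair with index-tuple $I$ reaches line~\ref{multSieve:compare} with $C[I]=\emptyset$, a single centre is inserted at line~\ref{multSieve:addCentre}; from that point on $C[I]\ne\emptyset$, so every later vector pair with index-tuple $I$ passes the test at line~\ref{multSieve:compare} and is reduced at line~\ref{multSieve:reduce} without any new centre being added. Vector pairs with $\|\vect{y}\|_{\infty}\le\gamma R$ are routed straight to $S'$ and never touch $C$ at all. Combining the two observations, $|C|\le\bigl|\{I:C[I]\ne\emptyset\}\bigr|\le\ell^n=\Bigl(1+\bigl\lfloor\tfrac{2}{\gamma}\bigr\rfloor\Bigr)^n=2^{c_c n}$ with $c_c=\log\bigl(1+\lfloor\tfrac{2}{\gamma}\rfloor\bigr)$, which is the claim.

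There is no genuinely hard step; the only points needing a moment of care are the boundary behaviour of the index map --- in particular checking that the extreme value $y_i=R$ (which uses the possibly-shorter last interval $[-R+(\ell-1)\gamma R,\,R]$) still returns $j=\ell$ rather than $j=\ell+1$, both when $2/\gamma$ is an integer and when it is not --- together with confirming that the pairs sent through the $\|\vect{y}\|_{\infty}\le\gamma R$ branch genuinely do not need to be counted. Both are immediate from the definitions, so the bound follows with no further computation.
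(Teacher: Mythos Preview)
Your proposal is correct and follows exactly the same counting argument as the paper's own proof: both observe that the index-tuples lie in $[\ell]^n$ with $\ell=1+\lfloor 2/\gamma\rfloor$ and that at most one centre pair is ever stored per tuple, giving $|C|\le\ell^n=2^{c_c n}$. Your write-up is simply more explicit about the boundary cases of the index map and the one-centre-per-slot invariant, but the approach is identical.
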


\begin{proof}
 For each of the $n$ co-ordinates we partitioned the range $[-R,R]$ into intervals of length $\gamma R$, i.e.
 $[-R,-R+\gamma R),[-R+\gamma R,-R+2\gamma R), \ldots $.
 There can be $\ell = 1+\Big\lfloor \frac{2}{\gamma} \Big\rfloor$ such intervals in each co-ordinate.
 Thus in $C$ the number of $n-$tuples, i.e. the index set is of cardinality 
 $\Big(1+\Big\lfloor \frac{2}{\gamma} \Big\rfloor\Big)^{n}$ and hence the theorem follows.
 
\end{proof}

%--------------------------------------------------
\begin{claim}
 The following two invariants are maintained in Algorithm \ref{alg:multI}:
 \begin{enumerate}
  \item $\forall (\vect{e},\vect{y}) \in S, \quad \vect{y} - \vect{e} \in \cL$.
  \item $\forall (\vect{e}, \vect{y}) \in S, \quad ||\vect{y}||_{\infty} \leq R$.
 \end{enumerate}
\label{claim:app_multI_invariant}
\end{claim}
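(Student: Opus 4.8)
The plan is to prove Claim~\ref{claim:app_multI_invariant} by induction on the operations performed by Algorithm~\ref{alg:multI}, tracking how the set $S$ evolves. Both invariants should be established simultaneously, since the second invariant (norm bound) depends on the value of $R$ that is updated alongside $S$ in line~\ref{multI:R}.

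First I would check the base case. After the initial sampling loop (line~\ref{multI:sample}), each pair $(\vect{e}_i, \vect{y}_i)$ is produced by Algorithm~\ref{alg:sample}, which sets $\vect{y}_i \equiv \vect{e}_i \bmod \fpar(\vect{B})$, so by definition of reduction mod the fundamental parallelepiped we have $\vect{y}_i - \vect{e}_i \in \cL$; this gives invariant~1. For invariant~2, after sampling we set $R \leftarrow n \max_i \|\vect{b}_i\|_\infty$, and since $\vect{y}_i \in \fpar(\vect{B})$ we have $\|\vect{y}_i\|_\infty \le n \|\vect{B}\|_\infty = R$ by the triangle-inequality bound noted right after the definition of $\fpar(\vect{B})$ in the preliminaries.

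Next I would handle the inductive step, which is one pass of the sieving loop (lines with Algorithm~\ref{alg:multSieve} followed by the update $R \leftarrow \gamma R + \xi\lambda$). Assume both invariants hold for the current $S$ and current $R$. Algorithm~\ref{alg:multSieve} first resets $R$ to $\max_{(\vect{e},\vect{y}) \in S}\|\vect{y}\|_\infty$, which only decreases $R$, so invariant~2 is preserved going into the sieve body, and this is the value with respect to which the partition is built. For each pair $(\vect{e},\vect{y}) \in S$, there are two cases. If $\|\vect{y}\|_\infty \le \gamma R$, the pair is passed to $S'$ unchanged, so invariant~1 is trivially preserved and $\|\vect{y}\|_\infty \le \gamma R \le \gamma R + \xi\lambda$. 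Otherwise, either $(\vect{e},\vect{y})$ becomes a new centre pair stored in $C[I]$ (and is dropped from the output, so nothing to check), or there is an existing centre pair $(\vect{e}_{\vect{c}},\vect{c}) \in C[I]$ with the same index tuple $I_{\vect{y}} = I_{\vect{c}}$, and we add $(\vect{e}, \vect{y} - \vect{c} + \vect{e}_{\vect{c}})$ to $S'$. For invariant~1: $(\vect{y} - \vect{c} + \vect{e}_{\vect{c}}) - \vect{e} = (\vect{y} - \vect{e}) - (\vect{c} - \vect{e}_{\vect{c}})$, which is a difference of two lattice vectors by the inductive hypothesis applied to both $(\vect{e},\vect{y})$ and the centre pair $(\vect{e}_{\vect{c}},\vect{c})$ (centre pairs are themselves drawn from $S$, so they satisfy the hypothesis), hence lies in $\cL$. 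For invariant~2: since $I_{\vect{y}} = I_{\vect{c}}$ means $y_i$ and $c_i$ lie in the same interval of length $\gamma R$ for every coordinate $i$, we get $\|\vect{y} - \vect{c}\|_\infty \le \gamma R$; combined with $\|\vect{e}_{\vect{c}}\|_\infty \le \xi\lambda$ (guaranteed by Algorithm~\ref{alg:sample}, since perturbations are never altered and every pair's first component remains in $\balli_n(\xi\lambda)$), the triangle inequality gives $\|\vect{y} - \vect{c} + \vect{e}_{\vect{c}}\|_\infty \le \gamma R + \xi\lambda$. Thus after the subsequent update $R \leftarrow \gamma R + \xi\lambda$, invariant~2 holds for the new $S$.

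I do not anticipate a genuine obstacle here; the argument is a routine induction. The one point requiring slight care is the bookkeeping observation that the perturbation component $\vect{e}$ of any pair is never modified throughout the algorithm (only the second component changes, via subtraction of centres and addition of the centre's perturbation), so the membership $\vect{e} \in \balli_n(\xi\lambda)$ is invariant and available whenever the triangle inequality in case~2 is invoked. I would state this explicitly as a preliminary remark before the case analysis so that the norm bound in the final subcase goes through cleanly.
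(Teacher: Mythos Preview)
Your proposal is correct and follows essentially the same argument as the paper's own proof: both establish the base case from the sampling routine (using $\vect{y}\in\fpar(\vect{B})$ for the norm bound and $\vect{y}\equiv\vect{e}\bmod\vect{B}$ for the lattice membership), and both handle the inductive step by observing that centre pairs were themselves in $S$ (so $\vect{c}-\vect{e}_{\vect{c}}\in\cL$) and that $I_{\vect{y}}=I_{\vect{c}}$ forces $\|\vect{y}-\vect{c}\|_\infty\le\gamma R$, whence the triangle inequality gives the new norm bound matching the updated $R$. Your version is in fact slightly more careful than the paper's, since you explicitly treat the short-vector branch $\|\vect{y}\|_\infty\le\gamma R$, note that the sieve's local reset of $R$ can only shrink it, and record that the perturbation component is never modified; none of these change the argument, but they make the induction airtight.
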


\begin{proof}
\begin{enumerate}
 \item The first invariant is maintained at the beginning of the sieving iterations in Algorithm \ref{alg:multI} 
 due to the choice of $\vect{y}$ at step \ref{sample:y} of Algorithm \ref{alg:sample}.
 
 Since each centre pair $(\vect{e}_{\vect{c}},\vect{c}) $ once belonged to $S$, so $\vect{c} - \vect{e}_{\vect{c}} \in \cL$.
 Thus at step \ref{multSieve:reduce} of the sieving procedure (Algorithm \ref{alg:multSieve}) we have 
 $(\vect{e}-\vect{y})+(\vect{c}-\vect{e}_{\vect{c}}) \in \cL$. 
 
 \item The second invariant is maintained at step \ref{multI:sample} of Algorithm \ref{alg:multI} because 
 $\vect{y} \in \fpar(\vect{B})$ and hence 
 $\|\vect{y}\|_{\infty} \leq \sum_{i=1}^n \|\vect{b}_i\|_{\infty} \leq n \max_i \|\vect{b}_i\|_{\infty} = R$.
 
 We claim that this invariant is also maintained in each iteration of the sieving procedure.
 
 Consider a pair $(\vect{e},\vect{y}) \in S$ and let $I_{\vect{y}}$ is its index-tuple.
 Let $(\vect{e}_{\vect{c}},\vect{c})$ is its associated centre pair.  
 By Algorithm \ref{alg:multSieve} we have $I_{\vect{y}} = I_{\vect{c}}$, i.e. $|y_i-c_i| \leq \gamma R$ (for $i=1, \ldots n$).
 So $\|\vect{y}-\vect{c}\|_{\infty} \leq \gamma R$ and hence
 \begin{eqnarray}
  \|\vect{y} - \vect{c} + \vect{e}_{\vect{c}} \|_{\infty} \leq 
  \|\vect{y} - \vect{c} \|_{\infty} + \|\vect{e}_{\vect{c}} \|_{\infty} 
  \leq \gamma R + \xi \lambda \nonumber
 \end{eqnarray}
 The claim follows by re-assignment of variable $R$ at step \ref{multI:R} in Algorithm \ref{alg:multI}.

\end{enumerate}
 
\end{proof}
 
In the following lemma, we bound the length of the remaining lattice vectors after all the sieving iterations are over.
%------------Final Radius-------------
\begin{lemma}
 At the end of $k$ iterations in Algorithm \ref{alg:multI} the length of lattice vectors 
 $ \|\vect{y}-\vect{e}\|_{\infty} \leq \frac{\xi (2-\gamma)\lambda}{1-\gamma}+\frac{\gamma \xi}{n(1-\gamma)} =: R'$.
 \label{lem:multI_finRad}
\end{lemma}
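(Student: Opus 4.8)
The plan is to track the evolution of the radius bound $R$ across the $k$ sieving iterations via the recurrence it satisfies, and then solve that recurrence in the limit of many iterations. By Claim~\ref{claim:app_multI_invariant}, after each call to \texttt{sieve} and the subsequent reassignment at step~\ref{multI:R} of Algorithm~\ref{alg:multI}, the quantity $R$ is updated as $R \mapsto \gamma R + \xi \lambda$. So if $R_0 = n \max_i \|\vect{b}_i\|_\infty$ is the initial value and $R_j$ is the value after $j$ iterations, then $R_j = \gamma^j R_0 + \xi \lambda (1 + \gamma + \cdots + \gamma^{j-1}) = \gamma^j R_0 + \xi \lambda \frac{1-\gamma^j}{1-\gamma}$. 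First I would write down this closed form and observe that it is the exact bound on $\|\vect{y}\|_\infty$ for all pairs in $S$ after $j$ iterations, by Claim~\ref{claim:app_multI_invariant}.

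Next I would plug in $j = k = \lceil \log_\gamma\left(\frac{\xi}{nR_0(1-\gamma)}\right)\rceil$. The point of this choice of $k$ is exactly that $\gamma^k \le \frac{\xi}{nR_0(1-\gamma)}$ (since $0 < \gamma < 1$, taking $\log_\gamma$ reverses the inequality, and the ceiling only makes $k$ larger hence $\gamma^k$ smaller). Therefore $\gamma^k R_0 \le \frac{\xi}{n(1-\gamma)}$. Combining with $\frac{1-\gamma^k}{1-\gamma} \le \frac{1}{1-\gamma}$, we get
\[
R_k \;\le\; \frac{\xi}{n(1-\gamma)} + \frac{\xi \lambda}{1-\gamma}.
\]
At this point $R_k$ bounds $\|\vect{y}\|_\infty$ for the remaining pairs; I then need to pass to the lattice vector $\vect{y} - \vect{e}$. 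Since $\vect{e} \in \balli_n(\xi\lambda)$ we have $\|\vect{y} - \vect{e}\|_\infty \le \|\vect{y}\|_\infty + \|\vect{e}\|_\infty \le R_k + \xi\lambda \le \frac{\xi}{n(1-\gamma)} + \frac{\xi\lambda}{1-\gamma} + \xi\lambda$. Finally I would simplify $\frac{\xi\lambda}{1-\gamma} + \xi\lambda = \frac{\xi\lambda(1-\gamma) + \xi\lambda}{1-\gamma} = \frac{\xi(2-\gamma)\lambda}{1-\gamma}$ and rewrite $\frac{\xi}{n(1-\gamma)}$ as $\frac{\gamma\xi}{n(1-\gamma)} + \frac{\xi(1-\gamma)}{n(1-\gamma)} = \frac{\gamma\xi}{n(1-\gamma)} + \frac{\xi}{n}$; the stated bound $R' = \frac{\xi(2-\gamma)\lambda}{1-\gamma} + \frac{\gamma\xi}{n(1-\gamma)}$ then follows up to the small additive slack $\frac{\xi}{n}$, which I would absorb (or account for by a slightly more careful bookkeeping of the geometric sum, noting $\frac{1-\gamma^k}{1-\gamma} < \frac{1}{1-\gamma}$ strictly gives room, or simply by noting the algorithm's statement already folds this term in).

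I do not anticipate a serious obstacle here: the argument is a one-line recurrence, its closed-form solution, and a substitution of the defining inequality for $k$. The only point requiring care is the direction of the inequality when taking $\log_\gamma$ of both sides (since $\gamma < 1$, $\log_\gamma$ is decreasing), and the bookkeeping to match the exact constant in the claimed $R'$ — in particular reconciling the $\frac{\xi}{n}$-type lower-order term. I would handle the latter by being slightly more precise about which $R$ (the pre- or post-update value, and whether the final step~\ref{multI:R} reassignment happens before or after exiting the loop) is being measured, so that the geometric series has exactly the right number of terms.
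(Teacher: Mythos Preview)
Your proposal is correct and follows essentially the same route as the paper: write the closed form $R_k=\gamma^k R_0+\xi\lambda\frac{1-\gamma^k}{1-\gamma}$, use the defining inequality for $k$ to bound $\gamma^k R_0$, and then add $\xi\lambda$ for the perturbation via the triangle inequality. The lower-order discrepancy $\frac{\xi}{n}$ you flagged is real; the paper's intermediate step simultaneously uses $\gamma^k R\le\frac{\xi\gamma}{n(1-\gamma)}$ and $\gamma^k\ge\frac{\xi}{nR(1-\gamma)}$, which cannot both hold, so your cleaner bookkeeping is actually more careful than the paper's, and the extra $\frac{\xi}{n}$ is harmlessly absorbed into the $o(1)$ terms used downstream.
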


\begin{proof}
 Let $R_k$ is the value of $R$ after $k$ iterations, where 
 $\log_{\gamma} \Big( \frac{\xi}{nR(1-\gamma)} \Big) \leq k \leq \log_{\gamma} \Big( \frac{\xi}{nR(1-\gamma)} \Big) + 1$.
 
 Then
 \begin{eqnarray}
  R_k &=& \gamma^k R + \sum_{i=1}^k \gamma^{k-1} \xi \lambda  \nonumber \\
    &=& \gamma^k R + \frac{1-\gamma^k}{1-\gamma} \xi \lambda	\nonumber \\
    &\leq& \frac{\xi \gamma}{n(1-\gamma)} + \frac{\xi \lambda}{1-\gamma} \Big[ 1-\frac{\xi}{nR(1-\gamma)} \Big] 
    \nonumber
 \end{eqnarray}
Thus after $k$ iterations, $\| \vect{y}\|_{\infty} \leq R_k$ and hence after $k$ iterations
\begin{eqnarray}
 \| \vect{y}-\vect{e}\|_{\infty} &\leq& \| \vect{y}\|_{\infty} + (\|-\vect{e}\|_{\infty})  \nonumber \\
    &\leq& R_k + \xi \lambda \nonumber \\  
    &\leq& \frac{\xi \gamma}{n(1-\gamma)} + \frac{\xi \lambda}{1-\gamma} 
      \Big[ 2-\gamma-\frac{\xi }{nR(1-\gamma)} \Big] 	\nonumber \\
     &\leq& \frac{\xi \gamma}{n(1-\gamma)} + \frac{\xi \lambda (2-\gamma)}{1-\gamma}	     \nonumber \\
     &=& \frac{(2-\gamma)\xi \lambda}{1-\gamma} + \frac{\gamma \xi}{n(1-\gamma)} 		\nonumber
\end{eqnarray}
\end{proof}

Using Lemma \ref{lem:latBall} and assuming $\lambda \approx \mini_1$ we get an upper bound on the number of 
lattice vectors of length at most $R'$, i.e. $|\balli_n(R') \cap \cL| \leq 2^{c_b n+o(n)} $, where 
$c_b = \log \Big(1+\Big\lfloor\frac{2\xi(2-\gamma)}{1-\gamma}\Big\rfloor\Big)$.

%-------------------------------
The above lemma along with the invariants imply that at the beginning of step \ref{multI:short} in Algorithm \ref{alg:multI} 
we have ``short'' lattice vectors, i.e. vectors with norm bounded by $R'$.
We want to start with ``sufficient number'' of vector pairs so that we do not end up with all zero vectors at the end of the 
sieving iterations.
For this we work with the following conceptual modification proposed by Regev \cite{2009_R1}.

Let $\vect{u} \in \cL$ such that $ \|\vect{u}\|_{\infty} = \mini_1(\cL) \approx \lambda$ (where $2 < \mini_1(\cL) \leq 3$), 
$D_1 = \balli_n(\xi \lambda) \cap \balli_n(-\vect{u},\xi \lambda)$ and
$D_2 = \balli_n(\xi \lambda) \cap \balli_n(\vect{u},\xi \lambda)$.
Define a bijection $\sigma$ on $\balli_n(\xi \lambda)$ that maps $D_1$ to $D_2$, $D_2 \setminus D_1$ to $D_1 \setminus D_2$ and 
$\balli_n(\xi \lambda) \setminus (D_1 \cup D_2)$ to itself :
\begin{eqnarray}
 \sigma(\vect{e}) = 
      \begin{cases}
       \vect{e} + \vect{u} & \text{ if } \vect{e} \in D_1	\\
       \vect{e} - \vect{u} & \text{ if } \vect{e} \in D_2 \setminus D_1	\\
       \vect{e} & \text{ else }
      \end{cases}
\nonumber
\end{eqnarray}
For the analysis of the algorithm, we assume that for each perturbation vector $\vect{e}$ chosen by our algorithm, we replace 
$\vect{e}$ by $\sigma(\vect{e})$ with probability $1/2$ and it remains unchanged with probability $1/2$. 
We call this procedure {\em tossing} the vector $\vect{e}$. 
Further, we assume that this replacement of the perturbation vectors happens at the step where for the first time this has any effect on the algorithm. 
In particular, at step \ref{multSieve:addCentre} in Algorithm \ref{alg:multSieve}, after we have identified a centre pair 
$(\vect{e}_c,\vect{c})$ we apply $\sigma$ on $\vect{e}_c$ with probability $1/2$.
Then at the beginning of step \ref{multI:short} in Algorithm \ref{alg:multI} we apply $\sigma$ to $\vect{e}$ 
for all pairs $(\vect{e}, \vect{y}) \in S$.
The distribution of $\vect{y}$ remains unchanged by this procedure because $\vect{y} \equiv \vect{e} \mod \fpar(\vect{B})$
and $\vect{y}-\vect{e} \in \cL$.
A somewhat more detailed explanation of this can be found in the following result of \cite{2009_BN}.
\begin{lemma}[\textbf{Theorem 4.5 in } \cite{2009_BN} (re-stated)]
 The modification outlined above does not change the output distribution of the actual procedure.
\end{lemma}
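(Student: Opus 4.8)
The plan is to compare the deferred‑tossing procedure with Algorithm~\ref{alg:multI} through an intermediate ``toss‑at‑start'' algorithm, and to argue (i) that tossing each perturbation vector immediately after it is sampled does not change the output distribution, and (ii) that postponing those tosses to the moments described in the text gives, on the same randomness, an \emph{identical} execution. Two properties of $\sigma$ are all that is needed. First, $\sigma$ is a volume‑preserving bijection of $\balli_n(\xi\lambda)$; the geometric input here is $|D_1| = |D_2|$, which follows from Lemma~\ref{lem:overlap} with $a=\xi\lambda$ and $\vect v = \pm\vect u$ (the standing hypotheses $\xi>1/2$ and $\|\vect u\|_\infty = \mini_1(\cL)\le\lambda$ give $2\xi\lambda\ge\|\vect u\|_\infty$). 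Second, $\sigma(\vect e)-\vect e\in\{-\vect u,\vect 0,\vect u\}\subseteq\cL$ for every $\vect e$, so $\sigma(\vect e)\equiv\vect e \pmod{\fpar(\vect B)}$; hence replacing $\vect e$ by $\sigma(\vect e)$ leaves unchanged the vector $\vect e\bmod\fpar(\vect B)$ produced at line~\ref{sample:y} of Algorithm~\ref{alg:sample}, and more generally keeps any pair $(\vect e,\vect y)$ with $\vect y-\vect e\in\cL$ consistent.

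For Step~1 I would introduce the algorithm $\cA$ that samples $\vect e_1,\dots,\vect e_N\sim_U\balli_n(\xi\lambda)$ and independent fair bits $b_1,\dots,b_N$, sets $\vect e_i^{(0)}:=\sigma(\vect e_i)$ if $b_i=1$ and $\vect e_i^{(0)}:=\vect e_i$ otherwise, and then runs Algorithm~\ref{alg:multI} verbatim with perturbation vectors $\vect e_i^{(0)}$. Since $\sigma$ is volume preserving, each $\vect e_i^{(0)}$ is again uniform on $\balli_n(\xi\lambda)$, so $(\vect e_i^{(0)})_i$ has the same joint law as $(\vect e_i)_i$; as the output of Algorithm~\ref{alg:multI} is a deterministic function of its perturbation vectors, $\cA$ and Algorithm~\ref{alg:multI} have the same output distribution.

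For Step~2, let $\cA'$ be the deferred‑tossing procedure, run on the same randomness $(\vect e_i,b_i)_{i\in[N]}$, where the single bit $b_i$ is consumed at the \emph{unique} time the perturbation vector of pair $i$ is tossed: at line~\ref{multSieve:addCentre} of Algorithm~\ref{alg:multSieve} when pair $i$ is made a centre (which, since a centre pair is removed from $S$ once created, occurs at most once in the whole run), or at line~\ref{multI:short} of Algorithm~\ref{alg:multI} if pair $i$ survives all $k$ sieve iterations. Both $\cA$ and $\cA'$ are then deterministic functions of $(\vect e_i,b_i)_i$, and I would show they are the \emph{same} function by induction on the sieve iteration $t=0,1,\dots,k$, with the invariant that after $t$ iterations the two runs have the same surviving pair‑labels and, for each surviving label, the same second ($\vect y$‑) component (their first components, $\vect e_i^{(0)}$ versus $\vect e_i$, are immaterial). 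The base case is the second property of $\sigma$. For the inductive step, process the pairs of an iteration in a common order (legitimate since we are coupling): every branch in Algorithm~\ref{alg:multSieve} — the reset of $R$, the test $\|\vect y\|_\infty\le\gamma R$, the index‑tuple $I$, and, using Lemma~\ref{lem:multCentre} so that $C[I]$ holds at most one pair, whether $C[I]$ is occupied — depends only on $\vect y$‑components and earlier decisions of the same iteration, hence is identical in both runs; in particular the same pairs become centres and the same pair–centre matchings occur. When pair $i$ is matched to a centre created from pair $j$, line~\ref{multSieve:reduce} sets the new $\vect y$‑component to $\vect y_i-\vect y_j+(\text{centre's }\vect e)$, which equals $\sigma^{b_j}(\vect e_j)$ in $\cA$ (it was tossed at the start) and also $\sigma^{b_j}(\vect e_j)$ in $\cA'$ (it was tossed, with the same bit $b_j$, when $j$ became a centre); when pair $i$ itself becomes a centre its $\vect e$‑component is $\vect e_i^{(0)}$ in $\cA$ and becomes $\sigma^{b_i}(\vect e_i)=\vect e_i^{(0)}$ in $\cA'$. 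Thus the invariant is restored. After $k$ iterations, line~\ref{multI:short} forms the same candidate set $\{(\vect y_i-\vect e_i^{(0)})-(\vect y_j-\vect e_j^{(0)})\}$ in both runs (in $\cA'$ the tossing at line~\ref{multI:short} turns each surviving $\vect e_i$ into $\vect e_i^{(0)}$), so the same $\vect v_0$ is returned.

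Combining the two steps, $\cA'$ and Algorithm~\ref{alg:multI} have the same output distribution. The substantive difficulty is not any one computation but the bookkeeping in Step~2: one must verify that ``the first step at which the replacement has any effect on the algorithm'' is precisely one of the two places where the text applies $\sigma$, so that each perturbation vector is tossed exactly once and the coupling bits are matched correctly between the two executions. The remaining ingredient — that $\sigma$ is a volume‑preserving bijection with $\sigma(\vect e)\equiv\vect e\pmod{\fpar(\vect B)}$ — is the content of Theorem~4.5 of~\cite{2009_BN} and may be cited.
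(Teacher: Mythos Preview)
The paper does not prove this lemma at all: it is stated as a restatement of Theorem~4.5 of~\cite{2009_BN} and is simply cited, with the surrounding text only remarking that the conceptual modification is for analysis and that its running time is immaterial. Your two–step coupling argument (toss at the start, then show by induction on sieve iterations that deferring the tosses to the centre–creation step and to line~\ref{multI:short} produces a pointwise identical execution) is exactly the standard proof from~\cite{2009_BN} and Regev's notes~\cite{2009_R1}, so what you have written is a correct and faithful expansion of the cited result rather than an alternative approach.

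One small point worth tightening: you justify ``$\sigma$ is a volume‑preserving bijection'' by invoking $|D_1|=|D_2|$ from Lemma~\ref{lem:overlap}, but that equality alone does not show that the explicit formula for $\sigma$ given in the text is a bijection (for instance, when $D_1\cap D_2\neq\emptyset$ the map as written need not send $D_2\setminus D_1$ into $D_1\setminus D_2$). What you actually need, and what the cited Theorem~4.5 supplies, is that \emph{some} measure‑preserving $\sigma$ on $\balli_n(\xi\lambda)$ exists with $\sigma(\vect e)-\vect e\in\{\vect 0,\pm\vect u\}$ and $\sigma(D_1)=D_2$; since you already flag that you are citing~\cite{2009_BN} for this ingredient, the gap is cosmetic rather than substantive.
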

Note that since this is just a conceptual modification intended for ease in analysis, we should not be concerned with the 
actual running time of this modified procedure.
Even the fact that we need a shortest vector to begin the mapping $\sigma$ does not matter.

The following lemma will help us estimate the number of vector pairs to sample at the beginning of the 
algorithm.
\begin{lemma}[\textbf{Lemma 4.7 in } \cite{2009_BN}]
 Let $N \in \nat$ and $q$ denote the probability that a random point in $\balli_n(\xi \lambda)$ is contained in 
 $D_1 \cup D_2$.
 If $N$ points $\vect{x}_1, \ldots \vect{x}_N$ are chosen uniformly at random in $\balli_n(\xi \lambda)$, 
 then with probability larger than $1-\frac{4}{qN}$, there are at least $\frac{qN}{2}$ points 
 $\vect{x}_i \in \{ \vect{x}_1, \ldots \vect{x}_N \}$ with the property $\vect{x}_i \in D_1 \cup D_2$.
 \label{lem:multI_goodPair}
\end{lemma}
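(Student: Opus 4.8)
The plan is to recognize this as a routine second-moment (Chebyshev) estimate applied to a sum of independent indicator variables. First I would introduce, for each $i \in \{1, \ldots, N\}$, the indicator random variable $X_i$ which equals $1$ if $\vect{x}_i \in D_1 \cup D_2$ and $0$ otherwise. Since the points $\vect{x}_1, \ldots, \vect{x}_N$ are sampled independently and uniformly from $\balli_n(\xi\lambda)$, and $D_1 \cup D_2$ is a fixed (measurable) subset of $\balli_n(\xi\lambda)$ that does not depend on the sampling, the $X_i$ are independent and each is Bernoulli with $\Pr[X_i = 1] = q$ by the very definition of $q$. Writing $X = \sum_{i=1}^N X_i$ for the number of sampled points that land in $D_1 \cup D_2$, the goal becomes to show $\Pr[X \ge qN/2] > 1 - \tfrac{4}{qN}$.

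Next I would compute the first two moments. Linearity of expectation gives $\E[X] = qN$, and independence gives $\mathrm{Var}(X) = \sum_{i=1}^N \mathrm{Var}(X_i) = Nq(1-q)$. Since $\{X < qN/2\} \subseteq \{|X - qN| \ge qN/2\}$, Chebyshev's inequality yields
\[
\Pr\!\Big[ X < \tfrac{qN}{2} \Big] \;\le\; \Pr\!\Big[ |X - qN| \ge \tfrac{qN}{2} \Big] \;\le\; \frac{\mathrm{Var}(X)}{(qN/2)^2} \;=\; \frac{4\,(1-q)}{qN} \;<\; \frac{4}{qN},
\]
where the strict inequality uses $q > 0$ (the statement is vacuous when $q = 0$). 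Taking complements, $\Pr[X \ge qN/2] > 1 - \tfrac{4}{qN}$, which is exactly the assertion, since $\{X \ge qN/2\}$ is precisely the event that at least $qN/2$ of the $\vect{x}_i$ lie in $D_1 \cup D_2$.

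I do not expect a genuine obstacle here: the argument is elementary once the right model is in place. The only points that need a moment's care are (i) confirming that the events $\{\vect{x}_i \in D_1 \cup D_2\}$ are genuinely independent — which holds because the sample points are drawn independently and $D_1, D_2$ are fixed geometric sets — and (ii) extracting the strict bound $1 - \tfrac{4}{qN}$ rather than a non-strict one, which comes for free from $\mathrm{Var}(X) = Nq(1-q) < Nq$.
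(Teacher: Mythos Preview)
Your argument is correct. Note that the paper does not supply its own proof of this lemma; it is quoted verbatim as Lemma~4.7 of~\cite{2009_BN}. Your Chebyshev/second-moment computation is exactly the standard way this statement is proved, and the details (independence of the indicators, $\mathrm{Var}(X)=Nq(1-q)<Nq$ yielding the strict bound) are all in order.
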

From Lemma \ref{lem:overlap}, we have 
\begin{eqnarray}
q &\geq& 2^{-c_s n}  \qquad  \text{where } c_s = - \log \left(1-\frac{1}{2\xi} \right)  \nonumber
\end{eqnarray}
Thus with probability at least $1-\frac{4}{qN}$ we have at least $ 2^{-c_s n} N $ pairs $(\vect{e}_i,\vect{y}_i)$ before the 
sieving iterations such that $ \vect{e}_i \in D_1 \cup D_2$.

%----------------Time complexity----------------
\begin{lemma}
  If $N \geq \frac{2}{q}(k|C| + 2^{c_b n} +1)$, then with probability at least $1/2$ Algorithm \ref{alg:multI} outputs 
  a shortest non-zero vector in $\cL$ with respect to $\ell_{\infty}$ norm.
 \label{lem:multI_zero}
\end{lemma}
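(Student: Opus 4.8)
The plan is to follow the standard AKS-style argument adapted to the $\ell_\infty$ sieve described above, using the conceptual ``tossing'' modification to argue that the output is not merely a collection of zero vectors. First I would fix the shortest vector $\vect{u}$ with $\|\vect{u}\|_\infty = \mini_1(\cL) \approx \lambda$, and let $q$ be the probability that a uniformly random point in $\balli_n(\xi\lambda)$ lands in $D_1 \cup D_2$. By Lemma~\ref{lem:multI_goodPair}, if we sample $N$ pairs then with probability at least $1 - \frac{4}{qN}$ at least $\frac{qN}{2}$ of the perturbation vectors $\vect{e}_i$ lie in $D_1 \cup D_2$; call these the \emph{good} pairs. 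The choice $N \geq \frac{2}{q}(k|C| + 2^{c_b n} + 1)$ guarantees both that this failure probability is small and that the number of good pairs, $\frac{qN}{2} \geq k|C| + 2^{c_b n} + 1$, exceeds the total number of pairs that can ever be ``consumed'' as centres plus the number of distinct short lattice vectors.

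The key counting step is the following. Across all $k$ sieving iterations, each iteration identifies at most $|C| \le 2^{c_c n}$ centre pairs (Lemma~\ref{lem:multCentre}), so at most $k|C|$ pairs in total are ever removed from $S$ and turned into centres. Hence at step~\ref{multI:short} of Algorithm~\ref{alg:multI} at least $\frac{qN}{2} - k|C| \geq 2^{c_b n} + 1$ good pairs survive. For each surviving good pair $(\vect{e},\vect{y})$, the invariants of Claim~\ref{claim:app_multI_invariant} together with Lemma~\ref{lem:multI_finRad} give that $\vect{y} - \vect{e} \in \cL$ has $\ell_\infty$ norm at most $R'$, and by Lemma~\ref{lem:latBall} there are at most $2^{c_b n + o(n)}$ such lattice vectors. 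Since we have strictly more than $2^{c_b n}$ surviving good pairs, by pigeonhole two of them, say $(\vect{e}_i,\vect{y}_i)$ and $(\vect{e}_j,\vect{y}_j)$ with $i \ne j$, yield the \emph{same} lattice vector $\vect{w} := \vect{y}_i - \vect{e}_i = \vect{y}_j - \vect{e}_j$.

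Now I invoke the tossing argument. Because $\vect{e}_i \in D_1 \cup D_2$, the bijection $\sigma$ moves $\vect{e}_i$ by exactly $\pm\vect{u}$, and crucially the distribution of $(\vect{e}_i,\vect{y}_i)$ is unchanged when we replace $\vect{e}_i$ by $\sigma(\vect{e}_i)$ with probability $1/2$ (the restated Theorem~4.5 of~\cite{2009_BN}), and the same for $(\vect{e}_j,\vect{y}_j)$; moreover $\vect{y}_i, \vect{y}_j$ are unaffected since they depend only on $\vect{e} \bmod \fpar(\vect{B})$. After tossing, the difference
\begin{eqnarray*}
(\vect{y}_i - \sigma(\vect{e}_i)) - (\vect{y}_j - \sigma(\vect{e}_j)) = \big(\vect{w} \pm \vect{u}\big) - \big(\vect{w} \pm \vect{u}\big),
\end{eqnarray*}
where the two signs are independent fair coin flips. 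With probability $1/2$ the two signs differ, in which case the difference equals $\pm 2\vect{u} \ne \vect{0}$ or, after a routine case analysis of the four sign patterns, we obtain a nonzero vector of $\ell_\infty$ norm at most $2\mini_1(\cL)$ that is a short multiple or combination of $\vect{u}$; in any case with probability at least $1/2$ the pairwise difference set at step~\ref{multI:short} contains a nonzero lattice vector of norm $O(\mini_1(\cL))$. Running the standard argument that $\vect{u}$ itself (or $\pm\vect{u}$) appears among these differences, and choosing constants so that the surviving-good-pair count beats the lattice-point count with room to spare, shows that with probability $\ge 1/2$ the vector $\vect{v}_0$ returned is a shortest nonzero vector of $\cL$ in the $\ell_\infty$ norm.

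The main obstacle is the final probabilistic step: one must argue that conditioning on which pair of good surviving pairs collides does not bias the independent coin flips used by $\sigma$, so that the ``signs differ'' event genuinely has probability $1/2$, and that this produces $\pm\vect{u}$ (not just some nonzero short vector). This is exactly the subtlety that the deferred-tossing device of Regev~\cite{2009_R1} and Theorem~4.5 of~\cite{2009_BN} are designed to handle — the replacement is applied only at the last moment, after the algorithm's combinatorial choices are fixed, so the coin for each good surviving pair is still fresh and independent — and I would spell this out carefully, mirroring the corresponding argument in~\cite{2009_BN}, while also absorbing the $2^{o(n)}$ slack from Lemma~\ref{lem:latBall} and the approximation $\lambda \approx \mini_1$ into the constant in the bound on $N$.
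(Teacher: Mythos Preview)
Your approach is essentially the same as the paper's: count the good pairs via Lemma~\ref{lem:multI_goodPair}, subtract at most $k|C|$ centres, pigeonhole on the at most $2^{c_b n}$ lattice vectors in $\balli_n(R')$, and then toss. The paper's proof is in fact terser than yours and omits the discussion of deferred independence that you correctly flag.

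One point to fix in your final step: your description of the tossing outcome is slightly off. You write the post-toss lattice vectors as $(\vect{w}\pm\vect{u})-(\vect{w}\pm\vect{u})$ with two independent random \emph{signs}, leading you to $\pm 2\vect{u}$. In the actual mechanism each good pair's lattice vector is either $\vect{w}$ (untossed, probability $1/2$) or $\vect{w}+s\,\vect{u}$ (tossed, probability $1/2$) for a \emph{fixed} $s\in\{-1,+1\}$ determined by whether $\vect{e}\in D_1$ or $\vect{e}\in D_2\setminus D_1$; the randomness is in whether $\sigma$ is applied, not in the sign. The event that exactly one of the two colliding pairs is tossed has probability $1/2$, and in that case the pairwise difference is $\pm\vect{u}$ directly --- a shortest vector --- with no need for the ``routine case analysis'' or the hedge about $\pm 2\vect{u}$. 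This is exactly what the paper asserts (in one sentence), and it cleanly resolves the obstacle you identified in your last paragraph.
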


\begin{proof}
 Of the $N$ vector pairs $(\vect{e},\vect{y})$ sampled at step \ref{multI:sample} of Algorithm \ref{alg:multI}, 
 we consider those such that $\vect{e} \in (D_1 \cup D_2)$. 
 We have already seen there are at least $\frac{qN}{2}$ such pairs with probability at least $1-\frac{4}{qN}$.
 We remove $|C|$ vector pairs in each of the $k$ sieve iterations.
 So at step \ref{multI:short} of Algorithm \ref{alg:multI} we have $N' \geq 2^{c_b n}+1$ pairs $(\vect{e},\vect{y})$ 
 to process.
 
 By Lemma \ref{lem:multI_finRad} each of them is contained within a ball of radius $R'$ which can have at most $2^{c_b n}$ lattice
 vectors.
 So there exists at least one lattice vector $\vect{w}$ for which the perturbation is in $D_1 \cup D_2$ and 
 it appears twice in $S$ at the beginning of step \ref{multI:short}. 
 With probability $1/2$ it remains $\vect{w}$ or with the same probability it becomes either $\vect{w}+\vect{u}$ or
 $\vect{w}-\vect{u}$.
 Thus after taking pair-wise difference at step \ref{multI:short} with probability at least $1/2$ we find the shortest vector.
\end{proof}

\begin{theorem}
  Let $\gamma \in (0,1)$, and let $\xi >1/2$. 
  Given a full rank lattice $\cL \subset \ratn^n$ there is a randomized algorithm for $\svpi$ with success probability at 
  least $1/2$, space complexity at most $2^{c_{space}n+o(n)}$ and running time at most $2^{c_{time}n+o(n)}$,
  where $c_{space} = c_s + \max(c_c,c_b)$ and $c_{time} = \max(c_{space},2c_b)$, where $c_c =\log \left(1 + \Big\lfloor\frac{2}{\gamma}\Big\rfloor \right), \quad c_s = - \log \left(1-\frac{1}{2\xi} \right)$ and
  $c_b = \log \left(1 + \Big\lfloor \frac{2\xi(2-\gamma)}{1-\gamma} \Big\rfloor \right)$.
  
 % In particular for $\gamma=0.41$ and $\xi=0.742$ the algorithm runs in time $2^{4n+o(n)}$ with a corresponding space 
 % requirement of at most $2^{3.938n+o(n)}$.
  
 \label{thm:multI}
\end{theorem}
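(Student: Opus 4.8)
The plan is to run Algorithm~\ref{alg:multI} with the sample size $N=\big\lceil\tfrac{2}{q}\big(k|C|+2^{c_b n}+1\big)\big\rceil$, where $q$ is the probability that a uniformly random point of $\balli_n(\xi\lambda)$ lies in $D_1\cup D_2$, and to glue together the lemmas already established in this section. First I would remove the standing assumption that a guess $\lambda\approx\mini_1(\cL)$ is available: Lemma~\ref{lem:LLL} produces in polynomial time a $\lambda^*$ with $\mini_1(\cL)\le\lambda^*\le 2^n\mini_1(\cL)$, so running the algorithm on each of the polynomially many values $\lambda=(1+1/n)^{-i}\lambda^*$ and returning the shortest non-zero lattice vector found guarantees that one run uses a $\lambda$ within a factor $1+1/n$ of $\mini_1(\cL)$; after rescaling the lattice we may additionally assume $2<\mini_1(\cL)\le3$, which is what the conceptual $\sigma$-modification preceding Lemma~\ref{lem:multI_zero} requires. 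This overhead is polynomial and is absorbed into the $o(n)$ terms in the exponents.

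Correctness for a good guess then follows by chaining the lemmas. Claim~\ref{claim:app_multI_invariant} maintains $\vect{y}-\vect{e}\in\cL$ and $\|\vect{y}\|_\infty\le R$ throughout; Lemma~\ref{lem:multI_finRad} bounds the length of the surviving lattice vectors by $R'$; Lemma~\ref{lem:latBall} then bounds the number of distinct surviving lattice vectors by $|\balli_n(R')\cap\cL|\le 2^{c_b n+o(n)}$; Lemma~\ref{lem:overlap} gives $q\ge 2^{-c_s n}$; Lemma~\ref{lem:multI_goodPair} turns this into ``at least $qN/2$ sampled pairs have their perturbation in $D_1\cup D_2$'' with probability $>1-4/(qN)$; and since at most $|C|\le 2^{c_c n}$ pairs (Lemma~\ref{lem:multCentre}) are discarded in each of the $k$ sieving passes, the choice of $N$ leaves more surviving good pairs than distinct possible lattice vectors, so Lemma~\ref{lem:multI_zero} yields that with probability at least $1/2$ a shortest non-zero vector occurs among the pairwise differences formed in step~\ref{multI:short}.

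For the space bound I would substitute $|C|\le 2^{c_c n}$, $1/q\le 2^{c_s n}$, and $k=\poly(n,b)$ (true because $R=n\max_i\|\vect{b}_i\|_\infty$ is $2^{O(b)}$, with $b$ the input bit length, and $\gamma$ is a fixed constant in $(0,1)$) into the definition of $N$, obtaining $N=2^{(c_s+\max(c_c,c_b))n+o(n)}=2^{c_{space}n+o(n)}$. The algorithm stores the list $S$ (size $\le N$) and the index table $C$ (size $\le 2^{c_c n}$), and since $c_s>0$ we have $c_{space}\ge c_c$, so the total space is $2^{c_{space}n+o(n)}$. The running time of the sampling loop and of the $k$ sieving passes is $\poly(n)$ per list element plus $O(2^{c_c n})$ per pass to initialise $C$, hence $2^{c_{space}n+o(n)}$ in total.

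The one step where a naive analysis would be too slow, and the point I would be most careful about, is step~\ref{multI:short}: comparing all $\binom{N}{2}$ pairwise differences costs $2^{2c_{space}n}$. The resolution is that every surviving lattice vector $\vect{y}_i-\vect{e}_i$ lies in $\balli_n(R')\cap\cL$, so by Lemma~\ref{lem:latBall} there are only $m\le 2^{c_b n+o(n)}$ distinct ones; deduplicating by hashing in time $N\cdot\poly(n)$ and then comparing only the $\binom{m}{2}$ differences of distinct vectors costs $2^{2c_b n+o(n)}$ and loses no non-zero difference. Summing gives running time $2^{\max(c_{space},2c_b)n+o(n)}=2^{c_{time}n+o(n)}$. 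The only genuinely fiddly bookkeeping is the floor appearing in $c_b$: since $\lambda$ is only within a factor $1+1/n$ of $\mini_1(\cL)$, the quantity $2R'/\mini_1(\cL)$ can sit slightly above an integer, but the extra contribution is subexponential and is swallowed by the $o(n)$ slack in the exponent.
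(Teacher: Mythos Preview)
Your proposal is correct and follows essentially the same path as the paper's own proof: set $N=\lceil\tfrac{2}{q}(k|C|+2^{c_b n}+1)\rceil$, invoke Lemma~\ref{lem:multI_zero} for correctness, and then cost out the sampling, the $k$ linear-time sieving passes, and step~\ref{multI:short} separately. The one place where you are actually more careful than the paper is step~\ref{multI:short}: the paper simply writes ``$(N')^2\in 2^{2c_b n+o(n)}$'', which tacitly assumes the number of pairs processed there is $2^{c_b n+o(n)}$, whereas you make this rigorous by first deduplicating the surviving lattice vectors (there are at most $2^{c_b n+o(n)}$ distinct ones in $\balli_n(R')\cap\cL$ by Lemma~\ref{lem:latBall}) and then forming only the $\binom{m}{2}$ differences of distinct vectors; since the set of non-zero pairwise differences is unchanged by deduplication, correctness is preserved and the $2^{2c_b n+o(n)}$ bound follows cleanly.
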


\begin{proof}
 If we start with $N$ pairs (as stated in Lemma \ref{lem:multI_zero}) then the space complexity is at most 
  $2^{c_{space} n + o(n)}$ with $c_{space} = c_s + \max (c_c,c_b)$.
 
 In each iteration of the sieving Algorithm \ref{alg:multSieve} it takes $\ell^{n}$ time to initialize and index $C$, 
 where $\ell = 1+\Big\lfloor \frac{2}{\gamma} \Big\rfloor$. 
 For each vector pair $(\vect{e},\vect{y}) \in S$ it takes time at most $n$ to calculate its index-tuple $I_{\vect{y}}$.
 So, the time taken to process each vector pair is at most $(n+1)$.
 Thus total time taken per iteration of Algorithm \ref{alg:multSieve} is at most 
 $N(n+1) + \ell^{n}$, which is at most $2^{c_{space}n + o(n)}$ and there are at most $\poly(n)$ such iterations.
  
 If $N' \geq 2^{c_b n}+1$, then the time complexity for the computation of the pairwise differences is at most 
 $(N')^2 \in 2^{2c_bn+o(n)}$.
 
 So the overall time complexity is at most $2^{c_{time}n + o(n)}$ where 
 $c_{time} = \max (c_{space},2c_b)$.
\end{proof}

%---------------------------------------------------------------------
%	BIRTHDAY PARADOX
%----------------------------------------------------------------------
\subsection{Improvement using the birthday paradox}
\label{svpi:bday}

We can get a better running time and space complexity if we use the birthday paradox to decrease the number of sampled vectors
but get at least two vector pairs corresponding to the same lattice vector after the sieving iterations \cite{2009_PS}.
For this we have to ensure that the vectors are independent and identically distributed before step \ref{multI:short} of
Algorithm \ref{alg:multI}.
So we incorporate the following modification.

Assume we start with $N \geq \frac{2}{q}(n^3 k|C| + n 2^{\frac{c_b}{2}n})$ sampled pairs.
After the initial sampling, for each of the $k$ sieving iterations we fix $\Omega\Big(\frac{2n^3}{q} |C|\Big)$ pairs to be 
used as centre pairs in the following way.

Let $R = \max_{i \in [N]} \|\vect{y}_i\|_{\infty} $.
We maintain $k$ lists $C_1, C_2, \ldots C_k$ of pairs, where each list is similar to what has been already described before.
For the $i^{th}$ list we partition the range $[-R_i,R_i]$ where $R_i=\gamma^{i-1}R+\xi\lambda\frac{1-\gamma^{i-1}}{1-\gamma}$,
into intervals of length $\gamma R_i$.
For each $(\vect{e},\vect{y}) \in S$ we first calculate $\|\vect{y}\|_{\infty}$ to check in which list it can potentially 
belong, say $C'$.
Then we map it to its index-tuple $I_{\vect{y}}$, as has already been described before.
We add $(\vect{e},\vect{y})$ to $C'[I_{\vect{y}}]$ if it was empty before, else we subtract vectors as in step 
\ref{multSieve:reduce} of Algorithm \ref{alg:multSieve}.

Now using an analysis similar to \cite{2011_HPS} we get the following improvement in the running time.
\begin{theorem}
Let $\gamma \in (0,1)$, and let $\xi >1/2$. 
Given a full rank lattice $\cL \subset \ratn^n$ there is a randomized algorithm for $\svpi$ with success probability at 
least $1/2$, space complexity at most $2^{c_{space}n+o(n)}$ and running time at most $2^{c_{time}n+o(n)}$,
where $c_{space} = c_s + \max(c_c,\frac{c_b}{2})$ and $c_{time} = \max(c_{space},c_b)$, where 
$c_c =\log \left(1 + \Big\lfloor\frac{2}{\gamma}\Big\rfloor \right), \quad c_s = - \log \left(1-\frac{1}{2\xi} \right)$ and
$c_b = \log \left(1 + \Big\lfloor \frac{2\xi(2-\gamma)}{1-\gamma} \Big\rfloor \right)$.
  
In particular for $\gamma=0.67$ and $\xi=0.868$ the algorithm runs in time $2^{2.82n+o(n)}$ with a corresponding space 
requirement of at most $2^{2.82n+o(n)}$.
  
 \label{thm:multI_bday}
\end{theorem}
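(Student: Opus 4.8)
The plan is to mimic the analysis behind Theorem~\ref{thm:multI}, but to replace the ``each good lattice vector appears twice'' argument (Lemma~\ref{lem:multI_zero}) by a birthday-paradox argument, which lowers the number of sampled pairs $N$ from roughly $\frac{2}{q}(k|C|+2^{c_b n})$ to roughly $\frac{2}{q}(n^3 k|C| + n\,2^{c_b n/2})$. Concretely, I would first argue, exactly as in the proof of Claim~\ref{claim:app_multI_invariant} and Lemma~\ref{lem:multI_finRad}, that after the $k$ sieving iterations the surviving pairs $(\vect{e},\vect{y})$ satisfy $\vect{y}-\vect{e}\in\cL\cap\balli_n(R')$ with $R'$ as in Lemma~\ref{lem:multI_finRad}, so there are at most $2^{c_b n + o(n)}$ distinct lattice vectors among them (by Lemma~\ref{lem:latBall}, using $\lambda\approx\mini_1$). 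The key extra ingredient is that after the conceptual $\sigma$-tossing modification, the surviving perturbation vectors, restricted to those pairs whose perturbation lies in $D_1\cup D_2$, are i.i.d.\ and each is mapped via $\vect{e}\mapsto\vect{e}\pm\vect{u}$ essentially independently; this is why we must fix the centre pairs \emph{in advance} (the $k$ pre-committed lists $C_1,\dots,C_k$) rather than adaptively, so that conditioning on the centre set does not destroy independence of the remaining pairs. This is the standard device from~\cite{2009_PS,2011_HPS}.

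Next I would carry out the counting. Start with $N \ge \frac{2}{q}\bigl(n^3 k|C| + n\,2^{c_b n/2}\bigr)$ sampled pairs; with probability $\ge 1-\frac{4}{qN}$ at least $\frac{qN}{2} \ge n^3 k|C| + n\,2^{c_b n/2}$ of them have perturbation in $D_1\cup D_2$ (Lemma~\ref{lem:multI_goodPair}, with $q\ge 2^{-c_s n}$ from Lemma~\ref{lem:overlap}). Removing at most $|C|$ pairs per iteration over $k$ iterations and the $n^3$ slack leaves $\Omega(n\cdot 2^{c_b n/2})$ good surviving pairs that are still i.i.d.\ over at most $2^{c_b n + o(n)}$ lattice vectors. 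By the birthday paradox, among $\Omega(n\cdot 2^{c_b n/2})$ i.i.d.\ samples from a distribution supported on a set of size $\le 2^{c_b n}$, with constant probability two of them land on the same lattice vector $\vect{w}$; and conditioned on that collision, the independent $\sigma$-tosses ensure that with constant probability the two corresponding pairs differ (one is $\vect{w}$, the other $\vect{w}\pm\vect{u}$), so their difference at step~\ref{multI:short} is a shortest vector $\pm\vect{u}$. The slack $n^3$ and the factor $n$ are exactly what is needed to push the success probability to $\ge 1/2$ after a union bound over the $\mathrm{poly}(n)$ guesses of $\lambda$ and over the two failure events.

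For the complexity bookkeeping: space is dominated by storing the $N$ pairs and the $k$ centre lists, giving $c_{space} = c_s + \max(c_c, \tfrac{c_b}{2})$, since $|C| \le 2^{c_c n}$ (Lemma~\ref{lem:multCentre}) while we only ever need $2^{c_b n/2 + o(n)}$ non-centre survivors. For time, each of the $\mathrm{poly}(n)$ sieve iterations costs $N(n+1) + k\,\ell^n = 2^{c_{space} n + o(n)}$ (initializing and indexing the $k$ lists, then $O(n)$ per pair to compute its index-tuple), and the final pairwise-difference step at~\ref{multI:short} now runs over only $N' = 2^{c_b n/2 + o(n)}$ pairs but, crucially, we can use the same interval-partitioning trick to find a collision in time $O(N' n)$ rather than $O((N')^2)$ — or, even taking the naive bound, $(N')^2 = 2^{c_b n + o(n)}$. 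Either way $c_{time} = \max(c_{space}, c_b)$. Finally I would plug in $\gamma = 0.67$, $\xi = 0.868$: then $\lfloor 2/\gamma\rfloor = 2$ so $c_c = \log 3 \approx 1.585$; $c_s = -\log(1 - \tfrac{1}{2\cdot 0.868}) \approx 1.24$; and $\tfrac{2\xi(2-\gamma)}{1-\gamma} = \tfrac{2\cdot 0.868\cdot 1.33}{0.33} \approx 6.99$, so $\lfloor\cdot\rfloor = 6$ and $c_b = \log 7 \approx 2.807$, whence $c_{time} = \max(1.24 + \max(1.585, 1.40),\ 2.807) \approx 2.82$ and $c_{space} \approx 2.82$ as well, matching the stated bound.

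The main obstacle is the independence argument: one must be careful that pre-committing the centre pairs in the lists $C_1,\dots,C_k$ does not bias the distribution of the non-centre pairs, and that the $\sigma$-tossing applied to centres (step~\ref{multSieve:addCentre}) versus to survivors (step~\ref{multI:short}) interacts correctly so that the final surviving good pairs remain i.i.d.\ with independent $\pm\vect{u}$ shifts — this is exactly the delicate point in~\cite{2011_HPS} and the reason the $n^3$ slack appears (to absorb the loss from conditioning on which pairs become centres). The volume estimates and the arithmetic are routine given Lemmas~\ref{lem:latBall},~\ref{lem:multCentre},~\ref{lem:overlap}, and~\ref{lem:multI_finRad}.
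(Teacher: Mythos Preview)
Your proposal is correct and follows essentially the same approach as the paper: reduce $N$ to $\frac{2}{q}(n^3 k|C| + n\,2^{c_b n/2})$ by pre-committing the centre pairs into $k$ lists $C_1,\dots,C_k$ (so that the surviving non-centre pairs remain i.i.d.), then apply the birthday paradox over the at most $2^{c_b n}$ lattice points in $\balli_n(R')$, and finish with the $\sigma$-tossing argument and the analysis from~\cite{2009_PS,2011_HPS}. In fact the paper's own proof is only a sketch that states the modified sampling bound, describes the pre-committed lists $C_i$, and then defers to~\cite{2011_HPS}; your write-up is more explicit about the independence issue, the role of the $n^3$ slack, and the numerical verification of the constants, all of which are accurate.
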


%-------------------------------------------------------------
%	CVPi
%-----------------------------------------------------------
\section{Faster Approximation Algorithms}
\label{sec:cvpi}

%----------------------------------------------------------------------------

\subsection{Algorithm for Approximate $\svp$}
\label{sec:svpi-approx}

Notice that Algorithm~\ref{alg:multI}, at the end of the sieving procedure, obtains lattice vectors of length at most 
$R' = \frac{\xi (2-\gamma)\lambda}{1-\gamma} + O(\lambda/n)$. 
So, as long as we can ensure that one of the vectors obtained at the end of the sieving procedure is non-zero, we obtain a 
$\tau = \frac{\xi (2-\gamma)}{1-\gamma} + o(1)$-approximation of the shortest vector. 
Consider a new algorithm ${\mathcal A}$ that is identical to Algorithm~\ref{alg:multI}, except that Step~\ref{multI:short} 
is replaced by the following:
\begin{itemize}
\item Find a non-zero vector $\vect{v}_0$  in  
    $ \{ (\vect{y}_i-\vect{e}_i) : (\vect{e}_i,\vect{y}_i)
    \in S \}$.
\end{itemize}

We now show that if we start with sufficiently many vectors, we must obtain a non-zero vector.
\begin{lemma}
  If $N \geq \frac{2}{q}(k|C| +1)$, then with probability at least $1/2$ Algorithm $\cA$ outputs 
  a non-zero vector in $\cL$ of length at most $\frac{\xi (2-\gamma)\lambda}{1-\gamma} + O(\lambda/n)$ with respect 
  to $\ell_{\infty}$ norm.
 \label{lem:multI_zero_approx}
\end{lemma}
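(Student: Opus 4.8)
The plan is to follow the proof of Lemma~\ref{lem:multI_zero} almost verbatim, observing that since Algorithm~$\cA$ only needs to produce a \emph{single} non-zero lattice vector (rather than two perturbed pairs mapping to a common lattice vector), the collision budget $2^{c_b n}$ is no longer needed and $N \geq \frac{2}{q}(k|C| + 1)$ suffices.

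First I would count the surviving ``good'' pairs. By Lemma~\ref{lem:multI_goodPair}, with probability larger than $1 - \frac{4}{qN}$ (which we treat as essentially $1$, exactly as in the proof of Lemma~\ref{lem:multI_zero}), at least $\frac{qN}{2}$ of the sampled pairs $(\vect{e}_i,\vect{y}_i)$ satisfy $\vect{e}_i \in D_1 \cup D_2$. Each of the $k$ sieving iterations discards at most $|C|$ pairs, so at the beginning of the (modified) Step~\ref{multI:short} at least $\frac{qN}{2} - k|C| \geq 1$ such pair survives, using $N \geq \frac{2}{q}(k|C| + 1)$.

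Next I would analyse one surviving good pair $(\vect{e},\vect{y})$ with associated lattice vector $\vect{w} := \vect{y} - \vect{e}$. By Claim~\ref{claim:app_multI_invariant} and the computation in Lemma~\ref{lem:multI_finRad}, $\|\vect{w}\|_\infty \leq R'$. The tossing procedure replaces $\vect{e}$ by $\sigma(\vect{e}) = \vect{e} + s\vect{u}$ for a fixed sign $s \in \{+1,-1\}$ (determined by whether $\vect{e} \in D_1$ or $\vect{e} \in D_2 \setminus D_1$) with probability $1/2$, and leaves it unchanged with probability $1/2$; correspondingly the associated lattice vector is $\vect{w}$ or $\vect{y} - \sigma(\vect{e}) = \vect{w} - s\vect{u}$. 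Since $\sigma$ maps $\balli_n(\xi\lambda)$ into itself, $\|\sigma(\vect{e})\|_\infty \leq \xi\lambda$, so the bound $\|\vect{y} - \sigma(\vect{e})\|_\infty \leq \|\vect{y}\|_\infty + \xi\lambda \leq R'$ of Lemma~\ref{lem:multI_finRad} applies verbatim; hence \emph{both} possible outcomes have $\ell_\infty$ norm at most $R' = \frac{\xi(2-\gamma)\lambda}{1-\gamma} + O(\lambda/n)$.

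Finally, since $\vect{u} \neq \vect{0}$ the two candidates $\vect{w}$ and $\vect{w} - s\vect{u}$ are distinct, so at least one of them is non-zero, and therefore the lattice vector associated with this pair after tossing is non-zero with probability at least $1/2$. As the tossing is performed independently for each surviving good pair, the probability that Algorithm~$\cA$ finds no non-zero vector is at most $1/2$, and whenever it does the output has $\ell_\infty$ norm at most $R'$. The one place that needs care --- the only genuine departure from a routine repetition of the earlier arguments --- is checking that shifting by $\pm\vect{u}$ does not push the length past $R'$; this is handled by bounding $\vect{y} - \sigma(\vect{e})$ directly through $\sigma(\vect{e}) \in \balli_n(\xi\lambda)$ rather than through the cruder triangle inequality $\|\vect{w}\|_\infty + \|\vect{u}\|_\infty$.
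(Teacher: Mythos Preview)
Your proof is correct and follows essentially the same approach as the paper: count the good pairs with $\vect{e}\in D_1\cup D_2$, subtract the at most $k|C|$ pairs lost to sieving, and use the tossing argument on a single surviving pair to conclude non-zeroness with probability at least $1/2$. You are slightly more careful than the paper in explicitly verifying that the length bound $R'$ still holds after tossing (via $\sigma(\vect{e})\in\balli_n(\xi\lambda)$), which the paper leaves implicit.
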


\begin{proof}
 Of the $N$ vector pairs $(\vect{e},\vect{y})$ sampled at step \ref{multI:sample} of Algorithm $\cA$, 
 we consider those such that $\vect{e} \in (D_1 \cup D_2)$. 
 We have already seen there are at least $\frac{qN}{2}$ such pairs.
 We remove $|C|$ vector pairs in each of the $k$ sieve iterations.
 So at step \ref{multI:short} of Algorithm \ref{alg:multI} we have $N' \geq 1$ pairs $(\vect{e},\vect{y})$ 
 to process.

 With probability $1/2$, $\vect{e}$, and hence $\vect{w} = \vect{y} - \vect{e}$ is replaced by either $\vect{w}+\vect{u}$ or
 $\vect{w}-\vect{u}$. 
 Thus, the probability that this vector is the zero vector is at most $1/2$. 
\end{proof}

We thus obtain the following:
\begin{theorem}
 Let $\gamma \in (0,1)$ and $\xi >1/2$. 
 Given a full rank lattice $\cL \subset \ratn^n$ there is a randomized algorithm that, for 
 $\tau  = \frac{\xi (2-\gamma)}{1-\gamma} + o(1)$, approximates $\svpi$ with success probability at 
 least $1/2$, space and time complexity $2^{(c_s + c_c) n + o(n)}$, where 
 $c_c =\log \left(1 + \Big\lfloor\frac{2}{\gamma}\Big\rfloor \right)$, and $c_s = - \log \left(1-\frac{1}{2\xi} \right)$. 
 In particular, for $\gamma = 2/3 +o(1)$, $\xi = \tau/4$, the algorithm runs in time 
 $3^n \cdot \left(\frac{\tau}{\tau - 2}\right)^n$.
  
 \label{thm:multI-approx}
\end{theorem}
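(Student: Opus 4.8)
The plan is to assemble the statement from Algorithm~$\cA$ (Algorithm~\ref{alg:multI} with Step~\ref{multI:short} replaced by a search for a non-zero difference $\vect{y}_i-\vect{e}_i$), its sieving sub-routine Algorithm~\ref{alg:multSieve}, and the lemmas already proved in Section~\ref{sec:svpi}; there is no new geometry to do. First I would dispose of the guess for $\mini_1(\cL)$: as noted at the start of Section~\ref{sec:svpi}, running $\cA$ on the polynomially many values $\lambda=(1+1/n)^{-i}\lambda^*$ produced via Lemma~\ref{lem:LLL} and returning the shortest non-zero output among all runs, we are guaranteed that for one value $\mini_1(\cL)\le\lambda\le(1+1/n)\mini_1(\cL)$; the $\poly(n)$ overhead in time and space is absorbed into the $o(n)$ in the exponent, so it suffices to analyze $\cA$ for this good $\lambda$.

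Correctness is immediate from Lemma~\ref{lem:multI_zero_approx}: if we sample $N\ge\tfrac{2}{q}(k|C|+1)$ pairs, then with probability at least $1/2$ the algorithm outputs a non-zero $\vect{v}_0\in\cL$ with $\|\vect{v}_0\|_\infty\le\tfrac{\xi(2-\gamma)\lambda}{1-\gamma}+O(\lambda/n)$, and using $\lambda\le(1+1/n)\mini_1(\cL)$ this is at most $\big(\tfrac{\xi(2-\gamma)}{1-\gamma}+o(1)\big)\mini_1(\cL)=\tau\cdot\mini_1(\cL)$.

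For the resource bounds I would plug in the three facts already recorded: (i) $q\ge 2^{-c_s n}$ with $c_s=-\log(1-\tfrac{1}{2\xi})$, from Lemma~\ref{lem:overlap} applied to $D_1=\balli_n(\xi\lambda)\cap\balli_n(-\vect{u},\xi\lambda)$ together with $|u_i|\le\|\vect{u}\|_\infty\approx\lambda$; (ii) $|C|\le 2^{c_c n}$ with $c_c=\log(1+\lfloor 2/\gamma\rfloor)$, from Lemma~\ref{lem:multCentre}; and (iii) $k$ polynomial in the input size. Hence $N=\tfrac{2}{q}(k|C|+1)\le 2^{(c_s+c_c)n+o(n)}$, which also bounds the space, since apart from $S$ we only store the index array $C$ of size $\ell^n=2^{c_c n}$. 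For the time, each of the $k$ calls to Algorithm~\ref{alg:multSieve} costs $\ell^n$ to initialize $C$ plus $O(n)$ per pair, i.e. $N\cdot\poly(n)+2^{c_c n}=2^{(c_s+c_c)n+o(n)}$, and the modified last step is a single scan of $S$ costing $N\cdot\poly(n)$ rather than the $N^2\cdot\poly(n)$ pairwise differences of the exact algorithm, so no $2c_b$ term appears; multiplying by the polynomially many sieve rounds and $\lambda$-guesses keeps the total at $2^{(c_s+c_c)n+o(n)}$.

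For the displayed special case I would take $\gamma=2/3+o(1)$ approached from above, so that $2/\gamma<3$, $\lfloor 2/\gamma\rfloor=2$, and $c_c=\log 3$, hence $2^{c_c n}=3^n$; and $\xi=\tau/4$. Then $\tfrac{2-\gamma}{1-\gamma}\to 4$, so the approximation factor $\tfrac{\xi(2-\gamma)}{1-\gamma}+o(1)$ equals $4\xi+o(1)=\tau$ as required, $\xi>1/2$ is exactly $\tau>2$ (which is what keeps the bound finite), and $c_s=-\log(1-\tfrac{2}{\tau})=\log\tfrac{\tau}{\tau-2}$, giving $2^{(c_s+c_c)n}=3^n\big(\tfrac{\tau}{\tau-2}\big)^n$. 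The only care needed is bookkeeping — verifying that $\lambda$-guessing and, if one wants the success probability stated as exactly $1/2$, any amplification stay $2^{o(n)}$, and that $\lfloor 2/\gamma\rfloor$ collapses to $2$ precisely because $\gamma$ sits slightly above $2/3$. The one conceptual point, that the saving over Theorem~\ref{thm:multI} comes from the final step being linear rather than quadratic in $|S|$, is immediate.
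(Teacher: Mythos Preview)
Your proposal is correct and follows essentially the same approach as the paper: the paper's own proof is the one-liner ``We thus obtain the following,'' and your write-up simply spells out the implicit details --- correctness via Lemma~\ref{lem:multI_zero_approx}, the count $N=\tfrac{2}{q}(k|C|+1)\le 2^{(c_s+c_c)n+o(n)}$ from $q\ge 2^{-c_s n}$ and Lemma~\ref{lem:multCentre}, the linear-time final scan replacing the quadratic pairwise-difference step, and the arithmetic for the special parameters $\gamma=2/3+o(1)$, $\xi=\tau/4$. Your observation that $\gamma$ must sit slightly above $2/3$ so that $\lfloor 2/\gamma\rfloor=2$ is exactly the right bookkeeping.
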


%----------------------------------------------------
\subsection{Algorithm for Approximate $\cvp$}

Given a lattice $\cL$ and a target vector $\vect{t}$, let $d$ denote the distance of the closest vector in $\cL$ to $\vect{t}$. 
Just as in Section~\ref{sec:svpi}, we assume that we know the value of $d$ within a factor of $1 + 1/n$. 
We can get rid of this assumption by using Babai's~\cite{1986_B} algorithm to guess the value of $d$ within a factor of $2^n$,
and then run our algorithm for polynomially many values of $d$ each within a factor $1 + 1/n$ of the previous one. 

For $\tau > 0$ define the following $(n+1)-$dimensional lattice $\cL'$
 \begin{eqnarray}
  \cL' = \lat\Big( \{ (\vect{v},0):\vect{v} \in \cL \} \cup \{(\vect{t},\tau d/2)\} \Big)\;. 	\nonumber
 \end{eqnarray}
 %Consider the unique $k$ satisfying $(1+\alpha)^k \leq d < (1+\alpha)^{k+1}$
 Let $\vect{z}^* \in \cL$ be the lattice vector closest to $\vect{t}$. 
 Then $\vect{u} = (\vect{z}^*-\vect{t},-\tau d/2) \in \cL'\setminus (\cL-k\vect{t},0)$ for some $k \in \intg$.
 %\begin{eqnarray}
  %\|(\vect{z}^*-\vect{t},-\tau d)\|_{\infty} &=& \max(\|\vect{z}^*-\vect{t}\|_{\infty},|-\tau d|)=\max(d,\tau d) \nonumber
 %\end{eqnarray}
 
 We sample $N$ vector pairs $(\vect{e},\vect{y}) \in \balli_n(\xi d) \times \fpar(\vect{B}')$ using Algorithm 
 \ref{alg:sample} where \\
 $\vect{B}' = [(\vect{b}_1,0),\ldots, (\vect{b}_n,0),(\vect{t},\tau d/2)]$ is a basis for $\cL'$.
 Next we run a number of iterations of the sieving Algorithm \ref{alg:multSieve} to get a number of vector pairs such that
 $\|\vect{y} \|_{\infty} \leq R = \frac{\xi d}{1-\gamma}+o(1)$.  
 Further details can be found in Algorithm \ref{alg:cvpI}.
 Note that in the algorithm $\vect{v}|_{[n]}$ is the $n-$dimensional vector $\vect{v}'$ obtained by restricting $\vect{v}$
 to the first $n$ co-ordinates (with respect to the computational basis).
 
%-----------------------------------------------------------------------
\begin{algorithm}
 \caption{Approximate algorithm for $\cvpi$}
 \label{alg:cvpI}
 
 \KwIn{(i) A basis $\vect{B} = [\vect{b}_1, \ldots \vect{b}_n]$ of a lattice $L$, (ii) Target vector $\vect{t}$,
 (iii) Approximation factor $\tau$, (iv) $ 0 < \gamma <1 $, (v) $\xi$ such that $\frac{1}{2}\max(1,\tau/2) < \xi < 
 \frac{(1-\gamma)\tau}{2-\gamma} - c'$ where $c'$ is a small constant, 
 (vi)$\alpha >0$, (vii) $N \in \nat$ }
 \KwOut{A $2\tau-$approximate closest vector to $\vect{t}$ in $L$ }
 
 $d \leftarrow (1+\alpha) $ \;
 $T \leftarrow \emptyset; \qquad S'' \leftarrow \emptyset$ \;
 \While{$d \leq n\cdot \max_i \|\vect{b}_i\|_{\infty} $}
 {
    $S,S' \leftarrow \emptyset$ \;
    
    $\vect{B}' \rightarrow [(\vect{b}_1,0),\ldots, (\vect{b}_n,0),(\vect{t},\tau d/2)] $ 	\;
    $L' \rightarrow \lat(\vect{B}')$	\;
    $M \rightarrow \Span(\{(\vect{v},0):\vect{v} \in L \})$ \;
    
    \For{$i=1$ to $N$}
    {	
      $(\vect{e}_i,\vect{y}_i) \leftarrow \text{Sample}(\vect{B}',\xi d)$ using Algorithm \ref{alg:sample}  
      \label{cvpI:sample} \; 
      $S \leftarrow S \cup \{ (\vect{e}_i,\vect{y}_i) \}$	\; 
    }
    
    $R \leftarrow n \max_i \|\vect{b}_i\|_{\infty} $ \;
    
    \While{$R > \frac{\xi d}{1-\gamma}$ }
    {
      $S \leftarrow \text{sieve}(S,\gamma,R,\xi)$ using Algorithm \ref{alg:multSieve} \;
      $R \leftarrow \gamma R + \xi d$ 	\label{cvpI:R}	\; 
    }	\label{cvpI:sieveEnd}
    $S' \leftarrow \{ \vect{y}-\vect{e} : (\vect{e},\vect{y}) \in S \}$	\;
    Compute $\vect{w} \in S'$ such that $\|\vect{w}|_{[n]}\|_{\infty} = \min\{ \|\vect{v}'|_{[n]}\|_{\infty}:
    \vect{v}' \in S' \text{ and } (\vect{v}')_{n+1} \neq 0 \} $ \;
    $T \rightarrow T \cup \{ \vect{w} \} $	\; 
    $d \rightarrow d(1+\alpha)$ \;
 }

Let $\vect{v}_0$ be any vector in $T$ such that 
$\|\vect{v}_0|_{[n]}\|_{\infty}=\min\{ \|\vect{w}|_{[n]}\|_{\infty} : \vect{w} \in T \}$  \;
$\vect{v}_0' \leftarrow \vect{v}_0|_{[n]}$	\;
 \eIf{$(\vect{v}_0)_{n+1} = -\tau d/2$}
 {
    \Return $\vect{v}_0'+\vect{t}$	\;
 }
 {
    \Return $\vect{v}_0'-\vect{t}$	\;
 }
 
\end{algorithm}
%------------------------------------------------------------------------- 
 
 From Lemma \ref{lem:multI_finRad} we have seen that after $\lceil \log_{\gamma} \Big( \frac{\xi}{nR_0(1-\gamma)} \Big)\rceil$
 iterations (where $R_0 = n \cdot \max_i \|\vect{b}_i\|_{\infty}$) 
 $R \leq \frac{\xi \gamma}{n(1-\gamma)} + \frac{\xi d}{1-\gamma} \Big[ 1-\frac{\xi}{nR_0(1-\gamma)} \Big]$.
 %After repeating a few more iterations or with proper values of parameters $R$ can have the desired value.
 Thus after the sieving iterations the set $S'$ consists of vector pairs such that the corresponding lattice vector
 $\vect{v}$ has $\|\vect{v}\|_{\infty} \leq \frac{\xi d}{1-\gamma}+\xi d +c = \frac{\xi(2-\gamma)d}{1-\gamma}+o(1)$.
 
 In order to ensure that our sieving algorithm doesn't return vectors from $(\cL,0)-(k\vect{t}, k\tau d/2)$ for some $k$ such 
 that $|k| \ge 2$, we choose our parameters as follows.
 \begin{eqnarray}
  \xi < \frac{(1-\gamma)\tau}{2-\gamma} - o(1)  \nonumber
 \end{eqnarray}
 Then every vector has  $\|\vect{v}\|_{\infty} < \tau d$ and so either $\vect{v} = \pm(\vect{z}'-\vect{t},0)$ or 
 $\vect{v} = \pm(\vect{z}-\vect{t},-\tau d/2)$ for some lattice vector $\vect{z},\vect{z}' \in \cL$.
 
 We need to argue that we must have at least some vectors in $\cL'\setminus (\cL \pm \vect{t},0)$ after the sieving 
 iterations. 
 To do so, we again use the tossing argument from Section~\ref{sec:svpi}. 
 Let $\vect{z}^* \in \cL$ be the lattice vector closest to $\vect{t}$.
 Then let $\vect{u} = (\vect{z}^*-\vect{t},-\tau d/2) \in \cL'\setminus (\cL \pm \vect{t},0)$. 
Let $D_1 = \balli_n(\xi d) \cap \balli_n(-\vect{u},\xi d)$ and
$D_2 = \balli_n(\xi d) \cap \balli_n(\vect{u},\xi d)$.

From Lemma \ref{lem:overlap}, we have that the probability $q$ that a random perturbation vector is in $D_1 \cup D_2$ is 
bounded as
\begin{eqnarray}
q &\geq& 2^{-c_s n} \cdot \left(1-\frac{\tau}{4\xi} \right)  \qquad  \text{where } c_s = - \log \left(1-\frac{1}{2\xi} \right)  \nonumber
\end{eqnarray}
Thus, as long as
\[
\xi > \max(1/2, \tau/4) \;,
\]
 we have at least $ 2^{-c_s n + o(n)} N $ pairs $(\vect{e}_i,\vect{y}_i)$ before the sieving iterations such that 
$ \vect{e}_i \in D_1 \cup D_2$.

Thus, using the same argument as in Section~\ref{sec:svpi-approx}, we obtain the following:
\begin{theorem}
 Let $\gamma \in (0,1)$, and for any $\tau > 1$ let $\xi > \max (1/2, \tau/4)$. 
  Given a full rank lattice $\cL \subset \ratn^n$ there is a randomized algorithm that, for 
  $\tau  = \frac{\xi (2-\gamma)}{1-\gamma} + o(1)$, approximates $\cvpi$ with success probability at 
  least $1/2$, space and time complexity $2^{(c_s + c_c) n + o(n)}$, where 
  $c_c =\log \left(1 + \Big\lfloor\frac{2}{\gamma}\Big\rfloor \right)$ and $c_s = - \log \left(1-\frac{1}{2\xi} \right)$. 
  In particular, for $\gamma = 1/2 +o(1)$ and $\xi = \tau/3$, the algorithm runs in time 
  $4^n \cdot \left(\frac{2\tau}{2\tau - 3}\right)^n$. 
 \label{thm:multI-approx-cvp}
\end{theorem}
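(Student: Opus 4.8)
The plan is to follow the template already established for approximate $\svpi$ in Section~\ref{sec:svpi-approx}, transported to the $(n+1)$-dimensional lattice $\cL'$ built from $\cL$ and the target $\vect{t}$. First I would verify the geometric setup: the algorithm samples $N$ pairs $(\vect{e},\vect{y}) \in \balli_n(\xi d) \times \fpar(\vect{B}')$, runs the linear sieve of Algorithm~\ref{alg:multSieve} until the surviving lattice vectors have $\ell_\infty$-norm at most $R = \frac{\xi d}{1-\gamma} + o(1)$, and hence (adding back the perturbation, exactly as in Lemma~\ref{lem:multI_finRad}) produces lattice vectors $\vect{v} \in \cL'$ of norm at most $\frac{\xi(2-\gamma)d}{1-\gamma} + o(1)$. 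The key consequence of the parameter choice $\xi < \frac{(1-\gamma)\tau}{2-\gamma} - o(1)$ is that every surviving $\vect{v}$ satisfies $\|\vect{v}\|_\infty < \tau d$; since the $(n+1)$-th coordinate of any vector in $\cL'$ is an integer multiple of $\tau d/2$, this forces that coordinate to lie in $\{0, \pm \tau d/2\}$, i.e.\ $\vect{v} = \pm(\vect{z}'-\vect{t},0)$ or $\vect{v} = \pm(\vect{z}-\vect{t},-\tau d/2)$ for lattice vectors $\vect{z},\vect{z}' \in \cL$. In the latter case, $\vect{v}|_{[n]} \pm \vect{t}$ is a lattice point within $\ell_\infty$-distance $\frac{\xi(2-\gamma)d}{1-\gamma} < \tau d$ of $\vect{t}$, so once $d$ is within a factor $1+1/n$ of the true distance this gives a $2\tau$-approximate closest vector.

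The heart of the argument is ensuring that the sieve actually outputs a vector with nonzero $(n+1)$-th coordinate — i.e.\ one from $\cL'\setminus(\cL\pm\vect{t},0)$ — rather than only copies of honest lattice vectors. For this I would invoke the tossing argument of Section~\ref{sec:svpi}, now with the vector $\vect{u} = (\vect{z}^*-\vect{t},-\tau d/2)$, where $\vect{z}^*$ is the closest lattice vector. Setting $D_1 = \balli_n(\xi d)\cap\balli_n(-\vect{u},\xi d)$ and $D_2 = \balli_n(\xi d)\cap\balli_n(\vect{u},\xi d)$, Lemma~\ref{lem:overlap} bounds the probability $q$ that a random perturbation lies in $D_1\cup D_2$ from below by $2^{-c_s n}(1-\tfrac{\tau}{4\xi})$, which is positive precisely when $\xi > \tau/4$; combined with $\xi > 1/2$ this is the stated hypothesis. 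Then, exactly as in Lemma~\ref{lem:multI_zero_approx}, starting from $N \geq \frac{2}{q}(k|C|+1)$ pairs leaves at least one surviving pair whose perturbation lies in $D_1\cup D_2$; applying $\sigma$ to that perturbation with probability $1/2$ shifts the associated lattice vector by $\pm\vect{u}$, so with probability at least $1/2$ the output has nonzero last coordinate and is the desired approximate closest vector.

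Finally, the complexity and parameter bookkeeping: the number of centres per sieve iteration is $|C| \le 2^{c_c n}$ with $c_c = \log(1+\lfloor 2/\gamma\rfloor)$ by Lemma~\ref{lem:multCentre}, and $q^{-1} = 2^{c_s n + o(n)}$ with $c_s = -\log(1-\tfrac{1}{2\xi})$, so $N = 2^{(c_s+c_c)n+o(n)}$ suffices and the linear sieve runs in time $2^{(c_s+c_c)n+o(n)}$ over polynomially many iterations and polynomially many guesses of $d$ (via Babai's algorithm). Substituting $\gamma = 1/2 + o(1)$ gives $c_c = \log 5 + o(1)$ — wait, one must instead take $\gamma$ so that $\lfloor 2/\gamma\rfloor = 3$, i.e.\ $\gamma \to 2/3^{-}$... but the statement says $\gamma = 1/2 + o(1)$, for which $\lfloor 2/\gamma \rfloor = 3$ as $\gamma \to 1/2^{+}$ gives $2/\gamma \to 4^-$, hence $2^{c_c} = 4$; and $\xi = \tau/3$ gives $2^{c_s} = (1-\tfrac{3}{2\tau})^{-1} = \frac{2\tau}{2\tau-3}$, for a total running time $4^n\cdot\bigl(\tfrac{2\tau}{2\tau-3}\bigr)^n$ as claimed, with the approximation relation $\tau = \frac{\xi(2-\gamma)}{1-\gamma}+o(1)$ checked to be consistent with $\gamma=1/2$, $\xi=\tau/3$. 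The main obstacle is not any single step but the simultaneous juggling of the three constraints on $\xi$ — the lower bound $\max(1/2,\tau/4)$ needed for $q>0$, the upper bound $\frac{(1-\gamma)\tau}{2-\gamma}$ needed to exclude spurious multiples of $\vect{t}$, and the equality $\tau=\frac{\xi(2-\gamma)}{1-\gamma}$ tying $\xi$, $\gamma$, $\tau$ together — and verifying these are mutually compatible at the claimed optimum.
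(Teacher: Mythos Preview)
Your proposal is correct and follows essentially the same route as the paper: construct the $(n+1)$-dimensional lattice $\cL'$, sieve down to vectors of $\ell_\infty$-norm below $\tau d$ so the last coordinate is forced into $\{0,\pm\tau d/2\}$, and use the tossing argument with $\vect{u}=(\vect{z}^*-\vect{t},-\tau d/2)$ together with Lemma~\ref{lem:overlap} to guarantee a survivor with nonzero last coordinate, exactly mirroring Lemma~\ref{lem:multI_zero_approx}. Your bookkeeping on $c_c$ and $c_s$ at $\gamma=1/2+o(1)$, $\xi=\tau/3$ is also right (indeed $\lfloor 2/\gamma\rfloor=3$ there), and your closing remark about the simultaneous constraints on $\xi$ is precisely the compatibility the paper is implicitly assuming.
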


%-------------------------------------------------------------------------------
%		HEURISTIC
%---------------------------------------------------------------------------
\section{Heuristic algorithm for $\svpi$}
\label{sec:svpi_heuristic}

Nguyen and Vidick \cite{2008_NV} introduced a heuristic variant of the AKS sieving algorithm.
We have used it to solve $\svpi$.

The basic framework is similar to AKS, except that here we do not work with perturbation vectors.
We start with a set $S$ of uniformly sampled lattice vectors of norm $2^{O(n)} \mini_1(\cL)$.
These are iteratively fed into a sieving procedure (Algorithm \ref{alg:latSieve}) which when provided with a list of lattice 
vectors of norm, say $R$, will return a list of lattice vectors of norm at most $\gamma R$.
In each iteration of the sieve a number of vectors are identified as ``centres''.
If a vector is within distance $\gamma R$ from a centre, we subtract it from the centre and add the resultant to the output 
list. 
The iterations continue till the list $S$ of vectors currently under consideration is empty.
The size of $S$ can decrease either due to elimination of zero vectors at steps \ref{NV:zero1} and \ref{NV:zero2} of Algorithm 
\ref{alg:NV} or due to removal of ``centres'' in Algorithm \ref{alg:latSieve}.
After a linear number of iterations we expect to be left with a list of very short vectors and then we output the one with the
minimum norm.

In order to have the shortest vector (or a proper approximation of it) with a good probability, we have to ensure that we do 
not end up with a list of all zero-vectors (indicating an end of the sieving iterations) ``too soon'' (say, after sub-linear
number of iterations).

We make the following assumption about the distribution of vectors at any stage of the algorithm.
%---------------------------------------------------------------
\begin{heuristic}
 At any stage of the algorithm the vectors in $S \cap \balli_n(\gamma R,R)$ are uniformly distributed in
 $\balli_n(\gamma R,R) = \{ \vect{x} \in \real^n : \gamma R < \|\vect{x}\|_{\infty} \leq R \}$.
 
 \label{heur:NV}
\end{heuristic}
%------------------------------------------------------------------------------------
\begin{algorithm}
 \caption{SVP algorithm in $\ell_{\infty}$ norm using Lattice sieve \cite{2008_NV}}
 \label{alg:NV}
 
 \KwIn{(i) A basis $\vect{B} = [\vect{b}_1, \ldots \vect{b}_n]$ of a lattice $\cL$, 
 (ii) Sieve factor $1/2 < \gamma < 1$, (iii) Number N.}
 \KwOut{ A short non-zero vector of $\cL$.}
 
 $S \leftarrow \emptyset$ \;
 \For{$j = 1$ to $N$}
 {
    $S \leftarrow S \cup \text{sampling}(\vect{B})$ using Klein's algorithm \cite{2000_K} \; \label{NV:sample}
 }
 Remove all zero vectors from $S$ \;	\label{NV:zero1}
 $S_0 \leftarrow S$ \;
 \While{$S \neq \emptyset$}
 {
    $S_0 \leftarrow S$ \;
    $S \leftarrow \text{latticeSieve}(S,\gamma)$ using Algorithm \ref{alg:latSieve} \;	\label{NV:sieve}
    Remove all zero vectors from $S$ \;	\label{NV:zero2}
 }
 Compute $\vect{v}_0 \in S_0$ such that $\| \vect{v}_0\|_{\infty} = \min_{\vect{v} \in S_0} \| \vect{v}\|_{\infty}$ \;
 \label{NV:short}
 \Return $\vect{v}_0$ \;
\end{algorithm}

%-------------------
\begin{algorithm}
 \caption{Lattice sieve}
 \label{alg:latSieve}
 
 \KwIn{(i) A subset $S \subseteq \balli_n(R) \cap \cL$, (ii) Sieve factor $1/2 < \gamma < 1$.}
 \KwOut{A subset $S' \subseteq \balli_n(\gamma R) \cap \cL$. }
 
 $R \leftarrow \max_{\vect{v} \in S} \| \vect{v}\|_{\infty} $ \;
 $C \leftarrow \emptyset, \qquad S' \leftarrow \emptyset$ \;
 \For{$\vect{v} \in S$}
 {
    \eIf{$\|\vect{v}\|_{\infty} \leq \gamma R$}
    {
	$S' \leftarrow S' \cup \{ \vect{v} \}$ \;
    }
    {
	\eIf{$\exists \vect{c} \in C$ such that $\|\vect{v} - \vect{c}\|_{\infty} \leq \gamma R$ }
	{
	    $S' \leftarrow S' \cup \{\vect{v}-\vect{c}\}$ \; \label{latSieve:sub}
	}
	{
	    $C \leftarrow C \cup \{ \vect{v} \}$ \;
	}   
    }
 }
 \Return $S'$ \;
\end{algorithm}
%------------------------------------------------------------------------------------
Now after each sieving iteration we get a zero vector if there is a ``collision'' of a vector with a ``centre'' vector.
With the above assumption we can have following estimate about the expected number of collisions.
\begin{lemma}[\cite{2008_NV}]

Let $p$ vectors are randomly chosen with replacement from a set of cardinality $N$.
Then the expected number of different vectors picked is $N-N(1-\frac{1}{N})^p$. \\
So the expected number of vectors lost through collisions is $p-N+N(1-\frac{1}{N})^p$.
 \label{lem:NVcollision}
\end{lemma}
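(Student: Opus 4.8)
The plan is to use indicator random variables together with linearity of expectation, which reduces the whole computation to the probability that a single fixed element of the ground set is never selected in the $p$ draws.

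First I would fix the ground set $U$ with $|U| = N$ and, for each $u \in U$, introduce the indicator $X_u$ that equals $1$ if $u$ is chosen in at least one of the $p$ independent draws and $0$ otherwise. The number of distinct elements picked is then $X = \sum_{u \in U} X_u$, so by linearity of expectation $\E[X] = \sum_{u \in U} \Pr[X_u = 1]$. Next I would compute $\Pr[X_u = 1] = 1 - \Pr[X_u = 0]$: since the $p$ draws are independent and uniform over $U$, a single draw avoids $u$ with probability $1 - 1/N$, hence $\Pr[X_u = 0] = \bigl(1 - 1/N\bigr)^p$. Therefore $\Pr[X_u = 1] = 1 - \bigl(1 - 1/N\bigr)^p$, and summing over the $N$ choices of $u$ gives $\E[X] = N\bigl(1 - (1-1/N)^p\bigr) = N - N(1-1/N)^p$, which is the first claimed formula.

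Finally, the number of vectors \emph{lost through collisions} is by definition the number of draws $p$ minus the number of distinct elements obtained, i.e. $p - X$; taking expectations and substituting the previous line yields $\E[p - X] = p - N + N(1-1/N)^p$, as claimed. I expect no genuine obstacle here: the combinatorial identity is unconditional, and the only point worth flagging in the write-up is that in the application the ``draws'' are the vectors that land in a fixed centre's bucket during a sieve step, and the independence/uniformity used above is precisely what Heuristic~\ref{heur:NV} is invoked to supply.
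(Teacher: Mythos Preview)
Your argument is correct and is the standard indicator-variable proof of this fact. The paper itself does not give a proof of this lemma at all; it simply cites it from \cite{2008_NV}, so there is nothing to compare against. Your closing remark about Heuristic~\ref{heur:NV} supplying the independence/uniformity is a nice touch for the application, though strictly speaking the lemma as stated is a pure combinatorial identity and does not depend on it.
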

This number is negligible for $p << \sqrt{N}$.
Since the expected number of lattice points inside a ball of radius $R/\mini_1$ is $O(R^n)$, the effect of collisions
remain negligible till $R/\mini_1 < |S|^{2/n}$.
It can be shown that it is sufficient to take $|S| \approx (4/3)^n$, which gives $R/\mini_1 \approx 16/9$.
So collisions are expected to become significant only when we already have a good estimate of $\mini_1$, and even then
collisions will imply we had a good proportion of lattice vectors in the previous iteration and thus with good probability 
we expect to get the shortest vector or a constant approximation of it at step \ref{NV:short} of Algorithm \ref{alg:NV}.

Here we would like to make some comments about the initial sampling of lattice vectors at step \ref{NV:sample} of 
Algorithm \ref{alg:NV}.
Due to our assumption (Heuristic \ref{heur:NV}) we have to ensure that the lattice points are uniformly distributed in the 
spherical shell or corona $\balli_n(\gamma R, R)$ at this stage, too.
As in \cite{2008_NV} we can use Klein's randomized variant \cite{2000_K} of Babai's nearest plane algorithm \cite{1986_B}.
Intuitively, what we have to ensure is that the sampled points should not be biased towards a single direction.
Gentry et al. \cite{2008_GPV} gave a detailed analysis of Klein's algorithm and proved the following:
\begin{theorem}[\cite{2000_K, 2008_GPV}]
 Let $\vect{B}=[\vect{b}_1, \ldots, \vect{b}_n]$ be any basis of an $n$-dimensional lattice $\cL$ and 
 $s \geq \|\vect{B}\|_2 \cdot \sqrt{\ln (2n+4)/\pi}$.
 There exists a randomized polynomial time algorithm whose output distribution is within negligible statistical distance of 
 the restriction to $\cL$ of a Gaussian centered at $\vect{0}$ and with variance $\sqrt{2\pi s}$, i.e. with density 
 proportional to $\rho(\vect{v}) = \exp (-\pi \|\vect{v}\|_2^2/s^2)$.
\end{theorem}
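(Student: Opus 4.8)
This is the discrete-Gaussian sampler of Klein, with the closeness to the Gaussian established by the analysis of Gentry, Peikert and Vaikuntanathan~\cite{2008_GPV}; I would reproduce that analysis in four steps. The sampler is the randomized version of Babai's nearest-plane procedure: compute the Gram--Schmidt orthogonalization $\tilde{\bb}_1,\dots,\tilde{\bb}_n$ of $\bB$, so that $\bb_i = \tilde{\bb}_i + \sum_{j<i}\mu_{ij}\tilde{\bb}_j$; writing $\rho_t(x) = \exp(-\pi x^2/t^2)$ and $\rho_{t,c}(\Z) = \sum_{k\in\Z}\rho_t(k-c)$, set $\vect{c}_n = \vect{0}$ and, for $i=n,n-1,\dots,1$, put $c_i = \langle\vect{c}_i,\tilde{\bb}_i\rangle/\|\tilde{\bb}_i\|_2^2$, draw an integer $z_i$ with $\Pr[z_i=k] = \rho_{s_i}(k-c_i)/\rho_{s_i,c_i}(\Z)$ where $s_i := s/\|\tilde{\bb}_i\|_2$, and set $\vect{c}_{i-1} = \vect{c}_i - z_i\bb_i$; output $\bv := \sum_{i=1}^n z_i\bb_i = -\vect{c}_0 \in \cL$. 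Each one-dimensional sample has effectively finite support and is produced in polynomial time by rejection sampling, so the whole procedure runs in polynomial time.

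Next I would compute the output distribution exactly. Since the $z_i$ are drawn independently given the running centers, for a fixed $\bv = \sum z_i\bb_i$ we have $\Pr[\text{output}=\bv] = \prod_{i=1}^n \big(\rho_{s_i}(z_i-c_i)/\rho_{s_i,c_i}(\Z)\big)$. Tracking Gram--Schmidt coordinates gives $c_i = -\sum_{j>i}z_j\mu_{ji}$, hence $\bv = \sum_i (z_i-c_i)\tilde{\bb}_i$ and, by orthogonality of the $\tilde{\bb}_i$, $\|\bv\|_2^2 = \sum_i (z_i-c_i)^2\|\tilde{\bb}_i\|_2^2$; the numerators therefore telescope to $\exp(-\pi\|\bv\|_2^2/s^2)$, so that
\[
\Pr[\text{output}=\bv] \;=\; \frac{\exp(-\pi\|\bv\|_2^2/s^2)}{\prod_{i=1}^n \rho_{s_i,c_i}(\Z)}\,.
\]

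The denominator a priori depends on $\bv$, through the centers $c_i$ (which are functions of $z_{i+1},\dots,z_n$), and this is exactly where the hypothesis on $s$ enters. Recall the smoothing parameter $\eta_\epsilon(\Z)$, the least $t>0$ with $\sum_{0\ne k\in\Z}e^{-\pi t^2 k^2}\le\epsilon$, which satisfies $\eta_\epsilon(\Z)\le\sqrt{\ln(2(1+1/\epsilon))/\pi}$. Since $\|\tilde{\bb}_i\|_2 \le \|\bb_i\|_2 \le \|\bB\|_2$, the bound $s \ge \|\bB\|_2\sqrt{\ln(2n+4)/\pi}$ forces $s_i = s/\|\tilde{\bb}_i\|_2 \ge \eta_\epsilon(\Z)$ for every $i$, for a suitably small (negligible) choice of $\epsilon$. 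By Poisson summation over $\Z$, $s_i \ge \eta_\epsilon(\Z)$ implies $\rho_{s_i,c_i}(\Z) \in [1-\epsilon,\,1+\epsilon]\cdot s_i$ for \emph{every} real $c_i$; hence $\prod_i \rho_{s_i,c_i}(\Z)$ agrees with $\prod_i s_i$ up to a factor lying in $[(1-\epsilon)^n,(1+\epsilon)^n]$, and this correction does not depend on $\bv$.

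Combining the last two steps, $\Pr[\text{output}=\bv]$ is proportional to $\exp(-\pi\|\bv\|_2^2/s^2)$ up to a multiplicative error of $(1\pm\epsilon)^{\pm n}$; summing over $\bv\in\cL$, normalizing, and comparing with the target density then shows that the output distribution is within statistical distance $O(n\epsilon)$ of the discrete Gaussian on $\cL$ with density proportional to $\rho_s$, which is negligible once $\epsilon$ is negligible. The only substantive point is the smoothing step — the fact that, once $s_i$ exceeds the smoothing parameter of $\Z$, the one-dimensional partition function $\rho_{s_i,c_i}(\Z)$ is essentially independent of the shift $c_i$; the description of the sampler, its polynomial running time, and the telescoping identity are all routine.
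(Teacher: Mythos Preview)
The paper does not prove this statement at all: it is quoted verbatim as a known result from Klein~\cite{2000_K} and Gentry--Peikert--Vaikuntanathan~\cite{2008_GPV}, with no accompanying argument. Your sketch is precisely the GPV analysis of Klein's randomized nearest-plane sampler, and it is the right approach; there is nothing in the paper to compare it against.

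One quantitative point worth flagging. You assert that the hypothesis $s \ge \|\bB\|_2\sqrt{\ln(2n+4)/\pi}$ lets you take $\eps$ negligible. But the bound $\eta_\eps(\Z)\le\sqrt{\ln(2(1+1/\eps))/\pi}$ combined with $s_i \ge \sqrt{\ln(2n+4)/\pi}$ only guarantees $s_i\ge\eta_\eps(\Z)$ for $\eps \ge 1/(n+1)$, which is inverse-polynomial, not negligible; the resulting statistical distance $O(n\eps)$ is then merely $O(1)$. The standard GPV statement assumes $s \ge \omega(\sqrt{\log n})\cdot\max_i\|\tilde{\bb}_i\|_2$ precisely so that $\eps$ can be taken negligible. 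This discrepancy is in the paper's restatement of the theorem rather than in your argument, but as written your ``suitably small (negligible) choice of $\eps$'' does not follow from the stated hypothesis, and you should either note the gap or silently strengthen the hypothesis to match the original.
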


Using Fact \ref{fact:lp}, 
\begin{eqnarray}
 \exp (-n\pi \|\vect{v}\|_p^2/s^2) \leq \rho(\vect{v}) \leq \exp (-\pi \|\vect{v}\|_p^2/s^2) \qquad [\text{For } p \geq 2]
 \nonumber
\end{eqnarray}
and
\begin{eqnarray}
 \exp (-\pi \|\vect{v}\|_p^2/s^2) \leq \rho(\vect{v}) \leq \exp (-\pi \|\vect{v}\|_p^2/ns^2) \qquad [\text{For } p < 2]
 \nonumber
\end{eqnarray}
Assuming $\|\vect{B}\|_p \in 2^{O(n)} \minp(\cL)$ we can conclude that the above algorithm can be used to uniformly sample 
lattice points of norm at most $2^{O(n)} \minp(\cL)$ at step \ref{NV:sample} of Algorithm \ref{alg:NV}, for all $p \geq 1$.

%-------------------Complexity------------------------
We will now analyze the complexity of Algorithm \ref{alg:NV}.
For this the crucial part is to assess the number of centres (or $|C|$), which is done in the following lemma.

Here we observe that the sieve factor $\gamma$ is important in determining the number of ``centres'' and hence the 
running time of Algorithm \ref{alg:NV}.
Note that in Algorithm \ref{alg:latSieve} each ``centre'' covers all the lattice vectors within $\gamma R$ radius from it.
Assume without loss of generality that we selected the zero vector as the first ``centre''.
Thus to get an upper bound on the number of ``centres'' covering all lattice vectors, we need to upper bound the 
number of $\gamma R$ radius balls needed to cover $\balli_n(\gamma R,R)$.
For this it is sufficient to lower bound the volume of intersection of a $\gamma R$ radius ball with the corona.

\begin{lemma}

 Let $1/2 < \gamma < 1$ and $N_C = k_C^{n} (n+1) $, where 
 $k_C = \Big[ \frac{3}{8}(1+\gamma)+\frac{(1-\gamma)}{4}\ln \gamma \Big]^{-1}$. \\ 
 $S \subset \balli_n(\gamma R,R)$ such that $|S| = N$ and its points are picked independently at random with 
 uniform distribution from $\balli_n(\gamma R,R)$.
  
  If $N_C < N < 2^{n+1}$, then for any $C \subseteq S$ with uniformly distributed points and cardinality at least $N_C$
  we have the following : \\
  For any $\vect{v} \in S$, with high probability $ \exists \vect{c} \in C$ such that 
  $\|\vect{v}-\vect{c}\|_{\infty} \leq \gamma R$.

 \label{lem:latCentreI}
\end{lemma}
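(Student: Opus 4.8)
The plan is to reduce the statement to a single-centre covering probability and then estimate that probability with Lemma~\ref{lem:volNV} (together with Lemma~\ref{lem:overlap}). Fix $\vect v$ in the corona and set
\[
p(\vect v) \;=\; \frac{\bigl|\balli_n(\vect v,\gamma R)\cap\balli_n(\gamma R,R)\bigr|}{\bigl|\balli_n(\gamma R,R)\bigr|}\;,
\]
the fraction of the corona lying within $\ell_\infty$-distance $\gamma R$ of $\vect v$. By Heuristic~\ref{heur:NV} the members of $C$ are at least $N_C$ independent uniform points of $\balli_n(\gamma R,R)$, so the probability that none of them lies within $\gamma R$ of $\vect v$ is $(1-p(\vect v))^{|C|}\le(1-p(\vect v))^{N_C}\le e^{-p(\vect v)N_C}$. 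Hence it suffices to show $p(\vect v)\,N_C=\omega(1)$; in fact I expect $p(\vect v)\,N_C=\Omega(n)$, so that the failure probability is $2^{-\Omega(n)}$, and since $|S|=N<2^{n+1}$ one may if desired additionally union-bound over $\vect v\in S$.

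For the core estimate I would split
\[
\bigl|\balli_n(\vect v,\gamma R)\cap\balli_n(\gamma R,R)\bigr| = \bigl|\balli_n(\vect v,\gamma R)\cap\balli_n(R)\bigr| - \bigl|\balli_n(\vect v,\gamma R)\cap\balli_n(\gamma R)\bigr|\;,
\]
handle the first term by rescaling by $R$ (Fact~\ref{fact:ballRad}) and applying Lemma~\ref{lem:volNV}, and the second term by Lemma~\ref{lem:overlap}, which applies because $1/2<\gamma<1$ forces $2\gamma R > R\ge\|\vect v\|_\infty$ and gives the exact value $\prod_i(2\gamma R-|v_i|)$. Taking expectations over $\vect v$ uniform on the corona (a legitimate step here, since by Heuristic~\ref{heur:NV} the points of $S$ are themselves uniform, so what controls the expected number of uncovered members of $S$ is $\E_{\vect v}[e^{-p(\vect v)N_C}]$), dividing by $|\balli_n(\gamma R,R)|=(2R)^n(1-\gamma^n)$, and noting that the factor $2^n$ turns the bracket $\tfrac34(1+\gamma)+\tfrac{1-\gamma}{2}\ln\gamma$ of Lemma~\ref{lem:volNV} into $\tfrac38(1+\gamma)+\tfrac{1-\gamma}{4}\ln\gamma=k_C^{-1}$, one obtains an average covering probability of order $k_C^{-n}$; since $1/(1-\gamma^n)\ge1$ and the surviving prefactor $\tfrac14(1+\gamma)k_C$ stays bounded away from $0$ for $\gamma\in(1/2,1)$, this is $\ge c\,k_C^{-n}$ for an absolute constant $c>0$. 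Plugging in $N_C=k_C^n(n+1)$ then gives $\Omega(n)$ in the exponent, as required.

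The main obstacle is precisely this passage from the averaged bound of Lemma~\ref{lem:volNV} to something usable for the prescribed $\vect v\in S$: the covering probability $p(\vect v)$ is genuinely non-uniform over the corona — it is largest when $\vect v$ has few large coordinates and degrades near a corner of $[-R,R]^n$ — so one must either be content with a \emph{typical} $\vect v$ (using, as above, that $\vect v\in S$ is itself uniform under Heuristic~\ref{heur:NV}) or else show separately that the set of $\vect v$ with abnormally small $p(\vect v)$ has exponentially small measure, exploiting the product form $\bigl|\balli_n(\vect v,\gamma R)\cap\balli_n(\gamma R,R)\bigr|=\prod_i e_i-\prod_i(2\gamma R-|v_i|)$ with each $e_i\in[\gamma R,2\gamma R]$. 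A secondary point, easy to underestimate, is that the subtracted inner-ball term is \emph{not} lower order — it changes the base of the exponential — so the expectation of $\prod_i(2\gamma R-|v_i|)$ must be evaluated with the same care as the Lemma~\ref{lem:volNV} computation for the two estimates to combine into the claimed order $k_C^{-n}$.
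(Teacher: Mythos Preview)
Your approach is the paper's, carried out with more care than the paper itself. The paper simply plugs $\Omega(\gamma):=\E_{\vect r\sim_U\balli_n(\gamma,1)}\bigl[V_{\vect r}/|\balli_n|\bigr]\approx k_C^{-n}$ from Lemma~\ref{lem:volNV} (with $V_{\vect r}=|\balli_n(\vect r,\gamma)\cap\balli_n|$, the intersection with the \emph{full} unit ball, not the corona) directly into $(1-\Omega(\gamma))^{N_C}$, notes that $N_C\log(1-\Omega(\gamma))\le -N_C\,\Omega(\gamma)\le -(n+1)<-\log N$, and concludes that the expected number of uncovered points of $S$ is below $1$, hence equals $0$ with probability at least $1/2$. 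It does not subtract the inner-ball term, does not renormalize by the corona volume $(1-\gamma^n)|\balli_n|$, and does not address the non-uniformity of $p(\vect v)$ that you correctly flag; in this heuristic section the averaged $\Omega(\gamma)$ is simply treated as if it were the pointwise covering probability. So your two caveats identify genuine looseness in the paper's own argument rather than extra work needed to match it.
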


\begin{proof}
 Assuming Heuristic \ref{heur:NV} holds during every iteration of the sieve,
 the expected fraction of $\balli_n(\gamma R,R)$ that is not covered by $N_C$ balls of radius $\gamma$ centered at 
 randomly chosen points of $\balli_n(\gamma R,R)$ is $(1-\Omega(\gamma))^{N_C}$, 
 where $\Omega(\gamma) = \E_{\vect{r} \sim_U \balli_n(\gamma,1)} \Big[ \frac{V_{\vect{r}}}{\balli_n} \Big] \approx k_C^{-n}$ 
 from Lemma \ref{lem:volNV} part 1 .
 
 Now
 \begin{eqnarray}
  N_C \log(1-\Omega(\gamma)) &\leq& - N_C \Omega(\gamma) \nonumber \\
    &\leq& - (n+1) < -\log N \nonumber
 \end{eqnarray}
Thus the expected fraction of the corona covered by $N_C$ balls is at least $(1 - 2^{-(n+1)})$.
So the expected number of uncovered points is less than 1. 
Since this number is an integer, it is $0$ with probability at least $1/2$.
 
\end{proof}

Note that $k_C$ is a decreasing function of $\gamma$, so we get optimal value for number of centres and hence space and time
complexity, as $\gamma \rightarrow 1$.
This gives a covering type of scenario.

%-----------------Time---------
\begin{theorem}
 The expected space complexity and running time of Algorithm \ref{alg:NV} is at most $k_C^{n+o(n)}$ and $k_C^{2n+o(n)}$
 respectively, where $k_C = \Big[ \frac{3}{8}(1+\gamma)+\frac{(1-\gamma)}{4}\ln \gamma \Big]^{-1} $.
 
 Specifically as $\gamma \rightarrow 1$ we get a space and time complexity of
 $\Big( \frac{4}{3} \Big)^{n+o(n)} = 2^{0.415n+o(n)}$ and $\Big( \frac{4}{3} \Big)^{2n+o(n)} = 2^{0.83n+o(n)}$ respectively.
 \label{thm:latRunI}
\end{theorem}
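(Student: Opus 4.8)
The plan is to combine the covering estimate of Lemma~\ref{lem:latCentreI} (which bounds $|C|$ in each sieve iteration) with the collision estimate of Lemma~\ref{lem:NVcollision} (which controls how fast $|S|$ shrinks), and then simply read off the space and time bounds from the sizes of $S$ and $C$. First I would fix the starting parameters: using Klein's algorithm we sample $N = k_C^{n+o(n)}$ lattice vectors, concretely a small polynomial multiple of $N_C = k_C^n(n+1)$, chosen so that the hypotheses $N_C < N < 2^{n+1}$ of Lemma~\ref{lem:latCentreI} hold (possible since $k_C < 2$ for $\gamma$ close to $1$, which is the regime of interest), and so that every sampled vector has $\ell_\infty$ norm at most $2^{O(n)}\mini_1(\cL)$. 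Since $\gamma$ is a fixed constant and each call to Algorithm~\ref{alg:latSieve} multiplies the bound $R$ on $\|\vect{v}\|_\infty$ by $\gamma$, after $T = O(n/\log(1/\gamma)) = O(n)$ iterations the surviving vectors have norm $O(\mini_1(\cL))$, which bounds the number of sieve iterations by $O(n)$.

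Next I would track $|S|$ across these $O(n)$ iterations. By Lemma~\ref{lem:latCentreI}, under Heuristic~\ref{heur:NV} it suffices to keep $|C| \le N_C = k_C^n(n+1) = k_C^{n+o(n)}$ centres per iteration, so the total number of vectors removed as centres over all iterations is at most $T\cdot N_C = k_C^{n+o(n)}$. For vectors lost to collisions with the zero vector, Lemma~\ref{lem:NVcollision} shows this loss is negligible as long as $|S| \ll \sqrt{M}$ where $M = |\balli_n(R)\cap\cL|$; by Lemma~\ref{lem:latBall}, $M$ is roughly $(R/\mini_1(\cL))^{n+o(n)}$, so collisions stay negligible while $R/\mini_1(\cL) > |S|^{2/n}$, i.e.\ until $R/\mini_1(\cL)$ is already a constant (about $16/9$ when $|S| \approx (4/3)^n$). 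Hence, taking $N$ a sufficiently large polynomial multiple of $N_C$, the list $S$ stays non-empty throughout, and by the time collisions become relevant we are already in the regime where, under Heuristic~\ref{heur:NV}, a constant fraction of $S$ consists of short lattice vectors, so with good probability Step~\ref{NV:short} of Algorithm~\ref{alg:NV} returns the shortest vector or a constant approximation of it.

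Finally I would read off the complexities. At any time the algorithm stores $S$ and $C$, each of size $k_C^{n+o(n)}$, giving space $k_C^{n+o(n)}$. In each of the $O(n)$ iterations, Algorithm~\ref{alg:latSieve} compares every $\vect{v}\in S$ against every $\vect{c}\in C$, costing $|S|\cdot|C|\cdot\poly(n) \le k_C^{2n+o(n)}$; the initial sampling costs $N\cdot\poly(n) = k_C^{n+o(n)}$ and the final minimum-norm search costs $O(N)$, so the overall running time is $k_C^{2n+o(n)}$. For the last assertion, as $\gamma\to 1$ we have $\tfrac{3}{8}(1+\gamma)\to\tfrac34$ and $\tfrac{1-\gamma}{4}\ln\gamma\to 0$, so $k_C\to\tfrac43$, giving space $(4/3)^{n+o(n)} = 2^{0.415n+o(n)}$ and time $(4/3)^{2n+o(n)} = 2^{0.83n+o(n)}$.

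The main obstacle is the second step: making the heuristic argument that the sieve does not collapse to an all-zero list before a good approximation is reached. This rests entirely on Heuristic~\ref{heur:NV}, together with the quantitative collision bound of Lemma~\ref{lem:NVcollision} and the lattice-point count of Lemma~\ref{lem:latBall}; the covering bound of Lemma~\ref{lem:latCentreI} is then what pins down the exponent base $k_C$, and everything else is bookkeeping.
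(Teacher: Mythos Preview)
Your proposal is correct and follows essentially the same approach as the paper: bound $|C|$ per iteration by $N_C = k_C^{n}\cdot\poly(n)$ via Lemma~\ref{lem:latCentreI}, observe that a linear number of iterations suffices since norms shrink by $\gamma$ each time, start with $N = k_C^{n+o(n)}$ samples, and read off space $k_C^{n+o(n)}$ and time $k_C^{2n+o(n)}$ from the quadratic cost of the sieve. If anything, your write-up is more detailed than the paper's own proof, which leaves the collision discussion (your use of Lemma~\ref{lem:NVcollision} and Lemma~\ref{lem:latBall}) to the surrounding text rather than the proof body.
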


\begin{proof}
 
 Let $N_C$ = expected number of centers in each iteration = $\poly(n)k_C^n$ where $k_C$ is as defined in Lemma 
 \ref{lem:latCentreI}. 
 
 Thus each time the lattice sieve is invoked, i.e. in steps 7-11 of Algorithm \ref{alg:NV}, we expect size of $S$ to 
 decrease by aproximately $k_C^n$, provided it satisfies Heuristic \ref{heur:NV}.
 
 We can use the LLL algorithm (Lemma \ref{lem:LLL}) to obtain an estimate of $\mini_1(\cL)$ with approximation factor $2^{n-1}$.
 So we can start with vectors of norm $2^{O(n)} \mini_1(\cL)$.
 In each iteration of lattice sieve the norm of the vectors decrease by a factor $\gamma$.  
 If we start with $N = k_C^{n+o(n)} $ vectors, then after a linear number of iterations 
 we expect to be left with some short vectors.
 
 Since the running time of the lattice sieve is quadratic, the expected running time of the algorithm is at most 
 $k_C^{2n+o(n)}$.
 
\end{proof}

%--------------------------------------------------------------
%	2-LEVEL SIEVE
%--------------------------------------------------------------------
\subsection{Heuristic two-level sieving algorithm for $\svpi$}
\label{svpi_heuristic:heur2}

In order to improve the running time, which is mostly dictated by the number of ``centres'',Wang et al. \cite{2011_WLTB} 
introduced a two-level sieving procedure that improves upon the NV sieve for large $n$.
Here in the first level we identify a set of ``centres'' $C_1$ and to each $\vect{c} \in C_1$ we 
associate vectors within a distance $\gamma_1 R$ from it.
Now within each such $\gamma_1 R$ radius ``big ball'' we have another set of vectors $C_2^{\vect{c}}$, which we call the 
``second-level centre'' .
From each $\vect{c}' \in C_2^{\vect{c}}$ we subtract those vectors which are in $\balli_n(\vect{c}',\gamma_2 R)$ and
add the resultant to the output list.

We have analysed this two-level sieve (Algorithm \ref{alg:latSieve2}) in the $\ell_{\infty}$ norm and also found similar 
improvement in the running time.

%------------------------------------------------------
\begin{algorithm}
 \caption{Two-level heuristic sieve}
 \label{alg:latSieve2}
 
 \KwIn{(i) A subset $S \subseteq \balli_n(R) \cap \cL$, (ii) Sieve factors $1/2<\gamma_2<1<\gamma_1<\sqrt{2}\gamma_2$.}
 \KwOut{A subset $S' \subseteq \balli_n(\gamma_2 R) \cap \cL$. }
 
 $R \leftarrow \max_{\vect{v} \in S} \| \vect{v}\|_{\infty} $ \;
 $C_1 \leftarrow \emptyset, \quad C_2 \leftarrow \emptyset, \quad  S' \leftarrow \emptyset$ \;
 \For{$\vect{v} \in S$}
 {
    \eIf{$\|\vect{v}\|_{\infty} \leq \gamma_2 R$}
    {
	$S' \leftarrow S' \cup \{ \vect{v} \}$ \;
    }
    {
	\eIf{$\exists \vect{c} \in C_1$ such that $\|\vect{v} - \vect{c}\|_{\infty} \leq \gamma_1 R$ }
	{
	   \eIf{$\exists \vect{c}' \in C_2^{\vect{c}}$ such that $\| \vect{v}-\vect{c}'\|_{\infty} \leq \gamma_2 R $}
	   {
	      $S' \leftarrow S' \cup \{ \vect{v}-\vect{c}' \}$	\;
	   }
	   {
	      $C_2^{\vect{c}} \leftarrow C_2^{\vect{c}} \cup \{ \vect{v} \} $	\;
	   }
	}
	{
	    $C_1 \leftarrow C_1 \cup \{ \vect{v} \}, \quad C_2 \leftarrow C_2 \cup \{ C_2^{\vect{v}} = \{ \vect{v} \} \}$ \;
	}   
    }
 }
 \Return $S'$ \;
\end{algorithm}
%------------------------------------------------

To analyze the complexity of Algorithm \ref{alg:NV} with the two-level sieving procedure (Algorithm \ref{alg:latSieve2})
we need to count the number of centres in first level i.e. $|C_1|=N_{C_1}$, which is given in Lemma \ref{lem:heurCentre_1}.
For each $\vect{c} \in C_1$ we count the number of ``second-level centres'' i.e. $N_{C_2^{\vect{c}}}=|C_2^{\vect{c}}|$.

\begin{lemma}

 Let $1/2 < \gamma_2 < 1 < \gamma_1 < \sqrt{2}\gamma_2$ and $N_{C_1} = k_{C_1}^{n} (n+1) $, where
 $k_{C_1} = \Big[ \frac{3-\gamma_2}{8}+\frac{\gamma_1}{2}+\frac{(\gamma_1-1)^2}{4(1-\gamma_2)}\ln\gamma_2 \Big]^{-1}$. \\
 $S \subset \balli_n(\gamma_2 R,R)$ such that $|S| = N$ and its points are picked independently at random with 
 uniform distribution from $\balli_n(\gamma_2 R,R)$.
  
  If $N_{C_1} < N < 2^{n+1}$, then for any $C_1 \subseteq S$ with uniformly distributed points and cardinality at least 
  $N_{C_1}$ we have the following : 
  $\forall \vect{v} \in S$ with high probability $\exists \vect{c} \in C_1$ such that 
  $\|\vect{v}-\vect{c}\|_{\infty} \leq \gamma_1 R$.
 \label{lem:heurCentre_1}
\end{lemma}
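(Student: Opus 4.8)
The plan is to transcribe, almost verbatim, the proof of Lemma~\ref{lem:latCentreI}, with the single structural change that the covering balls now have radius $\gamma_1 R > R$ rather than $\gamma_2 R$, so the relevant volume estimate becomes part~1 of Lemma~\ref{lem:app_2levelOverlap1} instead of part~1 of Lemma~\ref{lem:volNV}. The hypotheses of the lemma already hand us what we need: $S$, and hence any $C_1 \subseteq S$, is a set of independent uniform points in the corona $\balli_n(\gamma_2 R, R)$ (this is exactly the place where Heuristic~\ref{heur:NV}, applied with sieve factor $\gamma_2$, is being invoked when the lemma is used inside Algorithm~\ref{alg:latSieve2}). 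After rescaling the ambient space by $1/R$, the corona becomes $\balli_n(\gamma_2, 1)$ and each covering ball has radius $\gamma_1$, so the quantity governing the cover is precisely $V_{\vect r} = |\balli_n(\vect r, \gamma_1) \cap \balli_n|$ with $\vect r \sim_U \balli_n(\gamma_2, 1)$, which Lemma~\ref{lem:app_2levelOverlap1} evaluates.

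Next I would read off the covering fraction. Part~1 of Lemma~\ref{lem:app_2levelOverlap1} gives
\[
\Omega(\gamma_1,\gamma_2) \;:=\; \E_{\vect r \sim_U \balli_n(\gamma_2,1)}\Big[\tfrac{V_{\vect r}}{|\balli_n|}\Big] \;=\; \Big[\tfrac{1-\gamma_2}{4}+\tfrac{\gamma_1}{2}\Big]\, k_{C_1}^{-(n-1)},
\]
where $k_{C_1} = \big[\tfrac{3-\gamma_2}{8}+\tfrac{\gamma_1}{2}+\tfrac{(\gamma_1-1)^2}{4(1-\gamma_2)}\ln\gamma_2\big]^{-1}$ is exactly the constant appearing in the statement. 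The prefactor $\tfrac{1-\gamma_2}{4}+\tfrac{\gamma_1}{2}$ is independent of $n$ and, since $\gamma_1>1$, is bounded below by $\tfrac12$; hence $\Omega(\gamma_1,\gamma_2) = \Theta\!\big(k_{C_1}^{-n}\big)$. Taking $N_{C_1} = k_{C_1}^{n}(n+1)$ (with the usual polynomial slack to absorb the $n$-independent constant) therefore yields $N_{C_1}\,\Omega(\gamma_1,\gamma_2) \ge n+1$.

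The covering argument is then identical to that of Lemma~\ref{lem:latCentreI}. Under the (assumed) uniformity, the expected fraction of the corona left uncovered by the $N_{C_1}$ radius-$\gamma_1 R$ balls centred at the points of $C_1$ is $\big(1-\Omega(\gamma_1,\gamma_2)\big)^{N_{C_1}}$, and
\[
N_{C_1}\log\big(1-\Omega(\gamma_1,\gamma_2)\big) \;\le\; -N_{C_1}\,\Omega(\gamma_1,\gamma_2) \;\le\; -(n+1) \;<\; -\log N,
\]
using $N < 2^{n+1}$ in the last inequality. Hence the expected covered fraction is at least $1-2^{-(n+1)}$, so among the $N$ uniform points of $S$ the expected number of uncovered ones is at most $N\cdot 2^{-(n+1)} < 1$; being a non-negative integer it is $0$ with the claimed probability, which is precisely the assertion that every $\vect v\in S$ has some $\vect c\in C_1$ with $\|\vect v-\vect c\|_\infty \le \gamma_1 R$.

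The only step that is not a pure transcription — and the one I would treat as the main obstacle — is extracting the correct exponential base $k_{C_1}$ from Lemma~\ref{lem:app_2levelOverlap1}: one must verify that $k_{C_1}$ is well defined and positive on the whole admissible range $1/2<\gamma_2<1$, $1<\gamma_1<\sqrt{2}\,\gamma_2$ (positivity of the bracket is not automatic, since the term $\tfrac{(\gamma_1-1)^2}{4(1-\gamma_2)}\ln\gamma_2$ is negative as $\ln\gamma_2<0$), and that the $n$-independent prefactor $\tfrac{1-\gamma_2}{4}+\tfrac{\gamma_1}{2}$ does not affect the exponential rate $k_{C_1}^{n}$. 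As in Lemma~\ref{lem:latCentreI}, the whole estimate is conditional on Heuristic~\ref{heur:NV}, which is doing the real work by letting us treat both the target vectors and the centres as i.i.d.\ uniform in the corona.
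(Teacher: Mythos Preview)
Your proposal is correct and is exactly the approach the paper takes: the paper's proof of this lemma consists of the single sentence ``The proof is similar to Lemma~\ref{lem:latCentreI},'' and you have carried out precisely that transcription, substituting Lemma~\ref{lem:app_2levelOverlap1} for Lemma~\ref{lem:volNV} to obtain the base $k_{C_1}$. Your additional remarks on the positivity of the bracket and the irrelevance of the $n$-independent prefactor go slightly beyond what the paper itself spells out.
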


\begin{proof}
 The proof is similar to Lemma \ref{lem:latCentreI}.
\end{proof}

Now we cover $\balli_n \cap \balli_n(\vect{r},\gamma_1)$, where $\vect{r} \in \balli_n(\gamma_2,1)$, with smaller balls
$\balli_n(\vect{q},\gamma_2)$, where $\vect{q} \in \balli_n(\gamma_2,1) \cap \balli_n(\vect{r},\gamma_1)$.
Let $V_{\vect{q}} = |\balli_n(\vect{q},\gamma_2) \cap \balli_n|$.
We can apply Lemma \ref{lem:volNV} to conclude that 
\begin{eqnarray}
 \E_{\vect{q} \sim_U \balli_n(\gamma_2,1)} \Big[V_{\vect{q}}\Big] \approx 
 \Big[ \frac{3}{4}(1+\gamma_2)+\frac{(1-\gamma_2)}{2}\ln\gamma_2 \Big]^n	\nonumber
\end{eqnarray}
and 
\begin{equation}
 \Omega_n'(\gamma_2) = \E_{\vect{q} \sim_U \balli_n(\gamma_2,1)} \Big[\frac{V_{\vect{q}}}{\balli_n}\Big] \approx 
 \Big[ \frac{3}{8}(1+\gamma_2)+\frac{(1-\gamma_2)}{4}\ln\gamma_2 \Big]^n	\nonumber
\end{equation}
Again from Lemma \ref{lem:app_2levelOverlap1}
\begin{eqnarray}
   \Omega_n'(\gamma_1,\gamma_2)= \E_{\vect{r} \sim_U \balli_n(\gamma_2,1)} \Big[\frac{V_{\vect{r}}}{\balli_n} \Big] \approx
   \Big[ \frac{3-\gamma_2}{8}+\frac{\gamma_1}{2} +
   \frac{(\gamma_1-1)^2}{4(1-\gamma_2)} \ln \gamma_2 \Big]^n	\nonumber
 \end{eqnarray}
where $V_{\vect{r}} = |\balli_n(\vect{r},\gamma_1) \cap \balli_n|$. 

Hence the fraction of $\balli_n \cap \balli_n(\vect{r},\gamma_1)$ covered by $\balli_n(\vect{q},\gamma_2)$ is
\begin{eqnarray}
 \Omega_n(\gamma_1,\gamma_2) = \frac{\Omega_n'(\gamma_2)}{\Omega_n'(\gamma_1,\gamma_2)} \approx
 \Big[ \frac{\frac{3}{4}(1+\gamma_2)+\frac{(1-\gamma_2)}{2}\ln\gamma_2 }
 {\frac{3-\gamma_2}{4}+\gamma_1+\frac{(\gamma_1-1)^2}{2(1-\gamma_2)} \ln \gamma_2 } \Big]^n	\nonumber
\end{eqnarray}

Using similar arguments of Lemma \ref{lem:latCentreI} (or Lemma \ref{lem:heurCentre_1}) we can estimate the number of centres 
at second level.
In the following lemma we bound the number of centres within each $\gamma_1 R$ radius ``big ball'' centred at some point 
$\vect{c}$ (say).
\begin{lemma}

 Let $1/2 < \gamma_2 < 1 < \gamma_1 < \sqrt{2}\gamma_2$ and $N_{C_2^{\vect{c}}} = k_{C_2^{\vect{c}}}^{n} (n+1) $, where
 $k_{C_2^{\vect{c}}} = \Big[ \frac{\frac{3}{4}(1+\gamma_2)+\frac{(1-\gamma_2)}{2}\ln\gamma_2 }
 {\frac{3-\gamma_2}{4}+\gamma_1+\frac{(\gamma_1-1)^2}{2(1-\gamma_2)} \ln \gamma_2 } \Big]^{-1}$. \\ 
 $S \subset \balli_n(\gamma_2 R,R) \bigcap \balli_n(\vect{c},\gamma_1R)$,where $\vect{c} \in \balli_n(\gamma_2 R,R)$, 
 such that $|S| = N$ and its points are picked independently at random with uniform distribution.
  
  If $N_{C_2^{\vect{c}}} < N < 2^{n+1}$, then for any $C_2^{\vect{c}} \subseteq S$ with uniformly distributed points and 
  cardinality at least $N_{C_2^{\vect{c}}}$ we have the following : 
  $\forall \vect{v} \in S$ with high probability $\exists \vect{c}' \in C_2^{\vect{c}}$ such that 
  $\|\vect{v}-\vect{c}'\|_{\infty} \leq \gamma_2 R$.
 \label{lem:heurCentre_2}
\end{lemma}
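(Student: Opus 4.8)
The plan is to run the argument of Lemma~\ref{lem:latCentreI} almost verbatim, but now inside the ``big ball'' $\balli_n(\vect{c},\gamma_1 R)$ and with the single-ball covering fraction $\Omega(\gamma)$ there replaced by the two-level covering fraction $\Omega_n(\gamma_1,\gamma_2)$ computed in the paragraph preceding the statement. First I would rescale everything by $R$ and put $\vect{r}=\vect{c}/R\in\balli_n(\gamma_2,1)$, so that the region under consideration becomes $K:=\balli_n(\gamma_2,1)\cap\balli_n(\vect{r},\gamma_1)$; by hypothesis the points of $S$ lying in $K$ --- hence those of $C_2^{\vect{c}}$, together with any $\vect{v}$ we wish to cover --- are i.i.d.\ uniform in $K$, which is Heuristic~\ref{heur:NV} restricted to $K$. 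Averaging over a uniform $\vect{v}\in K$, the probability that a single uniform centre $\vect{c}'\in K$ satisfies $\|\vect{v}-\vect{c}'\|_\infty\le\gamma_2$ is the expected fraction of the region covered by a random $\gamma_2$-ball, namely $\Omega_n(\gamma_1,\gamma_2)\approx k_{C_2^{\vect{c}}}^{-n}$ by the computation before the lemma (the covered volume coming from Lemma~\ref{lem:volNV}, the normalisation from Lemma~\ref{lem:app_2levelOverlap1}). Hence the expected fraction left uncovered by the $N_{C_2^{\vect{c}}}$ centres of $C_2^{\vect{c}}$ is $\bigl(1-\Omega_n(\gamma_1,\gamma_2)\bigr)^{N_{C_2^{\vect{c}}}}$.

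The remainder is the same one-line estimate as in Lemma~\ref{lem:latCentreI}. Using $\log(1-x)\le -x$ and $N_{C_2^{\vect{c}}}=k_{C_2^{\vect{c}}}^{\,n}(n+1)$,
\[
N_{C_2^{\vect{c}}}\,\log\bigl(1-\Omega_n(\gamma_1,\gamma_2)\bigr)\ \le\ -\,N_{C_2^{\vect{c}}}\,\Omega_n(\gamma_1,\gamma_2)\ =\ -(n+1)\ <\ -\log N,
\]
the last inequality holding because $N<2^{n+1}$. So the expected uncovered fraction is at most $2^{-(n+1)}$, whence the expected number of points of $S$ (of which there are $N<2^{n+1}$) having no centre of $C_2^{\vect{c}}$ within $\ell_\infty$-distance $\gamma_2 R$ is strictly below $1$; being a nonnegative integer it equals $0$ with probability at least $1/2$, which is the claim. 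The hypothesis $N_{C_2^{\vect{c}}}<N$ is exactly what makes a subset $C_2^{\vect{c}}\subseteq S$ of the required size available, and taking $|C_2^{\vect{c}}|=N_{C_2^{\vect{c}}}$ is the worst case since extra centres only help.

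The only step with real content beyond Lemma~\ref{lem:latCentreI}, and the one I would treat most carefully, is the evaluation $\Omega_n(\gamma_1,\gamma_2)\approx k_{C_2^{\vect{c}}}^{-n}$: one has to relate the expected volume of the part of a random $\gamma_2$-ball lying inside $\balli_n\cap\balli_n(\vect{r},\gamma_1)$ to the expected volume of $\balli_n\cap\balli_n(\vect{r},\gamma_1)$ itself, obtaining the ratio $\Omega_n'(\gamma_2)/\Omega_n'(\gamma_1,\gamma_2)$ of Lemmas~\ref{lem:volNV} and~\ref{lem:app_2levelOverlap1}. This is precisely where the hypothesis $\gamma_1<\sqrt{2}\gamma_2$ enters, exactly as in the $\ell_2$ two-level analysis of Wang et al.~\cite{2011_WLTB}, and it is the computation already carried out in the paragraph preceding the lemma; once that ratio is recorded, the estimate above closes the argument.
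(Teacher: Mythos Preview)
Your proposal is correct and follows exactly the approach the paper intends: the paper does not give a separate proof but only states that one uses ``similar arguments of Lemma~\ref{lem:latCentreI} (or Lemma~\ref{lem:heurCentre_1})'' after recording the covering fraction $\Omega_n(\gamma_1,\gamma_2)=\Omega_n'(\gamma_2)/\Omega_n'(\gamma_1,\gamma_2)\approx k_{C_2^{\vect{c}}}^{-n}$, and you have carried out precisely that argument with the substituted fraction and the same $\log(1-x)\le -x$ estimate.
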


Finally we can analyze the complexity of the above algorithm. 
\begin{theorem}
The space complexity of Algorithm \ref{alg:NV} using two-level sieve (Algorithm \ref{alg:latSieve2})
is $N = \poly(n) N_{C_1}N_{C_2^{\vect{c}}}$ where $N_{C_1}$ and $N_{C_2^{\vect{c}}}$ are as defined in
Lemma \ref{lem:heurCentre_1} and \ref{lem:heurCentre_2} respectively.

Also the time complexity is at most $N(N_{C_1}+N_{C_2^{\vect{c}}})$.

Setting $\gamma_2 \rightarrow 1$ and $\gamma_1 = 1.267952$ yields an optimal space complexity of at most $2^{0.415n+o(n)}$
and the corresponding time complexity of at most $2^{0.62n+o(n)}$.

 \label{thm:lat2_complexity}
\end{theorem}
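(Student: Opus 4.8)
The plan is to run the analysis of the one-level NV sieve (Theorem~\ref{thm:latRunI}) almost verbatim, the only difference being that the per-iteration ``centre budget'' and the per-vector search cost now factor through the two levels of Algorithm~\ref{alg:latSieve2}. First I would record the correctness invariant: if the input list lies in $\balli_n(R) \cap \cL$, then every vector written to the output lies in $\balli_n(\gamma_2 R) \cap \cL$. This is a one-liner — a surviving vector $\vect{v}$ is reduced against a second-level centre $\vect{c}'$ with $\|\vect{v}-\vect{c}'\|_\infty \le \gamma_2 R$, and $\vect{v}-\vect{c}' \in \cL$ — identical to the one-level case. The heuristic scaffolding (Heuristic~\ref{heur:NV}), the use of Klein's sampler to produce the initial list, and the collision bound of Lemma~\ref{lem:NVcollision} all carry over unchanged; the only genuinely new inputs are the two centre counts.

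Next I would assemble the cost per sieve iteration. By Lemma~\ref{lem:heurCentre_1}, with high probability $N_{C_1} = \poly(n)\,k_{C_1}^n$ first-level centres suffice so that every corona vector lies within $\gamma_1 R$ of some $\vect{c}\in C_1$; and by Lemma~\ref{lem:heurCentre_2}, within each big ball $\balli_n(\vect{c},\gamma_1 R)$, $N_{C_2^{\vect{c}}} = \poly(n)\,k_{C_2^{\vect{c}}}^n$ second-level centres suffice so that every vector there lies within $\gamma_2 R$ of some $\vect{c}'\in C_2^{\vect{c}}$. Hence the number of vectors consumed as centres (first- or second-level) in one iteration is at most $\poly(n)\,N_{C_1}N_{C_2^{\vect{c}}}$, and all remaining vectors are reduced to norm at most $\gamma_2 R$. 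Starting the outer loop of Algorithm~\ref{alg:NV} with $N = \poly(n)\,N_{C_1}N_{C_2^{\vect{c}}}$ vectors of norm $2^{O(n)}\mini_1(\cL)$ then suffices: only a linear number of iterations is needed, and the collision argument from the proof of Theorem~\ref{thm:latRunI} ensures we still hold a short non-zero vector at Step~\ref{NV:short} with good probability. For the running time, processing one vector costs a scan of $C_1$ (cost $\le N_{C_1}$) to find its big ball, followed by a scan of the relevant $C_2^{\vect{c}}$ (cost $\le N_{C_2^{\vect{c}}}$), hence $N_{C_1}+N_{C_2^{\vect{c}}}$ per vector and $N(N_{C_1}+N_{C_2^{\vect{c}}})$ per iteration, with $\poly(n)$ iterations.

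For the numerical bounds I would substitute the formulas for $k_{C_1}$ and $k_{C_2^{\vect{c}}}$ from Lemmas~\ref{lem:heurCentre_1} and~\ref{lem:heurCentre_2} and observe that the product telescopes,
\[
k_{C_1}\,k_{C_2^{\vect{c}}} \;=\; \left[\tfrac{3}{8}(1+\gamma_2)+\tfrac{1-\gamma_2}{4}\ln\gamma_2\right]^{-1},
\]
which is exactly the base $k_C$ of the one-level sieve and depends only on $\gamma_2$. So the space $N = \poly(n)\,k_C^n$ is minimised as $\gamma_2 \to 1$, where $k_C \to 4/3$ and the space is $2^{0.415n+o(n)}$ — no improvement over the one-level sieve, as expected. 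The two-level structure instead pays off in the time $N(N_{C_1}+N_{C_2^{\vect{c}}}) \approx k_C^n\cdot\max(k_{C_1},k_{C_2^{\vect{c}}})^n$: with the product $k_{C_1}k_{C_2^{\vect{c}}}=k_C$ fixed, the maximum is smallest when $k_{C_1}=k_{C_2^{\vect{c}}}=\sqrt{k_C}$, giving time $k_C^{3n/2+o(n)}$. Taking $\gamma_2\to 1$ yields $(4/3)^{3n/2+o(n)} = 2^{0.62n+o(n)}$, and solving $k_{C_1}=\sqrt{4/3}$ in the limit $\gamma_2\to 1$ — where $k_{C_1}\to[\tfrac14+\tfrac{\gamma_1}{2}-\tfrac{(\gamma_1-1)^2}{4}]^{-1}$, cf.\ Lemma~\ref{lem:app_2levelOverlap1} — pins down $\gamma_1 = 1.267952$, which also satisfies $1<\gamma_1<\sqrt{2}\,\gamma_2$ in the limit.

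The main obstacle I expect is not the algebra but keeping the heuristic chain honest. Lemma~\ref{lem:heurCentre_2} requires the vectors landing in a given big ball to be (essentially) uniformly distributed on $\balli_n(\gamma_2 R,R)\cap\balli_n(\vect{c},\gamma_1 R)$, which is an additional heuristic beyond Heuristic~\ref{heur:NV} about the random bucketing performed by $C_1$; and one must check that the two nested rounds of ``with high probability'' events — over $C_1$ and then over each $C_2^{\vect{c}}$ — can be union-bounded without inflating the starting budget $N$ beyond $\poly(n)\,N_{C_1}N_{C_2^{\vect{c}}}$. These are precisely the points where the argument is heuristic rather than provable and should be flagged as such.
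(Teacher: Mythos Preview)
Your proposal is correct and follows essentially the same approach as the paper's proof, which also bounds the per-iteration centre loss by $N_{C_1}N_{C_2^{\vect{c}}}$, the per-vector search cost by $N_{C_1}+N_{C_2^{\vect{c}}}$, and then plugs in $\gamma_2\to 1$, $\gamma_1=1.267952$. Your write-up is in fact more informative than the paper's, which simply asserts the optimal parameter values: your telescoping identity $k_{C_1}k_{C_2^{\vect{c}}}=k_C$ and the balancing argument $k_{C_1}=k_{C_2^{\vect{c}}}=\sqrt{k_C}$ explain \emph{why} the space matches the one-level sieve and \emph{how} $\gamma_1$ is determined, and your closing caveat about the extra uniformity assumption needed for Lemma~\ref{lem:heurCentre_2} is a fair point the paper leaves implicit.
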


\begin{proof}
 The expected number of centres in any iteration of Algorithm \ref{alg:latSieve2} is 
 $N_{C_1}N_{C_2^{\vect{c}}} = \poly(n)k_{C_1}^{n}k_{C_2^{\vect{c}}}^{n}$.
 
 We can use the LLL algorithm (Lemma \ref{lem:LLL}) to obtain $2^{n-1}$ approximation of $\mini_1(\cL)$.
 Thus we can initially sample $N = \poly(n)N_{C_1}N_{C_2^{\vect{c}}}$ vectors of norm $2^{O(n)} \mini_1(\cL)$.
 Assuming the heuristic holds, in each iteration of the sieve the norm of the vectors decrease by a factor $\gamma_2$ and
 also the expected size of $S$ decreases by $N_{C_1}N_{C_2^{\vect{c}}}$.
 So after a polynomial number of sieve iterations we expect to be left with some vectors of norm 
 $O(\mini_1(\cL))$.
 
 Now in each sieve iteration each vector is compared with at most $ N_{C_1} + N_{C_2^{\vect{c}}}$ centres.
 Thus the expected running time is at most $T=\poly(n)N(N_{C_1}+N_{C_2^{\vect{c}}})$.
 
 For optimal complexity we set $\gamma_2 \rightarrow 1$ and $\gamma_1 = 1.267952$.
 
 At these values we get a space complexity of at most $\Big(\frac{4}{3}\Big)^{n+o(n)} \approx 2^{0.415n+o(n)}$
 and time complexity of at most $T \approx 1.5396^{n+o(n)} = 2^{0.62n+o(n)}$.
 
\end{proof}

%\input{./2sieve}

%\section*{Acknowledgement}

\bibliographystyle{plain}
\bibliography{ref}

%\appendix
%\input{./appendix}

\end{document}